\documentclass[pra,twocolumn,superscriptaddress]{revtex4-2}
\usepackage{amsthm}
\usepackage{amsmath}
\usepackage{latexsym}
\usepackage{amsfonts}
\usepackage{amssymb}
\usepackage{color}
\usepackage{bbm,dsfont}
\usepackage{graphicx}
\usepackage{xcolor}
\definecolor{linkblue}{RGB}{0,70,140}
\usepackage[colorlinks=true,
            linkcolor=linkblue,
            citecolor=linkblue,
            urlcolor=linkblue]{hyperref}
\usepackage{subfigure}
\usepackage{wasysym}
\usepackage{mathrsfs}

\definecolor{cobalt}{rgb}{0.0, 0.28, 0.67}

\usepackage{tikz}
\usetikzlibrary{arrows.meta}
 
\newtheorem{proposition}{Proposition}
\newtheorem{proposition?}{Proposition?}
\newtheorem{theorem}{Theorem}
\newtheorem{lemma}{Lemma}
\newtheorem{corollary}{Corollary}

\theoremstyle{definition}

\newcommand{\ip}[2]{\left\langle\,#1\,|\,#2\,\right\rangle} 
\newcommand{\ket}[1]{|#1\rangle} 
\newcommand{\kb}[2]{|#1\rangle\langle#2|} 
\newcommand{\tr}[1]{\textrm{tr}\left[#1\right]} 
\newcommand{\rank}{\mathrm{rank}\,} 

\newcommand{\id}{\mathbbm{1}} 

\usepackage{bm}
\DeclareMathOperator{\vol}{vol}

\newcommand{\F}{\mathsf{F}}
\newcommand{\G}{\mathsf{G}}
\newcommand{\Q}{\mathsf{Q}}
\renewcommand{\P}{\mathsf{P}}
\newcommand{\M}{\mathsf{M}}
\newcommand{\N}{\mathsf{N}}

\newcommand{\MICs}{\mathcal P}
\newcommand{\ext}{\mathcal E}
\newcommand{\Eobs}{\mathcal E_{\geq}(r)}
\newcommand{\Ep}{\mathcal E_{>}(r)}

\newcommand{\Mone}{\mathscr M_1(r)}
\newcommand{\MIC}{\mathscr M_{\rm MIC}(r)}

\newcommand{\itlabel}[1]{\textnormal{(}#1\textnormal{)}}

\usepackage{soul}

\begin{document}
\title[]{Characterising extremal decoherence by quantum measurement incompatibility}

\author{Daniel McNulty}
\affiliation{Dipartimento di Fisica, Università di Bari, 70126 Bari, Italy}
\affiliation{INFN, Sezione di Bari, 70126 Bari, Italy}
\author{William Townsend}
\author{Jukka Kiukas}
\affiliation{Department of Mathematics, Aberystwyth University, Aberystwyth, SY23 3BZ, United Kingdom}

\date{July 23, 2026}

\begin{abstract} 
Decoherence, when viewed in the Heisenberg picture, can turn incompatible measurements into jointly measurable ones. This provides an observable-level description of emergent classicality in terms of the incompatibility destroyed by a noise channel. The corresponding loss of incompatibility can be captured by the channel’s \emph{compatibility region}---the observables in a chosen probe class that become jointly measurable with every noisy observable. The geometry and volume of this region provide a refined operational way to compare noise channels. We apply this framework to extremal decoherence channels, each of which acts by Schur multiplication with extreme points of the convex set of unit-diagonal positive semidefinite matrices. Restricting to extremals eliminates convex mixing and reveals subtle phase-dependent decoherence effects that are not captured by damping rates or conventional coherence quantifiers. Such channels have a rigid dilation structure described by a rank-one operator frame, which reduces joint measurability to a positivity test, making the compatibility region analytically tractable and its volume computable from the frame’s Gram matrix. If the frame can be normalised to a minimal informationally complete (MIC) POVM, the compatibility region acquires a geometric interpretation in the associated probability representation of the quantum state space, familiar from QBism. Among maximal-rank extremals, those associated with symmetric informationally complete (SIC) POVMs maximise the compatibility volume and hence destroy the greatest amount of incompatibility, providing an operational characterisation of the special role of SICs in terms of joint measurability. This leads to a new reformulation of the well-known SIC existence problem as a joint measurability question for noisy mutually unbiased observables.
\end{abstract}

\maketitle

\section{Introduction}

Decoherence---the mechanism by which quantum systems acquire classical features through environment interaction \cite{breuer02,schlosshauer07,schlosshauer14}---can be described not only in terms of quantum states, but also at the level of observables. In the Heisenberg picture, one relevant notion of classicality is \emph{joint measurability} \cite{uola14}: a family of observables is jointly measurable if their outcome statistics can be recovered from the marginals of a single parent observable. Conversely, lack of joint measurability---\emph{measurement incompatibility}---is a signature of non-classicality; it is considered a resource for creating quantum correlations \cite{uola15,quintino14,kiukas17}, and a special form of quantum contextuality \cite{abramsky11,xu19}. From this perspective, the onset of joint measurability under decoherence provides an observable-level notion of emergent classicality.

This raises the possibility of characterising quantum noise through the measurement incompatibility it destroys or preserves. This viewpoint fits within the broader resource-theoretic study of quantum dynamics \cite{liu19,takagi19,liu20,li20}, in which channels are compared through their ability to generate, transform, or preserve nonclassical resources \cite{theurer19,hsieh20,saxena20,takahashi22,ku22,stratton24}; this has also recently been applied to incompatibility \cite{hsieh25} by employing incompatibility breaking channels (originally introduced in \cite{heinosaari15}) as the free channels. With this approach, however, the channel's incompatibility preservation would be quantified by a scalar robustness monotone which does not reveal which observables retain or lose incompatibility, nor how extensively the ``preserved'' incompatibility is distributed across the measurement space.

To retain this finer structure, some of the present authors developed a set-valued framework for characterising incompatibility loss under decoherence without dissipation \cite{kiukas22}. Relative to a preferred basis, such channels preserve diagonal elements and damp off-diagonal ones; in finite dimension they are described by Schur multiplication with a coherence matrix, i.e., a positive semidefinite matrix with unit diagonal \cite{buscemi05}. They are also commonly referred to as \emph{pure dephasing} or \emph{phase-damping} channels, and as \emph{genuinely incoherent operations} in the resource theory of coherence \cite{devicente16,streltsov17}. Throughout this work, we refer to this class simply as \emph{decoherence channels}.

For these channels, incompatibility loss is probed using incoherent observables, namely those diagonal in the preferred basis. Although mutually compatible and invariant under decoherence, nontrivial incoherent observables may be incompatible with other measurements. The channel's \emph{compatibility region} is defined as the set of such observables that become jointly measurable with every decohered observable. Its complement therefore identifies those observables whose incompatibility survives, at least with one noisy measurement. Under further decoherence, the compatibility region expands and hence defines a set-valued noise monotone, retaining information about both the extent and the structure of incompatibility loss. Along divisible open system dynamics, this expansion describes the progressive loss of incompatibility in the quantum-to-classical transition \cite{kiukas22,kiukas23}.

The present work extends this framework by focusing on the extremal points of the convex set of decoherence channels. These channels are particularly interesting because they isolate the genuinely quantum part of decoherence, after excluding the additional classical randomness arising from convex mixing. Apart from the unitary extremals, for which no damping occurs, extremal decoherence channels are non-random-unitary. Their action therefore cannot be represented as classical fluctuations, modelled by a random mixture of diagonal unitary evolutions, and therefore requires a genuinely quantum description of the environment \cite{gregoratti03}. Such channels have been studied through the mathematical structure of their associated correlation matrices \cite{grone90,li94,kiukas08}, as well as through physical realisations \cite{helm09,helm11,trendelkamp11,pernice12,kayser15}.

From a technical point of view, the extremal case is far more constrained than the general one, and therefore calls for a different toolbox. Our main results build on the observation that extremal decoherence channels have a rigid structure naturally described by rank-one operator frames. In particular, the rank-one projections arising from a Gram representation of the coherence matrix span the full operator space on the dilation Hilbert space. As a result, compatibility reduces from a semidefinite feasibility problem to a direct positivity test, allowing analytic and geometric descriptions of the compatibility region. For maximal-rank extremal channels, we find that the Euclidean volume of the compatibility region is determined by the Gram matrix of the associated rank-one operator frame.

One of our main tools turns out to be the theory of \emph{minimal informationally complete} POVMs (MICs), which has been studied extensively in foundations, especially in the context of Quantum Bayesianism \cite{debrota19,debrota20,debrota21}, and in applications such as tomography \cite{bent15,garcia21,acharya21,gupta25,chen15,bae19,carmeli11,carmeli12}. MICs give rise to extremal decoherence channels, and this correspondence allows us to describe the compatibility region geometrically, as a membership problem in the associated MIC state space \cite{debrota20,appleby11}. The case in which the MIC is a symmetric informationally complete measurement (SIC) is especially revealing. The compatibility criterion then reduces to positivity of the SIC state-reconstruction formula \cite{fuchs13,appleby11b,debrota20}, and joint measurability can be expressed directly in terms of the SIC form of the Born rule. In QBism, this relation is viewed as a quantum modification of the classical law of total probability \cite{fuchs11,fuchs13,appleby11b,debrota20}. The SIC case therefore provides a particularly symmetric instance of the general connection between extremal decoherence, informational completeness, compatibility, and the probability geometry of quantum states.

SIC-extremals also have distinctive quantitative properties. Within the class of maximal-rank extremals, they maximise the compatibility volume and therefore destroy the greatest amount of incompatibility. This ordering does not persist, however, when extremal channels are compared with \emph{non}-extremal channels that have the same damping rates. By restricting the probe class to the one-parameter family of a depolarised basis observable, we show that a SIC-extremal destroys less incompatibility than the corresponding phase-insensitive uniform channel with the same off-diagonal damping magnitudes. In fact, asymptotically, as the system size increases, the SIC-extremal preserves all incompatibility within this probe family, whereas uniform decoherence destroys it. This contrast shows that the phase relations encoded by the coherence matrix, which are not captured by the damping rates alone, can have a decisive effect on incompatibility loss. 

Our results also yield an operational reformulation of the SIC existence problem \cite{zauner99,fuchs17}: a SIC exists in $\mathbb C^r$ if and only if there exists a maximal-rank extremal decoherence channel on $\mathbb C^{r^2}$ that renders a specific noisy pair of mutually unbiased observables jointly measurable. In this sense, Zauner's conjecture can be recast entirely as a question about incompatibility loss under decoherence.

Finally, we illustrate the general theory in dimension $d=4$ by giving analytic descriptions of the compatibility regions, together with closed-form expressions for their volumes, for several families of extremal channels. For MIC-extremals, these regions admit a simple geometric description in terms of ellipsoids, with the SIC case distinguished as the unique spherical instance. Our examples include a two-parameter family of extremals arising from qubit Heisenberg--Weyl MICs \cite{ariano04}, a one-parameter family arising from qubit semi-SICs \cite{geng21}, and a one-parameter family not associated with MICs \cite{buscemi05}.

The paper is organised as follows. We begin in Sec.~\ref{sec:sic-preview} with a SIC example, illustrating the main compatibility criteria that will be generalised in later sections. Sec.~\ref{sec:background} introduces the general framework for characterising noise via emergent joint measurability, its application to decoherence channels, and some background on MIC state spaces. Section \ref{sec:ext-dec} discusses extremal decoherence channels and their relation to rank-one operator frames and MICs. In Sec.~\ref{sec:jm-ext} we derive general and MIC-based characterisations of joint measurability under extremal decoherence. In Sec.~\ref{sec:jm-sic} we analyse the SIC case and highlight its distinctive features. In Sec.~\ref{sec:qubit-mics} we illustrate the results for several families of extremal decoherence channels when $d=4$. Finally, we provide a brief summary and outlook in Sec.~\ref{sec:conclusion}.

\section{A preview: the SIC case} \label{sec:sic-preview}

\begin{figure*}[t]
    \centering
    \includegraphics[width=0.8\textwidth]{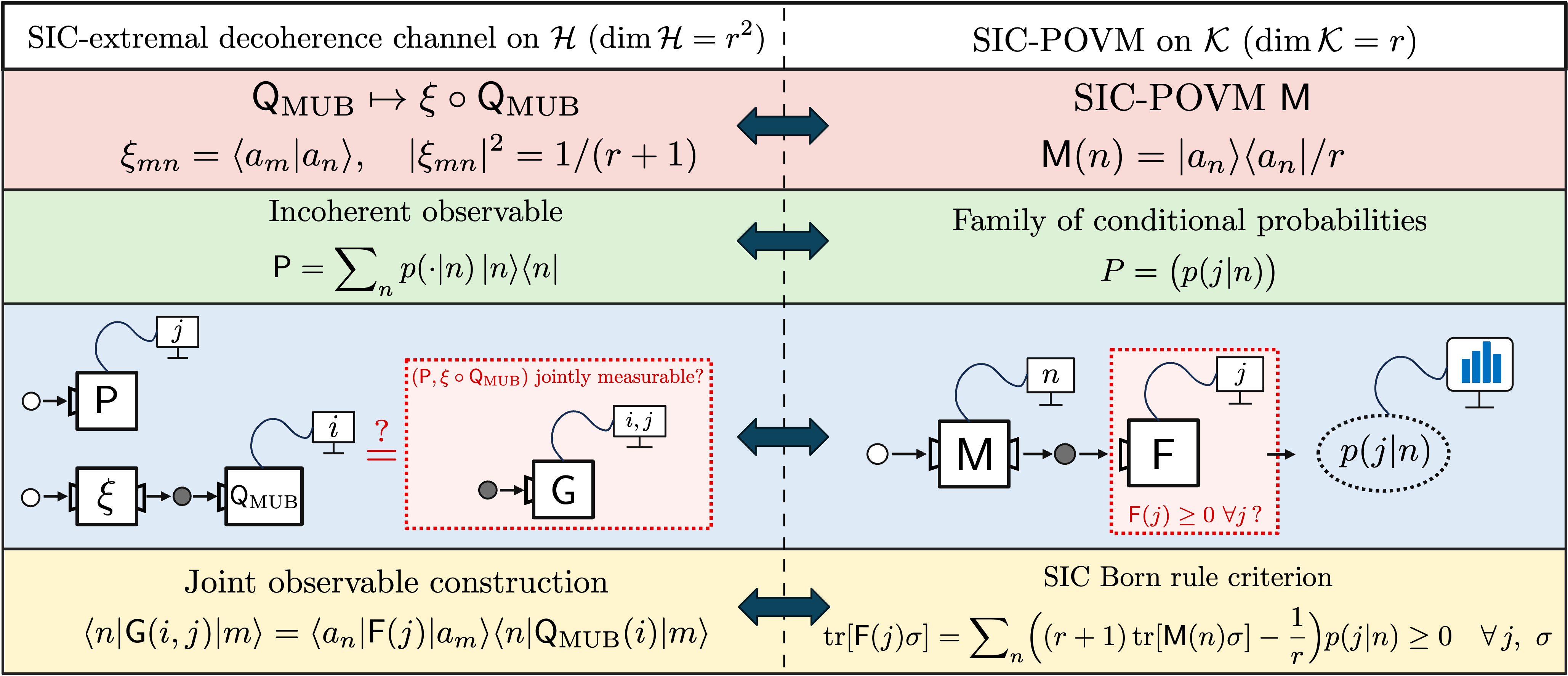}
    \caption{
Correspondence between symmetric extremal decoherence and SIC-POVMs. A SIC-POVM $\M$ determines the coherence matrix $\xi$, while an incoherent observable $\P$ is specified by the conditional probabilities $p(j|n)$. The pair $(\P,\xi\circ\Q_{\rm MUB})$ is jointly measurable iff these probabilities arise from a valid POVM $\F$ in the SIC representation, i.e. when the associated SIC Born rule expression is non-negative for every state.
}
    \label{fig:preview}
\end{figure*}

Before developing the general theory, we describe the main idea in the particularly transparent case of a SIC. Fix an incoherent basis $\{|n\rangle\}_{n=1}^{d}$ of $\mathcal H\simeq\mathbb C^d$, and let $\P=\sum_n p(\cdot|n)|n\rangle\langle n|$ be an incoherent observable. Here the conditional probabilities $p(j|n)$ define a column-stochastic matrix, with columns indexed by the incoherent basis labels $n$ and rows by the outcomes $j$. A decoherence channel acts in the Heisenberg picture as $A\mapsto \xi\circ A$, where $\xi$ is a positive semidefinite matrix with unit diagonal and $\circ$ denotes Schur multiplication. Since $\xi$ has unit diagonal, incoherent observables are left unchanged by decoherence, hence $\xi\circ\P=\P$. We are interested in the compatibility region associated with $\xi$: the set of incoherent observables that become jointly measurable with $\xi\circ\Q$, for every observable $\Q$. As shown in \cite{kiukas22}, it is enough to test this condition against a single maximally coherent observable. In particular, $\P$ belongs to the compatibility region if and only if it is jointly measurable with $\xi\circ\Q_{\rm MUB}$, where $\Q_{\rm MUB}$ is a rank-one projective observable in a basis mutually unbiased to the incoherent one \cite{durt10,mcnulty26}. Thus, for decoherence, the full compatibility region is determined by joint measurability with this single decohered mutually unbiased observable.

Suppose that $\M$ is a SIC-POVM on $\mathcal K\simeq\mathbb C^r$, with outcomes labelled by the incoherent basis labels $n$. In particular, we have $d=r^2$ and we write $\M(n) = r^{-1}|a_n\rangle\langle a_n|$ where $|a_n\rangle$ are unit vectors in $\mathcal K$. From the SIC we obtain a coherence matrix $\xi$ by setting $\xi_{nm}=\langle a_n|a_m\rangle$. This matrix has unit diagonal, satisfies $|\xi_{nm}|^2=1/(r+1)$ for all $n\neq m$, and is extremal in the convex set of coherence matrices. We denote by $\MICs$ the set of probability distributions obtained from measuring $\M$ on quantum states $\sigma$ on $\mathcal K$, i.e., vectors $\mathbf s\in\mathbb R^d$ with components $s(n) = {\rm tr}[\M(n)\sigma]$. Since $\M$ is informationally complete, the state $\sigma$ corresponding to each $\mathbf s\in\MICs$ is unique. In the literature, $\MICs$ is called the \emph{SIC state space} since it identifies quantum states with a subset of the classical probability simplex $\Delta_d$ \cite{fuchs13,appleby11b,debrota20}. Our general results, when applied to this example of high symmetry, yield the following explicit characterisation of joint measurability. 
\begin{theorem}\label{motivationthm}
Suppose $\M(n)=r^{-1}|a_n\rangle\langle a_n|$, $n=1,\ldots,r^2$, is a SIC-POVM on $\mathcal K\simeq\mathbb C^r$, and write $\xi_{nm}=\langle a_n|a_m\rangle$ for the associated coherence matrix. Then, for any incoherent observable $\P(j)=\sum_n p(j|n)|n\rangle\langle n|$ on $\mathcal H\simeq\mathbb C^{r^2}$, the following are equivalent:
\begin{enumerate}
\item[\itlabel{i}] $\P$ and $\xi\circ \Q$ are jointly measurable for every observable $\Q$ on $\mathcal H$;
\item[\itlabel{ii}] $\P$ and $\xi\circ \Q_{\rm MUB}$ are jointly measurable;
\item[\itlabel{iii}] There exist matrices $\xi(j)\geq 0$ such that $\sum_j \xi(j)=\xi$ and $\xi_{nn}(j)=p(j|n)$, for all $j,n$;
\item[\itlabel{iv}] $\sum_{n}\left[(1+r)\frac{p(j|n)}{{\rm tr}[\P(j)]} - \frac{1}{r}\right]\M(n)\geq 0$ for all $j$;
\item[\itlabel{v}] $\sum_{n}\!\left[(r+1)s(n)-\frac{1}{r}\right]\!p(j|n)\geq 0$ for all $j$ and $\mathbf s\in \mathcal P$;
\item[\itlabel{vi}] $\frac{1}{{\rm tr}[\P(j)]}\bigl(p(j|1),\dots,p(j|r^2)\bigr)^T\in \mathcal P$ for all $j$.
\end{enumerate}
\end{theorem}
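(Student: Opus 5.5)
The plan is to prove the chain (i)$\Leftrightarrow$(ii)$\Leftrightarrow$(iii)$\Leftrightarrow$(iv)$\Leftrightarrow$(v)$\Leftrightarrow$(vi). Most links are either cited from \cite{kiukas22} or are routine SIC algebra. The one genuinely new step is (iii)$\Leftrightarrow$(iv), which rests on the rigidity of extremal coherence matrices.

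The equivalence (i)$\Leftrightarrow$(ii) is quoted directly from \cite{kiukas22}: testing against the single decohered MUB observable suffices. That paper also gives (ii)$\Leftrightarrow$(iii). Joint measurability of $\P$ with $\xi\circ\Q_{\rm MUB}$ is equivalent to a decomposition $\xi=\sum_j\xi(j)$ into positive semidefinite matrices with prescribed diagonals $p(j|n)$. For the direction (iii)$\Rightarrow$(ii), the parent is $G(j,k)=\xi(j)\circ\Q_{\rm MUB}(k)$. This is positive by the Schur product theorem. Its $k$-marginal is $\xi\circ\Q_{\rm MUB}(k)$, and its $j$-marginal is $\xi(j)\circ(\id/d)\cdot d$, which is diagonal and equals $\P(j)$ after normalising (since $\Q_{\rm MUB}(k)$ has constant modulus entries $1/d$). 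The converse direction is the one I would take from \cite{kiukas22} rather than rederive.

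For (iii)$\Leftrightarrow$(iv), I use that $\xi$ is the Gram matrix of $|a_n\rangle$ with rank $r$. Any $0\le\xi(j)\le\xi$ then has range inside that of $\xi$, so it is a Gram matrix $\xi(j)_{nm}=\langle a_n|B_j|a_m\rangle$ for some $B_j\ge0$ on $\mathcal K$. The condition $\sum_j\xi(j)=\xi$ becomes $\sum_j\langle a_n|B_j|a_m\rangle=\langle a_n|a_m\rangle$. Because the $|a_m\rangle\langle a_n|$ span all operators on $\mathcal K$, which is the extremality/frame property, this forces $\sum_jB_j=\id$. The diagonal condition reads $\langle a_n|B_j|a_n\rangle=p(j|n)$, i.e.\ $r\,{\rm tr}[\M(n)B_j]=p(j|n)$. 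By informational completeness this uniquely determines $B_j$. Writing $B_j=c_j\sigma_j$ with $\sigma_j$ of unit trace, the SIC reconstruction formula $\sigma=\sum_n[(r+1)s(n)-1/r]|a_n\rangle\langle a_n|$ gives $B_j$ explicitly in terms of $p(j|\cdot)$. Here $c_j=\sum_n p(j|n)/r={\rm tr}[\P(j)]/r$, using $\sum_n\M(n)=\id$. Substituting this into the formula reproduces exactly the operator in (iv), up to the positive factor $c_j$. So (iii) holds iff each reconstructed $B_j\ge0$, which is (iv). The completeness $\sum_jB_j=\id$ is automatic from $\sum_jp(j|n)=1$ and injectivity of the frame map.

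The last two links are short. For (iv)$\Leftrightarrow$(vi): (iv) says the unique Hermitian unit-trace operator whose SIC probabilities are ${\rm tr}[\P(j)]^{-1}p(j|\cdot)$ is positive, i.e.\ is a state, and this is precisely membership in $\MICs$. For (iv)$\Leftrightarrow$(v), pair the operator in (iv) with arbitrary states $\sigma$, using ${\rm tr}[\M(n)\sigma]=s(n)$ and that $\{|a_n\rangle\}$ is a tight frame. Equivalently, an operator is positive iff its trace against every state is nonnegative. The SIC Born-rule identity ${\rm tr}[\sigma\tau]=r(r+1)\sum_ns(n)t(n)-1$ converts this into the bilinear form in (v), after multiplying through by ${\rm tr}[\P(j)]$.

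The main obstacle I expect is (iii)$\Rightarrow$(iv), namely justifying that every positive decomposition of $\xi$ arises from operators $B_j$ on $\mathcal K$. This needs the range inclusion (via Douglas' lemma) together with uniqueness of $B_j$, which comes from the rank-one projections $|a_n\rangle\langle a_n|$ spanning the operator space. Tracking the normalisation constants $r$, $1/r$ and $r+1$ in the SIC formulas is the secondary nuisance.
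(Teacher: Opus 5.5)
Your proposal is correct and follows essentially the paper's route: (i)$\Leftrightarrow$(ii)$\Leftrightarrow$(iii) is taken from \cite{kiukas22}, (iii)$\Leftrightarrow$(iv) is the uniqueness-plus-positivity argument of Theorem~\ref{thm:extr} specialised to the SIC dual frame, and (v), (vi) are the state-space reformulations of Theorem~\ref{mainthm}. The only differences are cosmetic: you rederive the dilation form $\xi_{nm}(j)=\langle a_n|B_j a_m\rangle$ by range inclusion instead of citing Theorem~\ref{thm:gii}, and the operator in (iv) is $r^{-1}$ times the unit-trace reconstructed state rather than unit-trace itself, which does not affect positivity.
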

Condition (i) is the joint measurability statement we wish to characterise, and its equivalence to (ii) and (iii) is a nontrivial result from \cite{kiukas22} which hold for arbitrary, not necessarily extremal, coherence matrices $\xi$. In particular, condition (iii) is the general semidefinite matrix completion criterion for joint measurability under decoherence (cf. Theorem~\ref{thm:gii}). Condition (iv) gives the main simplification obtained in the present paper under extremality. Instead of searching for a feasible matrix completion, one forms the operator displayed in (iv) directly from $p(j|n)$ and the SIC $\M$, reducing compatibility to a positivity check for each outcome $j$ with no optimisation required. Conditions (v) and (vi) are equivalent reformulations in terms of the SIC state space, with the latter showing that membership in the compatibility region corresponds to the normalised rows of the column-stochastic matrix $P=(p(j|n))$ lying inside the SIC state space. The rest of the paper develops this theorem in full generality. Theorem \ref{thm:extr} gives the positivity criterion for maximal-rank extremals, and Theorem~\ref{mainthm} gives its MIC state space form. These characterisations will allow us to compare extremal decoherence channels via their compatibility regions, and to prove that SIC-extremals maximise the volume among them.

Remarkably, condition (v) corresponds exactly to the QBism formulation of the Born rule, also called the \emph{quantum law of total probability}, for a SIC reference measurement \cite{fuchs13,appleby11b,debrota20}.  To see this, let $\sigma$ be the state corresponding to the SIC probability vector $\mathbf s=(s(1),\ldots,s(d))^T\in\mathcal P$, with $s(n)=\tr{\M(n)\sigma}$. If we denote the expression in condition (v) by $p(j)$ and the operator in condition (iv) by $\F(j)$, then, whenever the equivalent conditions of the theorem hold, $j\mapsto\F(j)$ is a POVM and $j\mapsto p(j)$ is a probability distribution. Taking the trace of condition (iv) against $\sigma$ gives $p(j)=\tr{\F(j)\sigma}$. Thus $p(j)$ is the probability of outcome $j$ in a direct measurement of $\F$ on $\sigma$. On the other hand, $p(j|n)=\tr{\F(j)\Pi_n}$, with $\Pi_n=|a_n\rangle\langle a_n|$, is the probability of outcome $j$ when $\F$ is measured after obtaining SIC outcome $n$, whose L\"uders post-measurement state is $\Pi_n$. As illustrated in Fig.~\ref{fig:preview}, this demonstrates that $\P$ is compatible with $\xi\circ \Q_{\rm MUB}$ if and only if the values $p(j|n)$ can be interpreted as the conditional probabilities of a genuine measurement $\F$ evaluated on the SIC states $\Pi_n$. Condition (v) is therefore exactly the SIC/QBist relation between the probabilities for a direct measurement and those for a measurement performed after the reference SIC.

In the QBism literature, this relation is compared with the classical law of total probability, which would read $p(j)=\sum_n p(j|n)s(n)$. In the SIC representation, the Born rule replaces the classical weights $s(n)$ by the expression $(r+1)s(n)-1/r$, and this correction distinguishes the quantum relation from the classical one. In our present framework, the classical law is recovered from the incoherent observable $\P$: if the SIC probability vector $\mathbf s$ is embedded into the diagonal sector as $\rho=\sum_n s(n)|n\rangle\langle n|$, then $\tr{\P(j)\rho}=\sum_n p(j|n)s(n)$. Thus the ordinary law of total probability governs the statistics of the classical observable $\P$ on diagonal states, whereas condition (v) gives the SIC Born rule statistics of the observable $\F$.

From the SIC perspective, the equivalence with condition~(ii) brings out a \emph{complementarity} aspect arising from the coherence structure relative to the diagonal sector. The incoherent observable $\P$ is a classical post-processing of the incoherent basis measurement. Therefore, whenever $\P$ is nontrivial, the pair $(\P,\Q_{\rm MUB})$ retains a complementarity relation: $\P$ carries information about the incoherent/SIC label, while $\Q_{\rm MUB}$ is maximally coherent with respect to that label. In Pauli's terminology \footnote{The term is from Pauli's well-known 1926 letter to Heisenberg.}, $\P$ probes the system through the ``p-eye'', whereas $\Q_{\rm MUB}$ probes the complementary ``q-eye''. In the SIC representation, the same complementarity is reflected in the replacement of the classical weights $s(n)$ by the Born rule weights $(r+1)s(n)-1/r$. This replacement is the probabilistic signature of the fact that the SIC reference measurement and the subsequent measurement cannot be treated as ordinary jointly measurable classical random variables. In the decoherence picture, the same obstruction appears as measurement incompatibility: although $\P$ is an ``unsharp'' \cite{busch89} version of the incoherent basis observable, it remains incompatible with $\Q_{\rm MUB}$ unless it is trivial. Decoherence then ``blurs the q-eye'', replacing $\Q_{\rm MUB}$ by $\xi\circ\Q_{\rm MUB}$, and the theorem determines when this additional imprecision is sufficient to make the observables jointly measurable. We stress that this complementarity aspect of decoherence-induced loss of incompatibility is present for arbitrary coherence matrices, as shown in \cite{kiukas22}. What is special in the extremal case is the emergence of the direct positivity conditions~(iii)--(iv), together with the link to informationally complete measurements developed below.

\section{Background}\label{sec:background}

\subsection{Observables and channels in the context of decoherence}

We first introduce the relevant notation for observables and channels used throughout. Let $\mathcal H$ be a Hilbert space of $\dim \mathcal H=d<\infty$ and fix an orthonormal basis $\{|n\rangle\}_{n=1}^d$ of $\mathcal H$, called the \emph{incoherent basis}. All notions of coherence are understood relative to this basis.

We let $\mathcal S(\mathcal H)=\{\sigma\in M_d(\mathbb C) \mid \sigma\geq 0,\, {\rm tr}[\sigma]=1\}$ denote the \emph{state space} associated to $\mathcal H$. Here $M_d(\mathbb C)$ is the set of $d\times d$ complex matrices, and we often write $A\geq 0$ to mean that $A\in M_d(\mathbb C)$ is positive semidefinite.

An \emph{observable} (POVM) $\M$ on $\mathcal H$, with a finite outcome set $\Omega_\M$, consists of positive semidefinite matrices (or effects) $\M(i)\in M_d(\mathbb C)$ such that $\sum_{i\in \Omega_\M} \M(i)=\id$. 

An observable $\P$ is called \emph{incoherent} if each of its effects is diagonal in the incoherent basis, i.e.
$
\P(j)=\sum_{n=1}^d p(j|n)\,|n\rangle\langle n|,
$
where, for each $n$, the map $j\mapsto p(j|n)$ is a probability distribution on some outcome set $\Omega_\P$. In other words, incoherent observables are the classical post-processings of the sharp measurement in the incoherent basis. We occasionally let $P=(p(j|n))_{j,n}$ denote the column-stochastic matrix corresponding to $\P$. We denote by $\mathscr{I}_{\!d}$ the set of all incoherent observables on $\mathcal H$. We note that $\mathscr I_{\!d}$ contains the set of \emph{trivial} observables $\mathscr O_{\rm triv}$ of the form $\P(j) =p(j)\id$.

Let $\Lambda$ be a Heisenberg picture quantum channel, i.e. a completely positive unital map acting on $d\times d$ matrices. (All channels in the paper are considered in the Heisenberg picture unless otherwise stated.) In the Schrodinger picture, the pre-dual channel $\Lambda_*$ acts on the state space $\mathcal S(\mathcal H)$, and is trace-preserving. We say that $\Lambda$ describes \emph{decoherence} relative to the incoherent basis if $\Lambda$ leaves every diagonal operator unchanged; equivalently, $\Lambda(|n\rangle\langle n|)=|n\rangle\langle n|$ for every basis projection. Such channels have the form
\begin{equation}\label{eq:schur-channel}
\Lambda_\xi(A)=\xi\circ A,
\end{equation}
where $\circ$ denotes the entrywise (Schur) product in the incoherent basis and $\xi$ is a positive semidefinite matrix with unit diagonal \cite{buscemi05,helm09}. We call $\xi$ the \emph{coherence matrix} of the channel, and let
\begin{equation}\label{eq:gios}
\mathfrak C_d
:=
\{\xi\in M_d(\mathbb C)\mid \xi\ge0,\ \xi_{nn}=1\ \text{for all }n\}\,,
\end{equation}
denote the set of such matrices. Their role in the context of measurement incompatibility was studied in \cite{kiukas22}.

\begin{figure*}[t!]
    \centering
\includegraphics[width=0.80\textwidth]{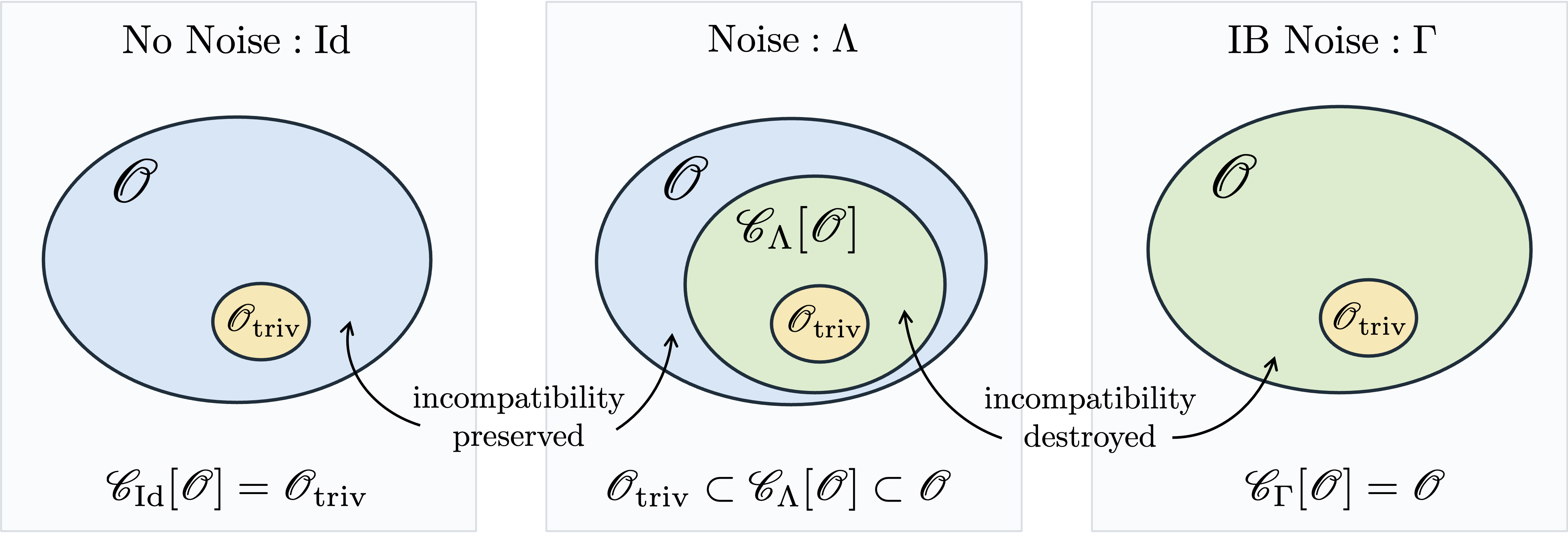}
    \caption{
Characterising noise through the compatibility region
$\mathscr C_\Lambda[\mathscr O]$ of a fixed ``probe'' class of observables $\mathscr O$. This region consists of the probes that become jointly measurable with every noisy observable under the channel $\Lambda$. With no noise, $\mathscr C_{\rm Id}[\mathscr O]=\mathscr O_{\rm triv}$; for a general noise channel, $\mathscr C_\Lambda[\mathscr O]$ lies between $\mathscr O_{\rm triv}$ and $\mathscr O$; and for incompatibility-breaking noise $\Gamma$, $\mathscr C_\Gamma[\mathscr O]=\mathscr O$. In this work, we focus on decoherence channels $\Lambda_\xi$, for which the natural probe class is the set of incoherent observables, while the incompatibility-breaking reference channel $\Gamma$ is complete dephasing.}
    \label{fig:compatibility-region-schematic}
\end{figure*}

Any $\xi\in \mathfrak C_d$ admits (due to positivity) a Gram decomposition
\begin{equation}\label{eq:coh_matrix}
\xi_{nm}=\langle a_n|a_m\rangle,
\end{equation}
for suitable unit vectors $|a_n\rangle$ in an auxiliary Hilbert space $\mathcal K$. Taking $\mathcal K=\mathrm{span}\{|a_n\rangle\,:\,n=1,\dots,d\}$, one has
\begin{equation}
r:=\dim\mathcal K=\rank \xi.
\end{equation}
This representation provides the minimal Stinespring dilation of the decoherence channel via the isometry
\begin{equation}\label{eq:isometry}
V:\mathcal H\to\mathcal H\otimes\mathcal K,
\qquad
V|n\rangle=|n\rangle\otimes |a_n\rangle,
\end{equation}
as
\[
\Lambda_\xi(A)=V^*(A\otimes\id_{\mathcal K})V=\xi\circ A\,.
\]
The vectors $|a_n\rangle$, which we call the \emph{structure vectors} of the channel, therefore describe its action in the Heisenberg picture: the $(n,m)$ matrix element of an observable is multiplied by the overlap $\langle a_n|a_m\rangle$. The minimal family of structure vectors is unique up to a common unitary transformation on $\mathcal K$.

\subsection{Characterising quantum noise via emergent joint measurability}\label{subsec:generalcompreg}

Here we take a step back to formalise the central concept of this paper: a \emph{compatibility region associated with a quantum channel}. We do this at a general level before specialising to decoherence, thereby separating the general idea from the particular application studied below.

Let $\mathscr M$ denote the set of all observables on a fixed Hilbert space $\mathcal H$. If $\Lambda$ is a quantum channel and $\M\in \mathscr M$, we let $\Lambda(\M)$ denote the ``noisy'' observable $\Lambda(\M)(i):=\Lambda(\M(i))$, corresponding to measuring $\M$ after the system state has passed through the channel. For a subset $\mathscr O\subseteq \mathscr M$ we write $\Lambda(\mathscr O):=\{\Lambda(\M)\mid \M\in \mathscr O\}$. 

We first recall that $\M,\N\in \mathscr M$ are \emph{jointly measurable} if there is a \emph{joint observable} $\G=(\G(i,j))_{(i,j)}\in \mathscr M$ with $\sum_j \G(i,j)=\M(i)$ for all $i$, and $\sum_i \G(i,j)=\N(j)$ for all $j$; otherwise $\M$ and $\N$ are \emph{incompatible} \cite{busch16,guhne23}. Equivalently, a measurement of $\G$, followed by classical postprocessing, reproduces the outcome statistics of both $\M$ and $\N$ \cite{ali09}. One often considers joint measurability of multiple observables, but the pairwise notion suffices here.

The basic observation is that compatibility of a given pair $\M,\N\in \mathscr M$ is always preserved by a channel $\Lambda$: if $\M$ and $\N$ are compatible, then so are $\Lambda(\M)$ and $\Lambda(\N)$. Incompatibility, by contrast, may be \emph{broken} by $\Lambda$, depending on how noisy the channel is \cite{heinosaari15,kiukas23}.

Let $\mathscr L$ be a semigroup of channels describing the noise type under consideration.  (Note that $\mathscr L$ need not be one-parameter or even commutative.) We assume that $\mathscr L$ contains the identity channel ${\rm Id}$ (no noise), and a fixed \emph{incompatibility-breaking} (IB) channel $\Gamma$. By this we mean that $\Gamma$ maps every finite measurement assemblage to a jointly measurable one. In particular, any $\M,\N\in\Gamma(\mathscr M)$ are compatible. Only this pairwise consequence will be needed in the definition of the compatibility region below. Now $\mathscr L$ carries a natural noise (partial) order: we write $\Lambda\geq\Lambda'$ if there exists $\Phi\in\mathscr L$ such that $\Lambda=\Phi\circ\Lambda'$. Operationally, $\Phi$ represents additional noise applied to the system state \emph{before} it is passed through $\Lambda'$. If $\Lambda'$ breaks the incompatibility of some pair of observables (with a joint observable $\G'$), then so does $\Lambda$ (with $\G = \Phi(\G')$). Thus, once incompatibility is lost, it remains lost further along the noise order. The channel $\Gamma$ represents a
reference noise sufficient to make all observables compatible; in particular, any $\Lambda\geq\Gamma$ is also IB.

While $\Gamma$ destroys \emph{all} incompatibility, a general channel $\Lambda\in\mathscr L$ may destroy only some of it, and the distinction tells us something about the incompatibility breaking capability of the noise in $\Lambda$. To make this precise we choose a ``probe set'' $\mathscr O\subseteq \Gamma(\mathscr M)$ of observables containing the trivial observables $\mathscr O_{\rm triv}$. For a given $\P\in \mathscr O$, the noisy version $\Lambda(\P)$ is then  automatically compatible with $\Lambda(\Q)$ for all $\Q\in \mathscr O$, but \emph{not necessarily} for all $\Q\in \mathscr M$, so this captures the above-mentioned distinction. Accordingly, we define the \emph{compatibility region of $\Lambda$} as
$$
\mathscr C_\Lambda[\mathscr O] \!:=\! \{\P\in \mathscr{O}\,|\,\Lambda(\P) \,\,\text{compatible with}\,\, \Lambda( \Q)\,\forall\, \Q\in \mathscr M\}.
$$
Clearly, $\mathscr O_{\rm triv}\subseteq \mathscr C_\Lambda[\mathscr O]\subseteq \mathscr O$. Thus, $\mathscr C_\Lambda[\mathscr O]$ belongs to the interval $[\mathscr O_{\rm triv},\mathscr O]:= \{\mathscr C\in 2^{\mathscr M}\mid \mathscr O_{\rm triv}\subseteq \mathscr C\subseteq \mathscr O\}$, which we order by set inclusion. We have obtained a natural noise monotone:

\begin{proposition}
The map $$\mathscr L\ni\Lambda\longmapsto \mathscr C_\Lambda[\mathscr O]\in[\mathscr O_{\rm triv},\mathscr O]$$ is increasing in the noise order, with boundary cases $\mathscr C_{\rm Id}[\mathscr O]=\mathscr O_{\rm triv}$ (no noise), and $\mathscr C_\Lambda[\mathscr O] = \mathscr O$ for all $\Lambda\geq \Gamma$ (IB noise).
\end{proposition}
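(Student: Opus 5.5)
The plan is to prove the three assertions directly from the definitions in Sec.~\ref{subsec:generalcompreg}: monotonicity along the noise order, the no-noise boundary case, and the IB boundary case. We also need the standing fact that $\mathscr O_{\rm triv}\subseteq\mathscr C_\Lambda[\mathscr O]\subseteq\mathscr O$, so that the map really lands in the interval $[\mathscr O_{\rm triv},\mathscr O]$. The upper inclusion holds by definition. For the lower one, note that a trivial observable stays trivial under any unital channel: $\Lambda(p(j)\id)=p(j)\id$. A trivial observable $\P(j)=p(j)\id$ is jointly measurable with any $\N$ via the joint observable $\G(j,i)=p(j)\N(i)$.

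\textbf{Monotonicity.} Let $\Lambda\geq\Lambda'$, so $\Lambda=\Phi\circ\Lambda'$ for some $\Phi\in\mathscr L$, and take $\P\in\mathscr C_{\Lambda'}[\mathscr O]$. For any $\Q\in\mathscr M$, the pair $\Lambda'(\P),\Lambda'(\Q)$ has a joint observable $\G'$. Set $\G:=\Phi(\G')$, applied effectwise. Because $\Phi$ is completely positive and unital, $\G$ is a POVM. By linearity, its marginals are $\Phi(\Lambda'(\P))=\Lambda(\P)$ and $\Phi(\Lambda'(\Q))=\Lambda(\Q)$. Hence $\P\in\mathscr C_\Lambda[\mathscr O]$, which gives $\mathscr C_{\Lambda'}[\mathscr O]\subseteq\mathscr C_\Lambda[\mathscr O]$.

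\textbf{No-noise case.} Here we must show $\mathscr C_{\rm Id}[\mathscr O]\subseteq\mathscr O_{\rm triv}$: an observable compatible with every observable must be trivial. This is the only step requiring a genuine argument, and it is the main obstacle. First, $\P$ must be compatible with every rank-one projective observable $\Q(i)=\kb{\psi_i}{\psi_i}$. Any joint observable $\G$ satisfies $0\le\G(j,i)\le\Q(i)$, so $\G(j,i)=c_{ji}\kb{\psi_i}{\psi_i}$ for some scalars $c_{ji}\ge0$. Therefore $\P(j)=\sum_i c_{ji}\kb{\psi_i}{\psi_i}$ is diagonal in every orthonormal basis. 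An operator diagonal in every orthonormal basis commutes with all rank-one projections and hence with all of $M_d(\mathbb C)$, so $\P(j)\propto\id$. Thus $\P$ is trivial. Combined with the lower inclusion above, this gives $\mathscr C_{\rm Id}[\mathscr O]=\mathscr O_{\rm triv}$.

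\textbf{IB case.} Let $\Lambda\geq\Gamma$, so $\Lambda=\Phi\circ\Gamma$. Since $\mathscr O\subseteq\Gamma(\mathscr M)$, it suffices to note that $\Gamma(\P)$ and $\Gamma(\Q)$ are compatible for every $\P,\Q$, since $\Gamma$ is IB. Then the monotonicity argument, applied with $\Phi$, transports this compatibility to the pair $\Lambda(\P),\Lambda(\Q)$. Hence every $\P\in\mathscr O$ lies in $\mathscr C_\Lambda[\mathscr O]$, and $\mathscr C_\Lambda[\mathscr O]=\mathscr O$.
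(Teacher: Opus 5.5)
Your proof is correct and follows the paper's own reasoning. Monotonicity comes from the pushed-forward joint observable $\G=\Phi(\G')$, and the IB case from writing $\Lambda=\Phi\circ\Gamma$ and transporting the compatibility of $\Gamma(\mathscr M)$ through $\Phi$; you additionally verify explicitly the inclusion $\mathscr O_{\rm triv}\subseteq\mathscr C_\Lambda[\mathscr O]$ and the no-noise case (an observable compatible with every rank-one projective measurement is trivial), both of which the paper takes as evident.
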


Therefore, the ``extent'' of the set $\mathscr C_\Lambda[\mathscr O]$ characterises the effect of the noise in $\Lambda$ in terms of the compatibility it creates. This idea is illustrated in Fig.~\ref{fig:compatibility-region-schematic}. Thus, we do not have a single number acting as a noise quantifier, but an object of much richer structure. However, we can easily obtain a natural numerical quantifier, namely the (Euclidean) volume:
$$\mathscr L\ni \Lambda \longmapsto \vol (\mathscr C_\Lambda[\mathscr O])\in [0,\infty).$$ This quantity is clearly increasing in the noise order, and we call it the \emph{compatibility volume} of $\Lambda$.

To obtain a dimensionless comparison, one can normalise it with respect to a reference channel $\Lambda_{\rm ref}$. In particular, we define the \emph{compatibility volume} of $\Lambda$ relative to $\Lambda_{\rm ref}$ as
\begin{equation}\label{eq:rel-vol}
\vol(\Lambda;\Lambda_{\rm ref},\mathscr O)
:=
\frac{\vol(\mathscr C_\Lambda[\mathscr O])}
{\vol(\mathscr C_{\Lambda_{\rm ref}}[\mathscr O])}.
\end{equation}
A natural reference is the IB channel $\Gamma$, in which case we denote simply 
\begin{equation}\label{eq:rel-vol_IB}
\vol(\Lambda;\mathscr O):=\vol(\Lambda;\Gamma,\mathscr O)=\vol(\mathscr C_\Lambda[\mathscr O])/\vol(\mathscr O).
\end{equation}
For comparisons within a restricted class of channels, however, a different reference may be more suitable (cf. Sec. \ref{sec:jm-sic}).

Of course, the compatibility region $\mathscr C_\Lambda[\mathscr O]$ will crucially depend on the choice of $\mathscr O$, which can be adjusted to capture the essential features of the specific noise type $\mathscr L$. In the present paper, the relevant noise class is that of decoherence.

Before proceeding to describe this case, we briefly discuss the resource-theoretic aspect. Indeed, the complement of the compatibility region provides a description of how incompatibility is preserved under noise. Namely,
\[
\mathscr P_\Lambda[\mathscr O]
:=
\mathscr O\setminus\mathscr C_\Lambda[\mathscr O]\,,
\]
contains probes $\P$ for which there exists an observable $\Q$ such that $\Lambda(\P)$ and $\Lambda(\Q)$ remain incompatible. This connects our construction with the resource theory of \emph{incompatibility preservability} given in \cite{hsieh25}. In that framework, a channel is resourceful whenever it preserves the incompatibility of at least one measurement assemblage, i.e., whenever it is not incompatibility breaking. This property is quantified by a robustness measure given by the minimum amount of mixing with an auxiliary channel required to make it incompatibility breaking.

The two approaches retain different levels of information. The robustness measures a channel's overall separation from the set of incompatibility-breaking channels, but does not reveal which incompatible observables survive or how extensively it is preserved across the measurement space. By contrast, $\mathscr P_\Lambda[\mathscr O]$ identifies the probes whose incompatibility survives, and describes their geometry within the chosen probe class. The relative volume records the proportion of that class for which incompatibility is preserved. Although probe-dependent, these quantities can be more readily computable and more sensitive to the structure of a particular noise model.

We compare our compatibility region approach briefly with robustness in the concrete context of decoherence channels in Sec.~\ref{subsec:sic_max} and Appendix~\ref{app:resource-preservability}.

\subsection{Compatibility regions for decoherence}

The compatibility region for decoherence was introduced in \cite{kiukas22,kiukas23}. We can now place it within the general framework above and review the aspects of its characterisation from \cite{kiukas22} that are relevant for the extremal case studied in this work.

Here the semigroup $\mathscr L$ consists of all decoherence channels of the form \eqref{eq:schur-channel}, with $\xi\in\mathfrak C_d$. Channel composition corresponds to Schur multiplication of the associated coherence matrices. The incompatibility-breaking reference channel is the complete dephasing channel $\Gamma[A]=\sum_n \langle n|A|n\rangle |n\rangle\langle n|$, corresponding to the coherence matrix $\xi=\mathbb I$. This channel is entanglement-breaking, and hence incompatibility-breaking \cite{heinosaari15}. Clearly, $\Gamma[\mathscr M]=\mathscr I_{\!d}$, and therefore in the decoherence setting the natural probe sets $\mathscr O$ are subsets of $\mathscr I_{\!d}$; these all commute among themselves, which directly shows the IB property. 

For each $\xi\in \mathfrak C_d$ we denote the associated compatibility regions by $\mathscr C_\xi[\mathscr O]:=\mathscr C_{\Lambda_\xi}[\mathscr O]$, and use $\mathscr C_\xi := \mathscr C_\xi[\mathscr I_{\!d}]$. It then follows that
\begin{align}\label{eq:c_xi}
\mathscr{C}_{\xi}[\mathscr O] =&  \{\P\in \mathscr{O}\,|\,\P \,\,\text{is compatible with}\,\, \xi\circ \Q,\,\forall\, \Q\in \mathscr M\}\, \nonumber\\
=& \{\P\in \mathscr{O}\,|\,\P \,\,\text{is compatible with}\,\, \xi\circ \Q_{\rm MUB}\}\,,
\end{align}
where $\Q_{\rm MUB}$ is the projective measurement in a basis mutually unbiased to the incoherent basis. The first equality is immediate from the general definition using  $\xi \circ \P=\P$ for every incoherent observable $\P$. The second equality is a nontrivial result from \cite{kiukas22}: for decoherence, it is enough to test compatibility against a single maximally coherent observable.

Finally, unitary channels do not affect joint measurability and therefore preserve incompatibility, so we would like to eliminate this freedom. It is easy to see that $\xi\in \mathfrak C_d$ defines a unitary channel iff it has rank one, that is, $\xi_{nm}=\overline{v_n} v_m$ for phases $v_n\in \mathbb C$, $|v_n|=1$. Accordingly, we call $\xi, \xi'\in \mathfrak C_d$ \emph{equivalent}, written $\xi\sim\xi'$, if there exists a rank one $\xi_0\in \mathfrak C_d$ such that $\xi = \xi_0\circ \xi'$. By unitary invariance, $\mathscr C_\xi[\mathscr O]$ depends only on the equivalence class of $\xi$.

The following theorem from \cite{kiukas22} gives the basic characterisation of $\mathscr C_\xi$; it is a general version of the equivalence (i) $\Leftrightarrow$ (iii) in Theorem \ref{motivationthm}.

\begin{theorem}\label{thm:gii}
Let $\xi\in \mathfrak C_d$ and $\P=\sum_n p(\cdot|n)|n\rangle\langle n|\in \mathscr I_{\!d}$. The following are equivalent:
\begin{enumerate}
\item[\itlabel{i}] $\P\in\mathscr C_\xi$;
\item[\itlabel{ii}] There exist positive semidefinite matrices $\xi(j)\geq 0$ such that $\sum_j \xi(j)=\xi$ and $\xi_{nn}(j)=p(j|n)$ for all $j,n$.
\end{enumerate}
In this case there exists a POVM $\F$ on the minimal dilation space $\mathcal K={\rm span}\{|a_n\rangle\}_n$ such that
\begin{equation}\label{eq:GIIdilation}
\xi_{nm}(j)=\langle a_n|\F(j)a_m\rangle,
\end{equation}
where $|a_n\rangle$ are structure vectors of $\xi$ given by \eqref{eq:coh_matrix}. Conversely, every POVM $\F$ on $\mathcal K$ defines positive semidefinite matrices through \eqref{eq:GIIdilation} that sum to $\xi$, and hence an incoherent observable $\P^{\F}\in\mathscr C_\xi$ with $p^{\F}(j|n)=\langle a_n|\F(j)|a_n\rangle$.
\end{theorem}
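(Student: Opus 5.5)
The plan is to prove (i)$\Leftrightarrow$(ii) through the reduction \eqref{eq:c_xi}. This reduction lets us replace condition (i) by joint measurability of $\P$ with the single observable $\xi\circ\Q_{\rm MUB}$. We write $\Q_{\rm MUB}(k)=|e_k\rangle\langle e_k|$ with $|e_k\rangle=d^{-1/2}\sum_n \omega^{kn}|n\rangle$.

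\emph{(ii)$\Rightarrow$(i).} Given the $\xi(j)$, we define the candidate joint observable
\[
\G(j,k)=\xi(j)\circ \Q_{\rm MUB}(k).
\]
This is positive by the Schur product theorem. Summing over $k$ gives $\xi(j)\circ\id={\rm diag}(\xi(j))=\P(j)$. Summing over $j$ gives $\xi\circ\Q_{\rm MUB}(k)$. In fact this argument works for every $\Q$, not just $\Q_{\rm MUB}$, so it yields (i) directly.

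\emph{(i)$\Rightarrow$(ii).} Let $\G(j,k)$ be a joint observable of $\P$ and $\xi\circ\Q_{\rm MUB}$. We twirl it with the diagonal phase group $U_\theta={\rm diag}(e^{i\theta_n})$. Conjugation by $U_\theta$ fixes $\P(j)$, and it permutes the MUB effects up to phase covariance. For the Fourier basis, the cyclic shift subgroup maps $|e_k\rangle$ to $|e_{k+l}\rangle$.

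Using this covariance, we first symmetrise $\G$ so that $\G(j,k)=W_k\G(j,0)W_k^*$, with $W_k$ the diagonal clock unitaries. We then set $\xi(j):=d\,\G(j,0)\circ \overline{E_0}^{-1}$, suitably interpreted. More cleanly, we read off $\xi(j)_{nm}:=d\,\langle n|\G(j,0)|m\rangle$. Since $|e_0\rangle$ is the uniform vector, $d\,(\xi\circ\Q_{\rm MUB}(0))=\xi$. Summing over $j$ therefore gives $\sum_j\xi(j)=\xi$, and each $\xi(j)\ge0$.

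The diagonal condition comes from $\sum_k\G(j,k)=\P(j)$ combined with covariance. The diagonals of $W_k\G(j,0)W_k^*$ are independent of $k$, so the diagonal of $d\,\G(j,0)$ equals the diagonal of $\P(j)$, namely $p(j|n)$. The main obstacle is justifying that covariant symmetrisation preserves both marginals. The $\P$-marginal is invariant because $\P(j)$ is diagonal. For the $\xi\circ\Q_{\rm MUB}$ marginal we need $W_l(\xi\circ\Q(k))W_l^*=\xi\circ\Q(k+l)$, which holds since Schur multiplication commutes with diagonal unitary conjugation. We therefore average $\G'(j,k)=d^{-1}\sum_l W_l^*\G(j,k+l)W_l$.

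\emph{The dilation statement.} For this we use that the $\xi(j)$ are dominated by $\xi$. Write $\xi=A^*A$, where $A:\mathbb C^d\to\mathcal K$ with $A|n\rangle=|a_n\rangle$ has range $\mathcal K$. Since $0\le\xi(j)\le\xi$, Douglas' lemma gives $\xi(j)=A^*\F(j)A$ for a unique $\F(j)\ge0$ supported on $\mathcal K={\rm ran}\,A$. Summing yields $A^*(\sum_j\F(j)-\id_{\mathcal K})A=0$, and surjectivity of $A$ onto $\mathcal K$ then forces $\sum_j\F(j)=\id_{\mathcal K}$.

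\emph{The converse.} Given any POVM $\F$ on $\mathcal K$, the matrices $A^*\F(j)A$ are positive and sum to $A^*A=\xi$. Their diagonals $\langle a_n|\F(j)|a_n\rangle$ sum over $j$ to $1$, so they define $\P^{\F}$. By (ii)$\Rightarrow$(i), this $\P^{\F}$ lies in $\mathscr C_\xi$.
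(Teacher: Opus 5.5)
Your proof is correct. It is also more self-contained than what the paper provides. The paper does not prove this theorem; it imports it from \cite{kiukas22}, where condition (ii) is read as $\P$ being measured by a genuinely incoherent instrument whose total channel is $\Lambda_\xi$.

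\textbf{Direction (ii)$\Rightarrow$(i).} The only argument the paper actually displays is the joint observable $\G(i,j)=V^*(\Q(i)\otimes\F(j))V$ of Eq.~\eqref{eq:jm}. Its $(n,m)$ entry is $\Q(i)_{nm}\langle a_n|\F(j)a_m\rangle$, so it is exactly your $\xi(j)\circ\Q(i)$. This direction therefore coincides with the paper.

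\textbf{What you add.} You give an elementary proof of (i)$\Rightarrow$(ii), using averaging over the clock unitaries against the Fourier MUB. You also derive the dilation POVM from $0\le\xi(j)\le\xi$ via Douglas' lemma. Both replace the instrument-based reasoning the paper defers to \cite{kiukas22}.

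You should also state one by-product explicitly. Your argument never needs the nontrivial half of Eq.~\eqref{eq:c_xi}. Condition (i) trivially gives compatibility with $\xi\circ\Q_{\rm MUB}$, and your (ii)$\Rightarrow$(i) works for every $\Q$. Hence you \emph{prove} the single-MUB reduction rather than assume it, so your opening sentence about proving the theorem ``through'' that reduction undersells what you do.

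\textbf{Minor remarks.}
\begin{itemize}
\item The covariant symmetrisation is more machinery than needed. Set directly $\xi(j):=\sum_k W_k^*\G(j,k)W_k$. This is positive, satisfies $\sum_j\xi(j)=\sum_k \xi\circ\Q_{\rm MUB}(0)=\xi$, and has $\xi_{nn}(j)=\sum_k\langle n|\G(j,k)|n\rangle=p(j|n)$, all in one line. Your worry about whether twirling preserves the marginals then disappears.
\item With $W_k$ replaced by the diagonal phase unitary of each basis vector, the same averaging proves the reduction for an arbitrary basis unbiased to the incoherent one, not only the Fourier basis.
\item The phrase ``permutes the MUB effects'' holds only for the clock subgroup, not for general diagonal phases.
\item The stray formula involving $\overline{E_0}^{-1}$ should be deleted.
\end{itemize}
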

Thus $\P\in\mathscr C_\xi$ if and only if the diagonal
entries $p(j|n)$ can be completed to positive semidefinite matrices $\xi(j)$ summing to $\xi$. As shown in \cite{kiukas22}, the latter condition can be interpreted operationally as saying that $\P$ is measured by a ``genuinely incoherent instrument'' with coherence matrix $\xi$. In general this completion, or equivalently the corresponding POVM $\F$ on the dilation space, is neither fixed by $\xi$ nor unique for a
given $\P$.

From a technical point of view, the semidefinite matrix completion problem in condition~(ii) is typically simpler than the general semidefinite programming (SDP) formulation of joint measurability. In particular, whether $\P$ is a member of $\mathscr C_\xi$ depends only on the relation between the incoherent observable $\P$ and the coherence matrix $\xi$, without explicit reference to any auxiliary observable $\Q$. 

The POVM $\F$ in Theorem~\ref{thm:gii} has a simple interpretation in the dilation picture. It is a measurement on the dilation space $\mathcal K$ appearing in the isometry \eqref{eq:isometry}. In particular, using $\xi_{nm}(j)=\langle a_n|\F(j)a_m\rangle$ and setting $m=n$, the diagonal condition in Theorem~\ref{thm:gii} gives
\[
p(j|n)=\xi_{nn}(j)=\langle a_n|\F(j)|a_n\rangle .
\]
Thus, if the input is the incoherent basis state $|n\rangle$, so that the auxiliary system is in the state $|a_n\rangle$, measuring $\F$ produces the same conditional probabilities as the incoherent observable $\P(j)=\sum_n p(j|n)|n\rangle\langle n|$. Whenever such a POVM $\F$ exists, a joint measurement of $\P$ and $\xi\circ\Q$ is obtained by applying the isometry $V$ and then measuring $\Q$ on $\mathcal H$ and $\F$ on $\mathcal K$. In particular, we have the joint measurement
\begin{equation}\label{eq:jm}
\G(i,j)=V^*(\Q(i)\otimes\F(j))V\,,
\end{equation}
with marginals
\[
\sum_i\G(i,j)=\P(j),
\qquad
\sum_j\G(i,j)=\xi\circ\Q(i)\,.
\]

\subsection{MIC state spaces}\label{sec:mic-state-spaces}

On the dilation space $\mathcal K\simeq\mathbb C^r$, we will make use of \emph{informationally complete} (IC) POVMs. A POVM $\M$ on $\mathcal K$ is informationally complete if its effects span the full operator space, i.e. $\mathrm{span}\{\M(j)\mid j\in\Omega_{\M}\}=\mathcal B(\mathcal K)$. In other words, the outcome probabilities of $\M$ determine the measured state uniquely \cite{busch91}. Since $\dim\mathcal B(\mathcal K)=r^2$, any IC-POVM must have at least $r^2$ outcomes. We will be particularly interested in \emph{minimal informationally complete} POVMs (MICs), namely IC-POVMs that attain this lower bound and therefore have exactly $r^2$ outcomes \cite{debrota20,debrota21}. A special subclass is formed by the \emph{symmetric informationally complete} POVMs (SICs), which are rank-one MICs of the form $\M(n)=|a_n\rangle\langle a_n|/r$, where the unit vectors $|a_n\rangle$ satisfy $|\langle a_n|a_m\rangle|^2=1/(r+1)$, for all $n\neq m$. SICs have been extensively studied \cite{renes04,fuchs17,appleby18,appleby22} and are conjectured to exist in every dimension \cite{zauner99}.

We now briefly review the ingredients concerning MIC-POVMs that will be used below. Let $\Delta_d$ denote the set of probability vectors $\mathbf p=(p_1,\ldots,p_d)$, and let $\M$ be a MIC on $\mathcal K$, with $d=r^2$ outcomes. Measuring $\M$ on a state $\sigma\in\mathcal S(\mathcal K)$ produces the probability vector
\begin{equation}\label{eq:MIC-prob-map}
\bigl[p_\M(\sigma)\bigr]_n
:=
\tr{\M(n)\sigma},
\qquad n=1,\ldots,d.
\end{equation}
The set of all probability distributions obtainable in this way is
\[
\mathcal P(\M)
:=
p_\M\bigl(\mathcal S(\mathcal K)\bigr)
\subseteq\Delta_d,
\]
and is called the \emph{MIC state space} associated with $\M$ \cite{debrota20,appleby11}. It represents the quantum state space as a convex subset of the classical probability simplex, relative to the chosen reference measurement $\M$.

Because $\M$ is informationally complete, the map $p_\M$ is injective and therefore establishes a one-to-one correspondence between $\mathcal S(\mathcal K)$ and $\mathcal P(\M)$. State reconstruction is described by the unique Hilbert--Schmidt dual basis $\{\M^\#(n)\}_{n=1}^d$, defined by $\tr{\M^\#(n)\M(m)}=\delta_{nm}$. Every operator $A\in\mathcal B(\mathcal K)$ can then be reconstructed as $A=\sum_{n=1}^d\tr{\M(n)A}\,\M^\#(n)$. Accordingly, to any probability vector $\mathbf p\in\Delta_d$ we associate the reconstructed operator
\begin{equation}\label{eq:rhoM}
\rho_\M(\mathbf p)
:=
\sum_{n=1}^d p_n\M^\#(n).
\end{equation}
This is the unique operator mapped to $\mathbf p$ by the linear extension of the MIC probability map, since $\tr{\M(m)\rho_\M(\mathbf p)}=p_m$ for every $m$. In particular, $\rho_\M\bigl(p_\M(\sigma)\bigr)=\sigma$ for every state $\sigma$, so $\rho_\M$ is the inverse of $p_\M$ on the MIC state space.

For every $\mathbf p\in\Delta_d$, the reconstructed operator $\rho_\M(\mathbf p)$ is Hermitian and has unit trace. Indeed, $\tr{\M^\#(n)}=\sum_m\tr{\M^\#(n)\M(m)}=1$. It need not, however, be positive semidefinite: although every quantum state determines a probability vector in $\Delta_d$, not every classical probability vector represents a quantum state. Consequently,
\[
\mathbf p\in\mathcal P(\M)
\quad\Longleftrightarrow\quad
\rho_\M(\mathbf p)\geq0.
\]
The MIC state space is therefore the positivity domain of the reconstruction map: it consists of those classical probability vectors whose MIC reconstruction yields a valid quantum state.

For a SIC, the reconstruction map takes an especially simple form. If $\M(n)=\Pi_n/r$, then the dual basis is
\begin{equation}\label{eq:dual_sic}
\M^\#(n)=(r+1)\Pi_n-\id.
\end{equation}
Using $\sum_n\Pi_n=r\id$, Eq.~\eqref{eq:rhoM} becomes
\begin{equation}\label{eq:SIC-reconstruction2}
\rho_\M(\mathbf p)
=
\sum_{n=1}^d
\left((r+1)p_n-\frac1r\right)\Pi_n.
\end{equation}
Thus, a quantum state is represented uniquely by its probabilities for the fixed SIC reference measurement, and the above expression recovers the corresponding density operator. This is the SIC state-reconstruction formula familiar from the QBist literature \cite{fuchs13,renes04,appleby11}.

\section{Extremal decoherence}\label{sec:ext-dec}

The set $\mathfrak C_d$ of coherence matrices is compact and convex. In this section we consider its extremal points, namely the coherence matrices that cannot be written as nontrivial convex combinations of other elements of $\mathfrak C_d$.

Unlike in the classical case, extremal channels need not be unitary \cite{landau93}. This distinction is important since unitary extremals preserve both coherence and incompatibility, whereas non-unitary extremals can degrade both. Apart from the unitary case, which corresponds to $\rank\xi=1$, extremal decoherence channels are non-random-unitary and therefore capture the genuinely quantum part of decoherence. Such non-unitary extremals exist for every $d\geq 4$, and there are extremal coherence matrices of every rank $1<r\leq\sqrt d$ \cite{christensen79,loewy80,li94}.

From the perspective of compatibility regions, we have the following basic observation:

\begin{proposition}\label{convexinclusion}
Let $\xi_k\in \mathfrak C_d$ and let $\xi=\sum_k p_k \xi_k$ be their convex combination, where $p_k\ge 0$ and $\sum_k p_k=1$. Then $\cap_k \mathscr C_{\xi_k} \subseteq \mathscr C_\xi$.
\end{proposition}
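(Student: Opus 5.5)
The plan is to use the matrix completion criterion of Theorem~\ref{thm:gii}, which turns membership in $\mathscr C_\xi$ into a convex, linear feasibility condition on $\xi$. This condition is manifestly stable under convex combinations. Let $\P=\sum_n p(\cdot|n)|n\rangle\langle n|$ be an element of $\cap_k\mathscr C_{\xi_k}$. Applying the implication (i)$\Rightarrow$(ii) of Theorem~\ref{thm:gii} to each $\xi_k$ separately, we obtain, for every $k$, positive semidefinite matrices $\xi_k(j)\geq 0$ satisfying
\[
\sum_j \xi_k(j)=\xi_k,\qquad \bigl(\xi_k(j)\bigr)_{nn}=p(j|n)\quad\text{for all } j,n.
\]
The diagonal data is the same for every $k$, since it is fixed by $\P$ alone.

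Next, we form the candidate completion for $\xi$ by taking the same convex combination,
\[
\xi(j):=\sum_k p_k\,\xi_k(j).
\]
We then check the three requirements of condition (ii). First, each $\xi(j)$ is positive semidefinite, because it is a nonnegative combination of positive semidefinite matrices. Second, $\sum_j\xi(j)=\sum_k p_k\xi_k=\xi$. Third, $\xi(j)_{nn}=\sum_k p_k\,p(j|n)=p(j|n)$, using $\sum_k p_k=1$. Since $\xi\in\mathfrak C_d$ as a convex combination of elements of $\mathfrak C_d$, the implication (ii)$\Rightarrow$(i) of Theorem~\ref{thm:gii} then gives $\P\in\mathscr C_\xi$.

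No step here is a genuine obstacle. The only point needing care is that the diagonal constraint is identical across all $k$, which is exactly why we intersect the regions rather than take their union.

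An alternative route avoids Theorem~\ref{thm:gii}. From each joint observable $\G_k$ of $\P$ and $\xi_k\circ\Q$, one would form $\G=\sum_k p_k\G_k$. This works because $\xi\circ\Q=\sum_k p_k\,\xi_k\circ\Q$ is linear in $\xi$, while the $\P$-marginal of every $\G_k$ is $\P$. This version holds directly for every $\Q$, but the completion argument above is the cleaner one.
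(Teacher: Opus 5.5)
Your proof is correct, but your main argument takes a different route from the paper. The ``alternative'' you sketch at the end is in fact exactly the paper's proof. The paper works directly from the definition of $\mathscr C_\xi$. For each $\Q$ it averages joint observables $\G_k$ of $\P$ and $\xi_k\circ\Q$ into $\G=\sum_k p_k\G_k$, using only that $\xi\mapsto\xi\circ\Q$ is linear and that every $\G_k$ has the same $\P$-marginal.

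Your primary argument instead goes through the completion criterion of Theorem~\ref{thm:gii}. This turns membership into a linear feasibility problem in $\xi$ that is visibly stable under averaging. Your three checks (positivity of $\xi(j)$, $\sum_j\xi(j)=\xi$, and the diagonal via $\sum_k p_k=1$), together with $\xi\in\mathfrak C_d$, are all in order. There is no circularity, since Theorem~\ref{thm:gii} is established beforehand.

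The paper's version buys economy and generality. It does not depend on the nontrivial equivalence imported from earlier work. The same averaging also shows $\bigcap_k\mathscr C_{\Lambda_k}[\mathscr O]\subseteq\mathscr C_{\Lambda}[\mathscr O]$ for any convex mixture $\Lambda=\sum_k p_k\Lambda_k$ of channels in the general framework, because the marginals of $\sum_k p_k\G_k$ are $\Lambda(\P)$ and $\Lambda(\Q)$.

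Your version buys a formulation with no quantifier over $\Q$, expressed in the same completion language the paper later uses for its analysis. Both arguments actually give the slightly stronger joint convexity $\sum_k p_k\P_k\in\mathscr C_{\sum_k p_k\xi_k}$ whenever $\P_k\in\mathscr C_{\xi_k}$. So calling the completion argument the ``cleaner'' one is a matter of taste rather than substance.
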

\begin{proof}
If $\P\in \cap_k \mathscr C_{\xi_k}$, there exist, for each $k$, a joint observable $\G_k$ of $\P$ and $\xi_k\circ \Q$. It follows immediately that $\G=\sum_k p_k \G_k$ is a joint observable for $\P$ and $\xi\circ \Q$.
\end{proof}

Since every decoherence channel is a convex combination of extremal ones, Proposition~\ref{convexinclusion} shows that the compatibility region of a general channel contains the intersection of the compatibility regions of some extremal components. The additional compatible observables arise from convex mixing, i.e. from extra classical noise. In this sense, extremal decoherence channels capture the essential quantum contribution to the interplay between decoherence and joint measurability. As we will see, this allows us to single out the SIC-case discussed in the preview section as the ``closest to classical'' quantum extremal in terms of the associated compatibility region.

From the technical point of view, extremal coherence matrices admit a rigid description in terms of rank-one operator frames.  This structure makes the joint measurability problem more tractable analytically and leads to surprising links with other fundamental quantum properties, as already discussed in Sec.~\ref{sec:sic-preview}.

\subsection{Extremals of maximal rank}

We begin with the known characterisation of extremal coherence matrices \cite{landau93}, which underlies the results that follow. Let $\xi\in\mathfrak C_d$ have the Gram representation given in Eq.~\eqref{eq:coh_matrix}, with $r=\rank\xi=\dim\mathcal K$. Then $\xi$ is extremal in $\mathfrak C_d$ if and only if
\begin{equation}\label{extremality}
{\rm span}\{|a_n\rangle\langle a_n| \mid n=1,\ldots,d\}
=
\mathcal B(\mathcal K).
\end{equation}
In particular, extremality implies $r^2\leq d$.

Equivalently, an extremal coherence matrix can be viewed as a rank-one \emph{operator frame} \cite{casazza00,ariano04}. By this we mean the family of rank-one projections
\[
\Pi_n^\xi:=|a_n\rangle\langle a_n|,
\qquad n=1,\ldots,d,
\]
whose linear span is $\mathcal B(\mathcal K)$; in fact, the extremality condition
\eqref{extremality} says that $\{\Pi_n^\xi\}_{n=1}^d$ forms such a frame. Conversely, any rank-one operator frame defines an extremal coherence matrix by choosing unit vector representatives $|a_n\rangle$ of the projections $\Pi_n^\xi$ and taking $\xi_{nm}=\langle a_n|a_m\rangle$. Changing the phases of the vectors changes $\xi$ only within its equivalence class and leaves the projections $\Pi_n^\xi$ unchanged. Moreover, different minimal Gram decompositions of the same $\xi$ are related by a unitary on $\mathcal K$, and therefore give unitarily equivalent operator frames. Hence equivalence classes of extremal coherence matrices of rank $r$ are naturally identified with rank-one operator frames on an $r$-dimensional dilation space, up to unitary conjugation.

In this work we focus on the maximal-rank case $d=r^2$, which saturates the extremality bound $r^2\leq d$. The associated operator frame then has no linear redundancy and hence forms a basis of $\mathcal B(\mathcal K)$, giving rise to connections with MICs and SICs developed below. This is also the regime in which the compatibility region is full-dimensional in the natural space of incoherent probes. Accordingly, we denote by $\ext(r)$ the set of equivalence classes of rank-$r$ extremal coherence matrices:
\[
\ext(r)
:=
\{[\xi]\mid \xi\in\mathfrak C_{r^2},\ \rank\xi=r,\ \xi\text{ is extremal in }\mathfrak C_{r^2}\}.
\]
In this case the rank-one operator frame has exactly as many elements as the dimension of
$\mathcal B(\mathcal K)$, and extremality is therefore equivalent to linear
independence. This condition is encoded by the real Gram matrix
\begin{equation}\label{eq:Gxi}
G_\xi:=\xi\circ\bar\xi,
\qquad
(G_\xi)_{nm}=\tr{\Pi_n^\xi\Pi_m^\xi}\,,
\end{equation}
whose entries are the absolute squares of the elements of the coherence matrix $\xi$. Clearly, $G_\xi$ only depends on the equivalence class $[\xi]$. For notational simplicity, we subsequently write $\xi$ instead of $[\xi]$, while keeping in mind that equivalent coherence matrices are considered equal. The significance of $G_\xi$ stems from the following elementary observation:
\begin{proposition}\label{Dinv}
Let $\xi\in \mathfrak C_{r^2}$ be of rank $r$. Then:
\begin{itemize}
\item[\itlabel{a}] $G_\xi$ is positive semidefinite.
\item[\itlabel{b}] $G_\xi$ is positive definite if and only if $\xi\in \ext(r)$.
\end{itemize}
\end{proposition}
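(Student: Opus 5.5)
The plan is to recognise $G_\xi$ as the Gram matrix of the operators $\Pi_n^\xi=|a_n\rangle\langle a_n|$ in the real Hilbert space of Hermitian operators on $\mathcal K$, equipped with the Hilbert--Schmidt inner product $\langle A,B\rangle_{\rm HS}=\tr{AB}$. First I check that the entries match: since $(\xi\circ\bar\xi)_{nm}=\langle a_n|a_m\rangle\overline{\langle a_n|a_m\rangle}=|\langle a_n|a_m\rangle|^2$, and $\tr{\Pi_n^\xi\Pi_m^\xi}=\langle a_n|a_m\rangle\langle a_m|a_n\rangle$ is the same number, we get $(G_\xi)_{nm}=\tr{\Pi_n^\xi\Pi_m^\xi}$. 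This expression is real and symmetric, as stated in \eqref{eq:Gxi}.

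For part (a), take any real vector $c\in\mathbb R^{r^2}$ (or a complex one, using Hermiticity of $G_\xi$) and compute the quadratic form
\[
c^T G_\xi c=\sum_{n,m}c_nc_m\tr{\Pi_n^\xi\Pi_m^\xi}=\tr{X^2},\qquad X:=\sum_n c_n\Pi_n^\xi .
\]
Since $X$ is Hermitian, $\tr{X^2}=\|X\|_{\rm HS}^2\geq0$. For complex $c$ one writes $X=\sum_n c_n\Pi_n^\xi$ and gets $c^*G_\xi c=\tr{X^*X}\geq 0$, which gives positive semidefiniteness. Alternatively, positivity follows at once from the Schur product theorem, because $\xi\geq0$ and $\bar\xi\geq0$.

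For part (b), the quadratic form vanishes exactly when $X=\sum_n c_n\Pi_n^\xi=0$. So $G_\xi$ is positive definite if and only if $\{\Pi_n^\xi\}_{n=1}^{r^2}$ is linearly independent, over $\mathbb C$ or equivalently over $\mathbb R$. Independence over $\mathbb C$ follows from independence over $\mathbb R$ by splitting the coefficients into real and imaginary parts and using Hermiticity of the $\Pi_n^\xi$. Because there are exactly $r^2=\dim_{\mathbb C}\mathcal B(\mathcal K)$ of these operators, linear independence is equivalent to spanning $\mathcal B(\mathcal K)$. By the characterisation \eqref{extremality} of \cite{landau93}, spanning is in turn equivalent to extremality of $\xi$ in $\mathfrak C_{r^2}$. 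Since $\rank\xi=r$ by hypothesis, extremality is the same as $[\xi]\in\ext(r)$.

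The only point that needs care, and it is minor, is this real-versus-complex bookkeeping: the span in \eqref{extremality} is complex, while $G_\xi$ is naturally a real Gram matrix. I will handle it by noting that a family of Hermitian operators spans $\mathcal B(\mathcal K)$ over $\mathbb C$ if and only if it spans the Hermitian operators over $\mathbb R$. Both spaces have dimension $r^2$, so with exactly $r^2$ elements either spanning condition is equivalent to independence.
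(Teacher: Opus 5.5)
Your proposal is correct and follows essentially the same route as the paper: identify $G_\xi$ as the Hilbert--Schmidt Gram matrix of the rank-one projectors $\Pi_n^\xi$, which gives positive semidefiniteness, and then use that $r^2$ operators in the $r^2$-dimensional space $\mathcal B(\mathcal K)$ are linearly independent exactly when they span, which by the extremality characterisation \eqref{extremality} is equivalent to $G_\xi$ being invertible. Your additional real-versus-complex bookkeeping and the alternative via the Schur product theorem are valid but not needed, since the complex quadratic form $c^*G_\xi c=\tr{X^*X}$ already settles linear independence over $\mathbb C$.
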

\begin{proof}
Choose structure vectors $|a_n\rangle$ such that $\xi_{nm}=\langle a_n|a_m\rangle$ and $\Pi_n^\xi=|a_n\rangle\langle a_n|$. Then
\[
\tr{(\Pi_n^\xi)^*\Pi_m^\xi}
=
\langle a_n|a_m\rangle\langle a_m|a_n\rangle
=
(G_\xi)_{nm},
\]
therefore $G_\xi$ is the Gram matrix of the set $\{\Pi_n^\xi\}_{n=1}^d$ with respect to the Hilbert--Schmidt inner product. In particular, it is always positive semidefinite. Furthermore, since $\dim\mathcal B(\mathcal K)=r^2=d$, the spanning condition \eqref{extremality} is equivalent to linear independence of the $d$ operators $\Pi_n^\xi$, which in turn is equivalent to invertibility of their Gram matrix $G_\xi$. Hence $\xi\in\ext(r)$ if and only if $G_\xi$ is positive definite.
\end{proof}

\newcommand{\specialP}{\P^{\xi}}

\subsection{From extremals to POVMs, MICs and SICs}

We now single out those extremals in $\ext(r)$ whose associated rank-one operator basis can be normalised to a POVM on the dilation space. Let $\xi\in\ext(r)$ and define
\[
\mathbf e^\xi:=G_\xi^{-1}\mathbf 1\in\mathbb{R}^{d},
\]
where $\mathbf 1\in\mathbb{R}^{d}$ denotes the column vector of ones. Since $\{\Pi_n^\xi\}_{n=1}^{r^2}$ is a basis of $\mathcal B(\mathcal K)$, we can uniquely expand the identity in this basis, and it is easy to see that the resulting coefficients are the elements of $\mathbf e^\xi$:
\begin{equation}\label{compl}
\id_r=\sum_{n=1}^{r^2} e^\xi_n\,\Pi_n^\xi.
\end{equation}
Indeed, taking Hilbert--Schmidt inner products with the basis elements gives $G_\xi\,\mathbf e^\xi=\mathbf 1$. Thus $\mathbf e^\xi$ is the coordinate vector of the identity in the operator basis determined by $\xi$. In particular, $\mathbf e^\xi$ is independent of the choice of structure vectors and of the representative of the equivalence class of $\xi$. Taking the trace in \eqref{compl} gives
\begin{equation}\label{norm}
\sum_{n=1}^{r^2} e^\xi_n
=
\mathbf 1^TG_\xi^{-1}\mathbf 1
=
r.
\end{equation}

The signs of these unique coefficients determine whether the operator basis can be normalised to a measurement. In general, some of the coefficients $e^\xi_n$ may be zero or negative. If all $e_n^\xi\geq0$ (denoted $\mathbf e^\xi\geq0$), the nonzero operators $e_n^\xi\Pi_n^\xi$ form a rank-one POVM. If, moreover, every coefficient is strictly positive (denoted $\mathbf e^\xi>0$), all $r^2$ basis elements remain as nonzero effects and the resulting POVM is informationally complete. Motivated by these two cases, we define
\begin{equation}\label{eq:e-povm}
\Eobs
:=
\{\xi\in\ext(r)\mid\mathbf e^\xi\geq0\},
\end{equation}
and
\begin{equation}\label{eq:e-mic}
\Ep
:=
\{\xi\in\ext(r)\mid\mathbf e^\xi>0\}.
\end{equation}
Thus $\Eobs$ consists of the POVM-normalisable extremals, whereas $\Ep$ consists of the MIC-normalisable extremals.

For any $\xi\in\Eobs$, Eq.~\eqref{compl} defines a rank-one POVM on $\mathcal K$ by
\[
\M(n):=e^\xi_n\,\Pi_n^\xi,
\]
with outcome set
$\Omega_\M:=\{n\in\{1,\ldots,r^2\}\mid e^\xi_n\neq0\}$. Note that the structure vectors are unique up to unitary equivalence, so only the corresponding equivalence class of $\M$ is well-defined by $\xi$. Furthermore, if we change $\xi$ within its equivalence class, the structure vectors can only pick up an $n$-dependent phase factor, which does not affect $\M$. Accordingly, we let $\Mone$ denote the set of (unitary) equivalence classes of rank-one POVMs in dimension $r$; the following map is then well-defined:
\[
\Phi:\Eobs\to\Mone,
\qquad
\Phi(\xi)=\M.
\]

The inclusion $\Ep\subsetneq\Eobs$ is strict: examples in $\Eobs\setminus\Ep$ for $r=2$ are given in Eq.~\eqref{eq:non-mic-xi} of Sec.~\ref{sec:nonmic-family}. The subset $\Ep$ consists of (equivalence classes of) those extremals whose associated POVM is informationally complete. Indeed, extremality means that the projectors $\{\Pi_n^\xi\}_{n=1}^{r^2}$ span $\mathcal B(\mathcal K)$, while $\mathbf e^\xi>0$ ensures that none of the associated effects of $\M$ vanish. Thus the associated POVM is a rank-one MIC. Conversely, every rank-one MIC arises in this way. Denoting by $\MIC\subset\Mone$ the set of (equivalence classes of) rank-one MICs in dimension $r$, we therefore have the following correspondence (with a proof in Appendix~\ref{secA:prop-equiv}):

\begin{proposition}\label{prop:equiv}
The restriction of $\Phi$ to $\Ep$ is a bijection onto $\MIC$. Furthermore, for every $\xi\in\Ep$, with $\M=\Phi(\xi)$, one has
\begin{equation}\label{obsGIO}
\tr{\M(n)\M(m)}
=
e^\xi_n e^\xi_m\,(G_\xi)_{nm}\,,
\end{equation}
for all $n,m\in\{1,\ldots,r^2\}$.
\end{proposition}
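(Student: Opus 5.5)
The plan has three parts: show $\Phi$ restricted to $\Ep$ lands in $\MIC$, build an inverse map from $\MIC$ back to $\Ep$, and compute the overlaps \eqref{obsGIO} directly.

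\emph{$\Phi$ maps into $\MIC$.} Let $\xi\in\Ep$. By \eqref{compl} the operators $\M(n)=e^\xi_n\Pi_n^\xi$ sum to $\id_r$. Each is positive and nonzero because $e^\xi_n>0$, so all $r^2$ outcomes are present. By extremality \eqref{extremality}, or equivalently invertibility of $G_\xi$ (Proposition~\ref{Dinv}), these operators span $\mathcal B(\mathcal K)$. Hence $\M$ is a rank-one MIC and $\Phi(\Ep)\subseteq\MIC$.

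\emph{Surjectivity.} Take any rank-one MIC $\M(n)=w_n|a_n\rangle\langle a_n|$ with unit vectors $|a_n\rangle$ and $w_n>0$; the weights are positive since every effect of a MIC is nonzero. Define $\xi_{nm}=\langle a_n|a_m\rangle$. This matrix lies in $\mathfrak C_{r^2}$. Its rank is $r$, because the $|a_n\rangle$ span $\mathcal K$: if they did not, the projections could not span $\mathcal B(\mathcal K)$. The projections $\Pi_n^\xi$ span $\mathcal B(\mathcal K)$ by informational completeness, so $\xi$ is extremal by the characterisation \eqref{extremality}, and $\xi\in\ext(r)$. Now $\sum_n w_n\Pi_n^\xi=\id_r$, and the coordinates of the identity in this basis are unique, so $\mathbf e^\xi=\mathbf w>0$. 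Therefore $\xi\in\Ep$ and $\Phi(\xi)=\M$.

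\emph{Injectivity.} Suppose $\Phi(\xi)=\Phi(\xi')$ up to unitary equivalence. Then there is a unitary $U$ with $e^{\xi'}_n\Pi^{\xi'}_n=U e^{\xi}_n\Pi^{\xi}_nU^*$ for all $n$. Taking traces gives $e^{\xi'}_n=e^\xi_n$, and hence $\Pi^{\xi'}_n=U\Pi^\xi_nU^*$. The structure vectors are therefore related by $|a'_n\rangle=e^{i\theta_n}U|a_n\rangle$. This gives $\xi'=\xi_0\circ\xi$ with $\xi_0$ rank one, so $\xi'\sim\xi$. One should also check that relabelling outcomes is handled consistently, which holds if both sides keep the same labelling.

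\emph{The formula \eqref{obsGIO}.} Compute directly: $\tr{\M(n)\M(m)}=e_n^\xi e_m^\xi\,\tr{\Pi_n^\xi\Pi_m^\xi}=e_n^\xi e_m^\xi(G_\xi)_{nm}$, using \eqref{eq:Gxi}.

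The only delicate point is the bookkeeping of equivalence classes in the injectivity step, namely the phases and the unitary. Everything else follows immediately from uniqueness of basis coefficients.
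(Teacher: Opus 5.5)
Your proposal is correct and follows essentially the same route as the paper: $\Phi(\xi)$ is informationally complete because of extremality together with $\mathbf e^\xi>0$; injectivity comes from taking traces to get $\mathbf e^\xi=\mathbf e^{\xi'}$ and then recovering the structure vectors up to phases; surjectivity uses uniqueness of the coefficients in \eqref{compl}; and the overlap formula follows directly from $\M(n)=e^\xi_n\Pi_n^\xi$. You also check explicitly that the constructed $\xi$ has rank exactly $r$, a detail the paper leaves implicit.
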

Examples in $\Ep$ and their corresponding MICs are discussed in Sections \ref{sec:hw-extremals} and \ref{sec:semi-sic-extremals} for $r=2$.

A natural subclass of symmetric extremals within $\Ep$ are those with $e^\xi_n=1/r$ for all $n$. The corresponding rank-one MICs are known as \emph{unbiased} MICs \cite{debrota21}. Within this subclass, SICs appear as the most symmetric case.

\begin{proposition}\label{prop:sic-extemal-equiv}
Let $\xi\in\ext(r)$, with $\xi_{mn}=\ip{a_m}{a_n}$ and
$\Pi_n^\xi=|a_n\rangle\langle a_n|$ for $n=1,\ldots,r^2$. Then the following are
equivalent:
\begin{enumerate}
\item[\itlabel{i}] $(G_\xi)_{nm}=\frac{1}{r+1}$ for all $n\neq m$;
\item[\itlabel{ii}] $\{r^{-1}\Pi_n^\xi\}_{n=1}^{r^2}$ is a SIC-POVM.
\end{enumerate}
\end{proposition}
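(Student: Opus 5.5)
The plan is to prove the two implications separately. For (ii)$\Rightarrow$(i), assume $\{r^{-1}\Pi_n^\xi\}$ is a SIC-POVM. By the definition recalled in Sec.~\ref{sec:mic-state-spaces}, the unit vectors then satisfy $|\langle a_n|a_m\rangle|^2=1/(r+1)$ for $n\neq m$. Since $(G_\xi)_{nm}=|\xi_{nm}|^2=|\langle a_n|a_m\rangle|^2$ by \eqref{eq:Gxi}, condition (i) follows immediately.

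For (i)$\Rightarrow$(ii), assume $(G_\xi)_{nm}=1/(r+1)$ for $n\neq m$. Since $(G_\xi)_{nn}=1$, we can write
\[
G_\xi=\tfrac{r}{r+1}\,\id_{r^2}+\tfrac{1}{r+1}\,J,
\]
where $J=\mathbf 1\mathbf 1^T$. The vector $\mathbf 1$ is an eigenvector: $G_\xi\mathbf 1=\bigl(\tfrac{r}{r+1}+\tfrac{r^2}{r+1}\bigr)\mathbf 1=r\,\mathbf 1$. Hence $\mathbf e^\xi=G_\xi^{-1}\mathbf 1=r^{-1}\mathbf 1$. Here invertibility holds by Proposition~\ref{Dinv}, since $\xi\in\ext(r)$.

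By \eqref{compl} this gives $\id_r=\sum_n r^{-1}\Pi_n^\xi$, so $\{r^{-1}\Pi_n^\xi\}$ is a POVM. Its effects are rank one, there are $r^2$ of them, and they are linearly independent by extremality \eqref{extremality}. Together with the overlap condition $|\langle a_n|a_m\rangle|^2=1/(r+1)$, this is exactly the SIC definition. Consistency of the normalisation can also be checked against \eqref{norm}: $\sum_n e_n^\xi=r$.

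The only substantive step is computing $\mathbf e^\xi$ from the structured Gram matrix. Even that is essentially immediate once we note that $\mathbf 1$ is an eigenvector of $G_\xi$, so I do not expect a genuine obstacle. The one subtlety is to use the hypothesis $\xi\in\ext(r)$ to guarantee that $G_\xi$ is invertible and that the expansion \eqref{compl} holds.

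Alternatively, for (i)$\Rightarrow$(ii) one could avoid $\mathbf e^\xi$: show directly that $S=\sum_n\Pi_n^\xi$ satisfies $\tr{S\,\Pi_m^\xi}=1+(r^2-1)/(r+1)=r=\tr{r\,\id\,\Pi_m^\xi}$ for all $m$. Since the $\Pi_m^\xi$ span $\mathcal B(\mathcal K)$, this forces $S=r\,\id$. I would present this as the cleaner route.
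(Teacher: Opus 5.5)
Your proof is correct and follows the paper's: (ii)$\Rightarrow$(i) comes directly from $(G_\xi)_{nm}=\tr{\Pi_n^\xi\Pi_m^\xi}$. For (i)$\Rightarrow$(ii) you use $G_\xi\mathbf 1=r\mathbf 1$, with invertibility from extremality, to get $\mathbf e^\xi=\mathbf 1/r$ and hence $\sum_n r^{-1}\Pi_n^\xi=\id$. Your alternative via $\tr{S\,\Pi_m^\xi}=r$ and the spanning property is the same computation in dual form, so either presentation works.
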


\begin{proof}
Since $(G_\xi)_{nm}=\tr{\Pi_n^\xi\Pi_m^\xi}$, (ii) immediately implies
(i).

Conversely, suppose that $(G_\xi)_{nm}=1/(r+1)$ for all $n\neq m$. Since $(G_\xi)_{nn}=1$ and $d=r^2$, every row of $G_\xi$ sums to $r$. Hence $G_\xi\mathbf 1=r\mathbf 1$. Since $\xi\in\ext(r)$, $G_\xi$ is invertible, and therefore $\mathbf e^\xi=(G_\xi)^{-1}\mathbf 1=\mathbf 1/r$. Hence $\sum_n r^{-1}\Pi_n^\xi=\id$. Together with $\tr{\Pi_n^\xi\Pi_m^\xi}=1/(r+1)$ for all $n\neq m$, this shows that $\{r^{-1}\Pi_n^\xi\}_{n=1}^{r^2}$ is a SIC-POVM.
\end{proof}

Thus an extremal coherence matrix $\xi\in\ext(r)$ satisfying condition (i) exists if and only if a SIC exists in dimension $r$. We therefore call such an extremal decoherence channel a SIC-extremal.

Finally, the POVM normalisation \eqref{compl} not only associates extremals with POVMs but also produces a canonical element of the compatibility region $\mathscr C_\xi$ defined in Eq.~\eqref{eq:c_xi}. For any $\xi\in\Eobs$, taking the auxiliary POVM $\F$ in Theorem~\ref{thm:gii} to be $\M=\Phi(\xi)$ gives $\xi_{nm}(j)=e^\xi_j\,\xi_{nj}\xi_{jm}$. Hence, we get a \emph{special incoherent observable} $\specialP\in\mathscr C_\xi$ with conditional probabilities
\begin{equation}\label{eq:special_incoherent}
p(j|n)=e^\xi_j\,(G_\xi)_{nj}\,,
\end{equation}
and outcome set $\Omega_{\P^\xi}:=\{j\in\{1,\dots,d\}\mid e^\xi_j\neq0\}$. Thus every $\xi\in\Eobs$ determines a (non-trivial) canonical incoherent observable in $\mathscr C_\xi$ arising directly from the extremal channel itself. We will see in Sec.~\ref{sec:jm-sic} that this observable plays a useful role in distinguishing SIC-extremals from other extremals in terms of joint measurability.

\section{Joint measurability under extremal decoherence}\label{sec:jm-ext}

\subsection{General characterisation of joint measurability}

We will derive a simple membership test for the compatibility region $\mathscr C_\xi$, defined in Eq.~\eqref{eq:c_xi}, for extremal coherence matrices $\xi\in\ext(r)$. The general criteria developed in \cite{kiukas22}, while applicable to any coherence matrix, fails to yield strong results in the extremal case. Here we use the associated rank-one operator frame of the extremal to obtain an exact characterisation of $\mathscr C_\xi$.

By the general theory (Theorem~\ref{thm:gii}), an incoherent observable $\P\in\mathscr I_d$ belongs to $\mathscr C_\xi$ if and only if there exists a POVM $\F$ on $\mathcal K$ such that
\begin{equation}\label{eq:ext-dilation-probabilities}
p(j|n)
=
\langle a_n|\F(j)|a_n\rangle
=
\tr{\Pi_n^\xi\F(j)}\,,
\end{equation}
for all $j,n$.

Now let $\xi\in\ext(r)$, so that $d=r^2$ and $\{\Pi_n^\xi\}_{n=1}^{d}$ is an operator basis of $\mathcal B(\mathcal K)$. Consequently, every Hermitian operator $\F(j)$ admits a unique expansion
\begin{equation}\label{eq:F}
\F(j)=\sum_{n=1}^{d}q_n(j)\Pi_n^\xi\,,
\end{equation}
with real coefficients $q_n(j)$. Substituting this expansion into \eqref{eq:ext-dilation-probabilities} gives
\[
p(j|n)
=
\sum_{m=1}^{d}(G_\xi)_{nm}q_m(j).
\]
Writing
\[
\mathbf p(j):=(p(j|1),\ldots,p(j|d))^T,
\,\,\,
\mathbf q(j):=(q_1(j),\ldots,q_d(j))^T,
\]
we therefore obtain
\begin{equation}\label{eq:pq-relation}
\mathbf p(j)=G_\xi\mathbf q(j).
\end{equation}
Since $G_\xi$ is invertible, the observable $\P$ uniquely determines the candidate effects through $\mathbf q(j)=G_\xi^{-1}\mathbf p(j)$. This gives the following criterion.

\begin{theorem}\label{thm:extr}
Let $\xi\in\ext(r)$ and let
$\P=\sum_n p(\cdot|n)|n\rangle\langle n|$ be an incoherent observable, with coefficient vectors $\mathbf p(j)$ as above. Choose any representative $\{\Pi_n^\xi\}_{n=1}^{r^2}$ of the associated operator basis and define
\begin{equation}
\F_\xi^\P(j)
:=
\sum_{n=1}^{r^2}
\bigl[G_\xi^{-1}\mathbf p(j)\bigr]_n\Pi_n^\xi.
\end{equation}
Then the following are equivalent:
\begin{itemize}
\item[\itlabel{i}] $\P\in\mathscr C_\xi$;
\item[\itlabel{ii}] $\F_\xi^\P(j)\geq0$ for every outcome $j$.
\end{itemize}
\end{theorem}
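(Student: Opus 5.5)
The plan is to reduce everything to Theorem~\ref{thm:gii} together with the uniqueness of expansions in the operator basis $\{\Pi_n^\xi\}$, which holds because $\xi\in\ext(r)$ (Proposition~\ref{Dinv}). By Theorem~\ref{thm:gii}, $\P\in\mathscr C_\xi$ if and only if there is a POVM $\F$ on $\mathcal K$ satisfying \eqref{eq:ext-dilation-probabilities}. It therefore suffices to show two things: such an $\F$ is forced to equal $\F_\xi^\P$, and $\F_\xi^\P$ is automatically normalised. Positivity of the effects is then the only condition that remains.

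\emph{(i)$\Rightarrow$(ii).} Let $\F$ be a POVM as in Theorem~\ref{thm:gii}. Each $\F(j)$ is Hermitian, and the $\Pi_n^\xi$ form a basis of $\mathcal B(\mathcal K)$ consisting of Hermitian operators. Hence $\F(j)$ has a unique expansion \eqref{eq:F} with real coefficients $\mathbf q(j)$. Substituting this expansion into \eqref{eq:ext-dilation-probabilities} gives \eqref{eq:pq-relation}. Since $G_\xi$ is invertible, $\mathbf q(j)=G_\xi^{-1}\mathbf p(j)$, and so $\F(j)=\F_\xi^\P(j)$. Positivity of the POVM $\F$ then yields (ii).

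\emph{(ii)$\Rightarrow$(i).} Assume $\F_\xi^\P(j)\ge0$ for all $j$. Two facts need checking.
\begin{itemize}
\item The reproducing property: by construction, $\tr{\Pi_n^\xi\F_\xi^\P(j)}=[G_\xi G_\xi^{-1}\mathbf p(j)]_n=p(j|n)$.
\item Normalisation: linearity gives $\sum_j\F_\xi^\P(j)=\sum_n[G_\xi^{-1}\mathbf 1]_n\Pi_n^\xi$, because $\sum_j\mathbf p(j)=\mathbf 1$ by column-stochasticity. By \eqref{compl} this equals $\sum_n e^\xi_n\Pi_n^\xi=\id_r$.
\end{itemize}
Note that this identity holds for every $\xi\in\ext(r)$, without requiring $\mathbf e^\xi\ge0$. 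So $\F_\xi^\P$ is a POVM on $\mathcal K$ satisfying \eqref{eq:GIIdilation}-type diagonal conditions. The converse part of Theorem~\ref{thm:gii} then gives $\P=\P^{\F}\in\mathscr C_\xi$. Explicitly, the matrices $\xi_{nm}(j)=\langle a_n|\F_\xi^\P(j)a_m\rangle$ are positive semidefinite, sum to $\xi$, and have diagonal $p(j|n)$.

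It remains to check that the statement does not depend on the chosen representative. A different choice of structure vectors changes them by a common unitary and by $n$-dependent phases. This leaves $G_\xi$ and $\mathbf p(j)$ unchanged and conjugates $\F_\xi^\P(j)$ by a unitary, so positivity is unaffected.

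The main subtlety is the normalisation step. One must observe that $\sum_j\F_\xi^\P(j)=\id$ follows automatically from \eqref{compl} and column-stochasticity. This is what reduces the feasibility problem of Theorem~\ref{thm:gii} to a pure positivity test. The rest is uniqueness of coordinates in a basis.
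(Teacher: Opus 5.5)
Your proposal is correct and follows the paper's proof essentially step by step. Uniqueness of the expansion in the operator basis $\{\Pi_n^\xi\}$ forces any dilation POVM from Theorem~\ref{thm:gii} to coincide with $\F_\xi^\P$, and normalisation follows from column-stochasticity together with Eq.~\eqref{compl}; your extra check that positivity does not depend on the chosen representative is a harmless supplement that the paper leaves implicit.
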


\begin{proof}
Suppose first that $\P\in\mathscr C_\xi$. By Theorem~\ref{thm:gii}, there exists a POVM $\F$ satisfying \eqref{eq:ext-dilation-probabilities}. Expanding each $\F(j)$ in the operator basis $\{\Pi_n^\xi\}$ and using \eqref{eq:pq-relation} gives $\mathbf q(j)=G_\xi^{-1}\mathbf p(j)$. Hence $\F(j)=\F_\xi^\P(j)$ for every $j$. Since $\F$ is a POVM, it follows that $\F_\xi^\P(j)\geq0$ for every $j$.

Conversely, suppose that $\F_\xi^\P(j)\geq0$ for every $j$. Since $\P$ is an observable, $\sum_j\mathbf p(j)=\mathbf 1$, and therefore, it follows from Eq.~\eqref{compl} that $\sum_j\F_\xi^\P(j)=\id_r$. Thus $\F_\xi^\P$ is a POVM. Moreover, its definition and \eqref{eq:pq-relation} give $\tr{\Pi_n^\xi\F_\xi^\P(j)}=p(j|n)$ for all $j,n$. The dilation characterisation in Theorem~\ref{thm:gii} therefore implies that $\P\in\mathscr C_\xi$.
\end{proof}

Thus, whereas for a non-extremal coherence matrix one must generally search over dilation-space POVMs satisfying the prescribed conditional probabilities, for $\xi\in\ext(r)$ the observable $\P$ determines a unique candidate POVM. Membership in $\mathscr C_\xi$ therefore reduces to checking positivity of its effects.

\subsection{Quantifying loss of incompatibility by volume}
\label{sec:volume-compatibility}

We now use the equivalence in Theorem~\ref{thm:extr} to evaluate the compatibility volume introduced in Eq.~\eqref{eq:rel-vol_IB} for extremals $\xi\in\mathcal E(r)$. We restrict the probe observables to those with a fixed number of outcomes. Hence, for $m\ge 2$, let $\mathscr I^{(m)}_{\!d}$ denote the set of $m$-outcome incoherent observables in dimension $d=r^2$. The quantity we consider is the compatibility volume of $\xi$ relative to the IB channel defined in \eqref{eq:rel-vol_IB}, namely $\vol(\xi;\mathscr I^{(m)}_{\!d})=\vol(\mathscr C_\xi[\mathscr I^{(m)}_{\!d}])/\vol(\mathscr I^{(m)}_{\!d})$.

We parametrise $\P\in\mathscr I^{(m)}_{\!d}$ by its probabilities $p(j|n)$, taking the entries with $j=1,\ldots,m-1$ as independent coordinates. For each $n$, the final probability is fixed by normalisation. Thus $\mathscr C_\xi[\mathscr I^{(m)}_{\!d}]$ can be treated as a subset of $\mathbb R^{(m-1)d}$. Similarly, let $\mathscr M^{(m)}_r$ be the set of $m$-outcome observables on $\mathcal K$, so that
\[
\mathscr M^{(m)}_r
\!:=\!
\left\{
(\F(1),\ldots,\F(m))
\;\middle|\;
\F(j)\geq 0,\ 
\sum_{j=1}^m \F(j)=\id
\right\}.
\]
We parametrise this set by the first $m-1$ effects, with the final effect fixed by normalisation. Thus $\mathscr M^{(m)}_r$ can also be treated as a subset of $\mathbb R^{(m-1)r^2}$. Since $d=r^2$, both sets are measured in real dimension $(m-1)d$.

To relate them, we fix any representative $\{\Pi_n^\xi\}_{n=1}^d$ of the operator basis associated with $\xi$. Any other representative is obtained by a common unitary conjugation, which acts orthogonally in the Hilbert--Schmidt geometry and therefore does not affect the volume calculation. We are now ready to determine the compatibility volume:

\begin{proposition}
\label{prop:volume-original-Cxi}
Let $\xi\in\ext(r)$ and let $m\ge2$. Then
\[
\vol(\xi;\mathscr I^{(m)}_{\!d})
=
c_{m,r}\cdot \bigl(\det G_\xi \bigr)^{(m-1)/2},
\]
where $c_{m,r}:=\vol(\mathscr M^{(m)}_r)/\vol(\mathscr I_{\!d}^{(m)})$ is independent of $\xi$.
\end{proposition}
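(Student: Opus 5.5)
The plan is to use Theorem~\ref{thm:extr} to turn the compatibility region into the linear image of a POVM set, and then apply the change-of-variables formula for linear maps. Fix a representative operator basis $\{\Pi_n^\xi\}_{n=1}^d$. Define the linear map
\[
T_\xi:\mathbb R^{(m-1)d}\to\mathbb R^{(m-1)d},\qquad (\F(1),\ldots,\F(m-1))\mapsto(\mathbf p(1),\ldots,\mathbf p(m-1)),
\]
with $p(j|n)=\tr{\Pi_n^\xi\F(j)}$. Here each Hermitian $\F(j)$ is identified with a vector in $\mathbb R^{d}$ through an orthonormal basis of Hermitian operators in the Hilbert--Schmidt inner product. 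This choice is what makes $\vol(\mathscr M^{(m)}_r)$ well defined, independent of the orthonormal basis chosen and of the unitary representative.

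\textbf{Step 1: $T_\xi$ is a bijection from $\mathscr M^{(m)}_r$ onto $\mathscr C_\xi[\mathscr I^{(m)}_{\!d}]$.} Given a POVM $\F\in\mathscr M^{(m)}_r$, the probabilities $p(j|n)$ for $j\le m-1$ determine an incoherent $m$-outcome $\P$. The last outcome is automatically consistent, since
\[
\tr{\Pi_n^\xi\F(m)}=1-\sum_{j<m}p(j|n)
\]
by $\tr{\Pi_n^\xi}=1$. By the converse direction of Theorem~\ref{thm:gii}, this $\P$ lies in $\mathscr C_\xi$. Conversely, if $\P\in\mathscr C_\xi[\mathscr I^{(m)}_{\!d}]$, Theorem~\ref{thm:extr} gives $\F_\xi^\P$ as a POVM with $T_\xi(\F_\xi^\P)=\P$. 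Injectivity holds because $\{\Pi_n^\xi\}$ is a basis, equivalently because $G_\xi$ is invertible (Proposition~\ref{Dinv}). So $T_\xi$ restricted to the POVM set is onto the region and one-to-one.

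\textbf{Step 2: compute the Jacobian.} $T_\xi$ acts blockwise as $m-1$ copies of the single map
\[
S_\xi: \mathbb R^d\to\mathbb R^d,\qquad F\mapsto(\tr{\Pi_n^\xi F})_n ,
\]
so $|\det T_\xi|=|\det S_\xi|^{m-1}$. Let $B$ be the $d\times d$ matrix whose $n$-th row is the coordinate vector of $\Pi_n^\xi$ in the chosen orthonormal Hermitian basis. Then $S_\xi=B$, and $BB^T=G_\xi$ because the Hilbert--Schmidt inner product equals the Euclidean inner product of the coordinate vectors. Hence $|\det S_\xi|=(\det G_\xi)^{1/2}$.

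\textbf{Step 3: conclude.} The change-of-variables formula gives
\[
\vol(\mathscr C_\xi[\mathscr I^{(m)}_{\!d}])=(\det G_\xi)^{(m-1)/2}\,\vol(\mathscr M^{(m)}_r).
\]
Dividing by $\vol(\mathscr I^{(m)}_{\!d})$ yields the claimed formula with $c_{m,r}=\vol(\mathscr M^{(m)}_r)/\vol(\mathscr I^{(m)}_{\!d})$, which does not depend on $\xi$.

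The main obstacle is bookkeeping rather than substance. The real coordinates on Hermitian operators must be isometric, i.e.\ orthonormal for the Hilbert--Schmidt inner product, so that the Gram identity $BB^T=G_\xi$ holds exactly. In addition, the parametrisations by the first $m-1$ effects and the first $m-1$ rows must correspond under $T_\xi$, which is exactly the consistency of the last outcome checked in Step 1.
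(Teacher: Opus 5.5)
Your proposal is correct and follows essentially the same route as the paper: Theorem~\ref{thm:extr} identifies the restricted compatibility region as the image of the POVM set under the block-diagonal linear map with blocks $B$, and the Gram identity $BB^T=G_\xi$ gives the factor $(\det G_\xi)^{(m-1)/2}$. Your explicit check that the last effect's probabilities come out correctly via $\tr{\Pi_n^\xi}=1$ is a detail the paper leaves implicit.
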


\begin{proof}
Fix any representative $\{\Pi_n^\xi\}_{n=1}^d$ of the operator basis associated with $\xi$. Relative to this choice, Theorem~\ref{thm:extr} implies that an observable $\P\in\mathscr C_\xi[\mathscr I^{(m)}_{\!d}]$
corresponds to a unique POVM $\F_\xi^\P\in\mathscr M^{(m)}_r$, with $p(j|n)=\tr{\F_\xi^\P(j)\,\Pi_n^\xi}$. We use the first $m-1$ outcomes as independent variables.

Choose an orthonormal basis $H_1,\ldots,H_d$ of Hermitian operators, so that $\tr{H_\alpha H_\beta}=\delta_{\alpha\beta}$. Let $\Pi_n^\xi=\sum_{\alpha=1}^d b_{n\alpha}H_\alpha$, where $b_{n\alpha}\in \mathbb R$. If $\F_\xi^\P(j)=\sum_{\alpha=1}^d x_\alpha(j)H_\alpha$, and $\mathbf x(j)=(x_1(j),\ldots,x_d(j))^T$, then
\[
p(j|n)=\tr{\F_\xi^\P(j)\,\Pi_n^\xi}
=
\sum_{\alpha=1}^d b_{n\alpha}x_\alpha(j),
\]
and therefore $\mathbf p(j)=B\,\mathbf x(j)$, with $B=(b_{n\alpha})_{n,\alpha=1}^d$. Note that $B$ is invertible since $\Pi_1^\xi,\ldots, \Pi_d^\xi$ is also a basis.

Let $S_m\subset(\mathbb R^d)^{m-1}$ be the coordinate representation of $\mathscr M^{(m)}_r$, i.e. the set of tuples $(x(1),\ldots,x(m-1))$ corresponding to effects $\F_\xi^\P(1),\ldots,\F_\xi^\P(m-1)$, with
$\F_\xi^\P(m)=\id-\sum_{j=1}^{m-1}\F_\xi^\P(j)$. Under the above correspondence, the coordinate representation of $\mathscr C_\xi[\mathscr I^{(m)}_{\!d}]$ is the image of $S_m$ under the block-diagonal linear map
\[
(x(1),\ldots,x(m-1))
\longmapsto
(Bx(1),\ldots,Bx(m-1)).
\]
The determinant of this block-diagonal map is $|\det B|^{m-1}$. Therefore
\[
\vol(\mathscr C_\xi[\mathscr I^{(m)}_{\!d}])
=
|\det B|^{m-1}\vol(S_m),
\]
where the volume $\vol(\mathscr M^{(m)}_r)=\vol(S_m)$ is clearly independent of the orthonormal operator basis used to define $S_m$. It remains to compute $|\det B|$. Since
\[
(BB^T)_{nm}
=
\sum_{\alpha=1}^d b_{n\alpha}b_{m\alpha}
=
\tr{\Pi_n^\xi\Pi_m^\xi}
=
(G_\xi)_{nm},
\]
we have $(\det B)^2=\det(BB^T)=\det G_\xi $, which (together with the definition in \eqref{eq:rel-vol_IB}) yields the claim.
\end{proof}

Thus, for fixed $r$ and $m$, the volume of the set of observables that lose incompatibility depends on $\xi$ only through $\det G_\xi$. Hence, within the maximal-rank extremal regime, the loss of incompatibility, as quantified by compatibility volume, is completely determined by the Gram matrix of the associated operator frame. In Sec.~\ref{sec:jm-sic} we show that this loss is maximised by SIC-extremals.

\subsection{Joint measurability under MIC-extremals}

We now further specialise to coherence matrices $\xi\in\Ep$ associated with rank-one MICs on $\mathcal K$. This is the natural generalisation of the SIC case from Sec. \ref{sec:sic-preview}.

By Theorem~\ref{thm:extr}, the probabilities defining an incoherent observable $\P$ determine a candidate POVM $\F_\xi^\P$ on the MIC Hilbert space. Compatibility of $\P$ with the decohered observable is equivalent to positivity of all these candidate effects. After normalisation, each positive $\F_\xi^\P(j)$ defines a quantum state $\rho_j$. Using the MIC state-space representation recalled in Sec.~\ref{sec:mic-state-spaces}, its MIC probability vector is
\begin{align}\label{eq:xi-normalised}
\mathbf s^\xi_\P(j)&:=
\frac{E_\xi\,\mathbf p(j)}{({\mathbf e}^\xi)^T\mathbf p(j)}
\in\Delta_d,&
s^\xi_\P(j|n)&=\frac{e^\xi_n\,p(j|n)}{\sum_m e^\xi_m\,p(j|m)},
\end{align}
where $\mathbf p(j)=(p(j|1),\ldots,p(j|d))^T$ and
$E_\xi=\operatorname{diag}(e^\xi_1,\ldots,e^\xi_d)$. Thus $\F_\xi^\P(j)\ge0$ is equivalent to $\mathbf s^\xi_\P(j)$ belonging to the MIC state space $\MICs(\M)$.

There is also an equivalent probabilistic reformulation. Evaluating the same operators on an arbitrary state with MIC probability vector $\mathbf p\in\MICs(\M)$ produces the vector $PD_\xi^{-1}\mathbf p$, where $D_\xi=E_\xi G_\xi$. Compatibility is therefore equivalent to requiring that $PD_\xi^{-1}\mathbf p$ be a genuine probability distribution for every $\mathbf p\in\MICs(\M)$.  We summarise these equivalent conditions in the following theorem.

\begin{theorem}\label{mainthm}
Let $\xi\in\Ep$ with associated MIC $\M$ on $\mathcal{K}$. Let $\P=\sum_n p(\cdot|n)|n\rangle\langle n|$ be an incoherent observable on $\mathcal H$, with outcome set $\Omega_\P$, and let $P=(p(j|n))$ be its column-stochastic matrix. Then the following are equivalent:
\begin{enumerate}
\item[\itlabel{i}] $\P\in \mathscr C_\xi$;
\item[\itlabel{ii}] $PD_\xi^{-1}\mathbf p\in \Delta_{\Omega_\P}$ for every $\mathbf p\in \MICs(\M)$;
\item[\itlabel{iii}] $\mathbf s^\xi_\P(j)\in\MICs(\M)$ for all $j\in\Omega_\P$.
\end{enumerate}
\end{theorem}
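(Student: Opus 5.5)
The plan is to reduce everything to Theorem~\ref{thm:extr}, so that condition (i) becomes $\F_\xi^\P(j)\geq0$ for all $j$, and then rewrite this positivity in MIC coordinates. We use the MIC $\M(n)=e^\xi_n\Pi_n^\xi$ with $e^\xi_n>0$. With $\mathbf q(j)=G_\xi^{-1}\mathbf p(j)$ we can write
\[
\F_\xi^\P(j)=\sum_n q_n(j)\Pi_n^\xi=\sum_n \bigl[E_\xi^{-1}\mathbf q(j)\bigr]_n\M(n).
\]
So the effects are expressed in the basis $\{\M(n)\}$. The key identification is the MIC probability vector of $\F_\xi^\P(j)$, namely $\tr{\M(n)\F_\xi^\P(j)}=e^\xi_n\tr{\Pi_n^\xi\F_\xi^\P(j)}=e^\xi_n p(j|n)$, using \eqref{eq:pq-relation}. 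In vector form this is $p_\M$ extended linearly: $p_\M(\F_\xi^\P(j))=E_\xi\mathbf p(j)$.

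\emph{Step (i)$\Leftrightarrow$(iii).} The trace of the effect is $\tr{\F_\xi^\P(j)}=\sum_n\tr{\M(n)\F_\xi^\P(j)}=(\mathbf e^\xi)^T\mathbf p(j)$. Since $e^\xi_n>0$, this is $\geq0$, and it vanishes only if $\mathbf p(j)=0$. We assume $j\in\Omega_\P$ means $\P(j)\neq0$; otherwise the outcome is trivially harmless. Hence $\rho_j:=\F_\xi^\P(j)/\tr{\F_\xi^\P(j)}$ is a unit-trace Hermitian operator with $p_\M(\rho_j)=\mathbf s^\xi_\P(j)$, which means $\rho_j=\rho_\M(\mathbf s^\xi_\P(j))$ by injectivity. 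By the positivity characterisation of $\MICs(\M)$ recalled in Sec.~\ref{sec:mic-state-spaces}, $\mathbf s^\xi_\P(j)\in\MICs(\M)$ iff $\rho_j\geq0$ iff $\F_\xi^\P(j)\geq0$. A small check is that $\mathbf s^\xi_\P(j)\in\Delta_d$, which holds because $e^\xi_n>0$ and $p(j|n)\geq 0$.

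\emph{Step (i)$\Leftrightarrow$(ii).} For a state $\sigma$ with $\mathbf p=p_\M(\sigma)$, I compute $\tr{\F_\xi^\P(j)\sigma}$. Writing $\sigma=\sum_n p_n\M^\#(n)$, this equals $\sum_n p_n\tr{\F_\xi^\P(j)\M^\#(n)}=\sum_n p_n [E_\xi^{-1}\mathbf q(j)]_n$, since the coefficients of $\F$ in the basis $\M(n)$ are recovered by pairing with the dual basis. Now $E_\xi^{-1}\mathbf q(j)=E_\xi^{-1}G_\xi^{-1}\mathbf p(j)=(D_\xi^{-1})^{T}$-type expression; the precise transpose convention must be matched to the statement. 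Using symmetry of $G_\xi$ one gets $\mathbf q(j)^TE_\xi^{-1}\mathbf p=\mathbf p(j)^TG_\xi^{-1}E_\xi^{-1}\mathbf p=[P D_\xi^{-1}\mathbf p]_j$ with $D_\xi=E_\xi G_\xi$. So $PD_\xi^{-1}\mathbf p$ is exactly the vector $(\tr{\F_\xi^\P(j)\sigma})_j$.

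Its entries always sum to $1$, because $\sum_j\F_\xi^\P(j)=\id$, as shown in the proof of Theorem~\ref{thm:extr}. Thus (ii) is equivalent to $\tr{\F_\xi^\P(j)\sigma}\geq0$ for all states $\sigma$ and all $j$, i.e.\ $\F_\xi^\P(j)\geq0$, which is (i) by Theorem~\ref{thm:extr}. The only delicate point I expect is the bookkeeping identity $[PD_\xi^{-1}\mathbf p]_j=\tr{\F_\xi^\P(j)\sigma}$, in particular getting $D_\xi$ versus $D_\xi^T$ right. I would verify this first by checking the case $\sigma=\rho_\M$ of a basis vector, i.e.\ via $\tr{\M^\#(n)\Pi_m^\xi}=\delta_{nm}/e^\xi_n$.
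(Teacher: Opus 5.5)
Your proposal is correct and follows the paper's proof essentially step by step. You reduce (i) to positivity of $\F_\xi^\P(j)$ via Theorem~\ref{thm:extr}, identify the MIC probability vector of the normalised effect with $\mathbf s^\xi_\P(j)$, and identify $PD_\xi^{-1}\mathbf p$ with $(\tr{\F_\xi^\P(j)\sigma})_j$ using the symmetry of $G_\xi$. The differences are cosmetic. You evaluate $\tr{\F_\xi^\P(j)\sigma}$ by expanding $\sigma$ in the dual basis $\M^\#(n)$, whereas the paper expands $\F_\xi^\P(j)$ in the basis $\M(n)$, which is the same pairing. You also explicitly treat null outcomes with $\mathbf p(j)=0$, where $\mathbf s^\xi_\P(j)$ is undefined; the paper leaves this implicit.
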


\begin{proof}
By Theorem~\ref{thm:extr}, the condition $\P\in\mathscr C_\xi$ is equivalent to $\F_\xi^\P(j)\ge0$ for all $j$, where, by extremality of $\xi$,
\begin{equation}\label{eq:F_MIC_explicit}
\F_\xi^\P(j)=\sum_{n,m=1}^d (e^\xi_n)^{-1}(G_\xi^{-1})_{nm}\,p(j|m)\,\M(n).
\end{equation}

We now evaluate the MIC probability vector of $\sigma_j:=\F_\xi^\P(j)/\tr{\F_\xi^\P(j)}$, namely,
\[
\bigl[p_\M(\sigma_j)\bigr]_n=\frac{\tr{\M(n)\F_\xi^\P(j)}}{\tr{\F_\xi^\P(j)}}=\frac{e^\xi_n\,p(j|n)}{\sum_m e^\xi_m\,p(j|m)}\,,
\]
where we have used $\tr{\M(n)\F_\xi^\P(j)}=e^\xi_n\,\tr{\Pi_n^\xi\F_\xi^\P(j)}=e^\xi_n\,p(j|n)$ and $\tr{\F_\xi^\P(j)}=(\mathbf e^\xi)^T\mathbf p(j)$. Thus, $p_\M(\sigma_j)=\mathbf s^\xi_\P(j)$, with $\mathbf s^\xi_\P(j)$ defined in Eq.~\eqref{eq:xi-normalised}. We conclude that
\[
\F_\xi^\P(j)\ge0
\quad\Longleftrightarrow\quad
\sigma_j\in\mathcal S(\mathcal K)
\quad\Longleftrightarrow\quad
\mathbf s^\xi_\P(j)\in\MICs(\M).
\]
This proves the equivalence of (i) and (iii).

For the probabilistic reformulation, let $\mathbf p\in\MICs(\M)$ be arbitrary, and let $\sigma$ be the corresponding state, so that $p_n=\bigl[p_\M(\sigma)\bigl]_n=\tr{\M(n)\sigma}$. Defining $p(j):=\tr{\F_\xi^\P(j)\sigma}$ and using Eq.~\eqref{eq:F_MIC_explicit}, we obtain
\[
p(j)=
\sum_{n,m=1}^d (e_n^\xi)^{-1}(G_\xi^{-1})_{nm}\,p(j|m)\,p_n.
\]
Denoting $D_\xi=E_\xi G_\xi$, the vector with components $p(j)$, $j\in\Omega_\P$ is therefore $PD_\xi^{-1}\mathbf p$. Since $\F_\xi^\P(j)\ge0$ if and only if $\tr{\F_\xi^\P(j)\sigma}\ge0$ for all states $\sigma$, we conclude that
\[
\P\in\mathscr C_\xi
\quad\Longleftrightarrow\quad
PD_\xi^{-1}\mathbf p\in\Delta_{\Omega_\P}
\ \text{for every }\mathbf p\in\MICs(\M),
\]
proving the equivalence of (i) and (ii).
\end{proof}

We stress that the conceptual discussion following Theorem \ref{motivationthm} in Section \ref{sec:sic-preview} (the SIC-case) applies also here; joint measurability is in one-to-one correspondence with the existence of a sequential post-MIC measurement model reproducing the conditional probabilities of the incoherent observable.

To describe the joint measurement realisation arising from this structure, we first note that a rank-one MIC $\M(n)=e_n^\xi\Pi_n^\xi$ naturally converts quantum states on the MIC Hilbert space $\mathcal K$ into probability distributions on a set of labels $n$. Indeed, measuring $\M$ on a state $\sigma$ produces the probabilities $p_n=\tr{\M(n)\sigma}$, which may be reinterpreted as the diagonal state $\rho=\sum_n p_n|n\rangle\langle n|$ on the Hilbert space $\mathcal H$. In this way, the MIC provides a link between quantum states on $\mathcal K$ and diagonal states on $\mathcal H$. This link also generates both marginals of our joint measurement problem. A subsequent measurement $\F^\P_\xi$ on $\mathcal K$ defines conditional probabilities $p(j|n)=\tr{\F^\P_\xi(j)\Pi_n}$, and hence an incoherent observable $\P(j)=\sum_n p(j|n)|n\rangle\langle n|$ on $\mathcal H$. At the same time, a coherent observable $\Q$ on $\mathcal H$ is distorted by the overlap matrix $\xi_{nm}=\langle a_n|a_m\rangle$ of the MIC post-measurement states, giving the decohered observable $\xi\circ\Q$. 

To make this explicit, we embed the space $\mathcal H$ into the joint system $\mathcal H\otimes \mathcal K$ via the isometry $V$ defined in Eq. \eqref{eq:isometry} which couples the MIC post-measurement state $|a_n\rangle$ with the corresponding incoherent basis vector. In this way we can measure both $\F^\P_\xi$ and $\Q$ simultaneously, and reinterpret the outcome distribution again in terms of the label states only. This leads to the POVM $\G$ defined in Eq. \eqref{eq:jm}. Ignoring outcome $i$ of the coherent measurement we obtain the observable $\P$ defined above, while the other marginal is a version of $\Q$ distorted by decoherence. In this way, the MIC ``acts'' on the label space in two different ways: \emph{incoherently} through the diagonal observable $\P$ arising from the subsequent measurement $\F^\P_\xi$, and \emph{coherently} via observable $\xi \circ \Q$ arising from the coherent measurement $\Q$ carried out under decoherence.

We conclude this section by recalling the special incoherent observable $\P^\xi$ defined in \eqref{eq:special_incoherent}, obtained by taking the auxiliary POVM $\F$ in Theorem~\ref{thm:gii} to be the MIC $\M$ associated with $\xi$. In this case,
\[
p(j|n)=e^\xi_j(G_\xi)_{nj}=e^\xi_j|\langle a_n|a_j\rangle|^2.
\]
Since $(E\mathbf p^\xi(j))_n=e^\xi_j\,\tr{\Pi_j^\xi\,\M(n)}$ and $(\mathbf e^\xi)^T\mathbf p^\xi(j)=e^\xi_j$, Eq.~\eqref{eq:xi-normalised} gives
\[
\mathbf s^\xi_\P(j)=p_\M(\Pi_j^\xi)\,,
\]
and hence $s^\xi_\P(j|n)=\tr{\Pi_j^\xi\,\M(n)}=e^\xi_n\tr{\Pi_j^\xi\,\Pi_n^\xi}$. Thus the rows of the column-stochastic matrix of $\P^\xi$ are the MIC probability vectors of the rank-one projectors defining the MIC. These are the \emph{basis distributions} studied in the QBist literature on MIC state spaces \cite{fuchs13,appleby11}.

Since each $\Pi_j^\xi$ is a pure state, the corresponding vector $\mathbf s^\xi_\P(j)=p_\M(\Pi_j^\xi)$ is an extreme point of the MIC state space $\mathcal P(\M)$, and hence lies on its boundary \cite{appleby11b,fuchs13}. It follows that $\P^\xi$ determines a canonical simplex
\begin{equation}\label{eq:Pxi-simplex}
\mathcal P^\xi(\M):=\operatorname{conv}\{\mathbf s^\xi_\P(1),\dots,\mathbf s^\xi_\P(d)\}\subseteq \mathcal P(\M),
\end{equation}
whose vertices are boundary points of the MIC state space. For the SIC case, as discussed in Sec.~\ref{sec:jm-sic}, the simplex is regular.

\section{Joint measurability under SIC-decoherence}\label{sec:jm-sic}

We now consider extremal channels $\xi\in\Ep$ whose associated MIC $\M=\Phi(\xi)$ is a SIC; this highly symmetric case turns out to be special also in terms of our joint measurability problem.

The coherence matrix and structure vectors satisfy
\begin{equation}\label{eq:sic-channel}
|\xi_{mn}|^2=|\langle a_m|a_n\rangle|^2=\tr{\Pi_m^\xi\Pi_n^\xi}=\frac{1}{r+1}\,,
\end{equation}
for all $m\neq n$. In this case, the MIC reconstruction formula becomes explicit, the compatibility criterion acquires the QBist form of the Born rule, and the resulting compatibility region has several distinctive features.

Recall that in the general MIC setting, $\P\in\mathscr{C}_\xi$ is equivalent to $PD_\xi^{-1}\mathbf p\in\Delta_{\Omega_\P}$ for every $\mathbf p\in\MICs(\M)$, where $P=(p(j|n))$ is the column-stochastic matrix defining $\P$ and $D_\xi=E_\xi G_\xi$. In the SIC case the dual frame is explicit (cf. Eq.~\eqref{eq:dual_sic}), and one obtains $D_\xi^{-1}=(r+1)\id-r^{-1} |\mathbf 1\rangle\langle \mathbf 1|$. Substituting this into the general criterion (condition (ii) in Theorem \ref{mainthm}) yields the probabilistic condition (v) of Theorem~\ref{motivationthm}.

This makes the connection with the classical law of total probability especially transparent. In the general MIC case, compatibility is governed by the modified probability vector $D_\xi^{-1}\mathbf p$, whereas in classical probability one would simply have $(P\mathbf p)_j=\sum_n p(j|n)p_n$. Thus the matrix $D_\xi^{-1}$ quantifies the departure from the classical law of total probability. In the SIC case, $D_\xi$ is brought closest to the identity \cite{debrota20,debrota21}, and this departure takes the particularly simple QBist form highlighted in Section \ref{sec:sic-preview}.

Furthermore, the criterion $\mathbf s^\xi_\P(j)\in \mathcal P(\M)$ of Theorem~\ref{mainthm} is equivalent to
\begin{equation}\label{eq:SIC-positivity}
\sum_{n=1}^{r^2}
\left((r+1)s^\xi_\P(j|n)-\frac1r\right)\Pi_n^\xi
\ge 0\,,
\end{equation}
for all $j\in\Omega_\P$, as seen in Eq.~\eqref{eq:SIC-reconstruction2}. Given an incoherent observable $\P$, the vector $\mathbf s^\xi_\P(j)$ defined in Eq.~\eqref{eq:xi-normalised} corresponds to the normalised $j$-th row of the column-stochastic matrix defining $\P$. These two observations lead to conditions (iv) and (vi) of Theorem~\ref{motivationthm}.

\subsection{SIC-extremals maximise loss of incompatibility}\label{subsec:sic_max}
Interestingly, we can single out SIC-extremal channels among \emph{all} extremals $\xi\in\ext(r)$ by comparing their compatibility volumes. We first use the IB-normalised volume $\vol(\xi;\mathscr I^{(m)}_{\!d})$ defined in Eq.~\eqref{eq:rel-vol_IB}, where $\mathscr I^{(m)}_{\!d}$ is the set of $m$-outcome incoherent observables. By Proposition~\ref{prop:volume-original-Cxi}, for fixed $r$ and $m$, the dependence of $\vol(\xi;\mathscr I^{(m)}_{\!d})$ on $\xi$ is entirely through $\det G_\xi$. A result in \cite{debrota20} provides a bound on this quantity, and shows that it is saturated only when the operator frame corresponds to a SIC.

\begin{lemma}
\label{lem:det-bound}
Let $\xi\in\ext(r)$. Then
\[
\det G_\xi
\le
r\left(\frac{r}{r+1}\right)^{r^2-1}\,,
\]
with equality if and only if $\xi$ is SIC-extremal.
\end{lemma}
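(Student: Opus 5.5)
The plan is to write the Gram matrix $G_\xi$ as $BB^T$, as in the proof of Proposition~\ref{prop:volume-original-Cxi}, with an orthonormal Hermitian basis adapted to the split of $\mathcal B(\mathcal K)$ into the identity direction and the traceless operators. Then $\det G_\xi=\det(B^TB)$, and the bound should follow from Fischer's inequality combined with AM--GM.

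Concretely, I choose $H_1=\id_r/\sqrt r$ and $H_2,\dots,H_{r^2}$ an orthonormal basis of the traceless Hermitian operators. Writing $\Pi_n^\xi=\id_r/r+T_n$ with $\tr{T_n}=0$, the first column of $B$ is $b_{n1}=\tr{\Pi_n^\xi H_1}=1/\sqrt r$ for all $n$, that is $\mathbf 1/\sqrt r$. The remaining $r^2\times(r^2-1)$ block $C$ contains the coordinates of the $T_n$. Then
\[
B^TB=\begin{pmatrix} r & \mathbf 1^TC/\sqrt r\\ C^T\mathbf 1/\sqrt r & C^TC\end{pmatrix},
\]
and Fischer's inequality for positive semidefinite block matrices gives $\det G_\xi=\det(B^TB)\le r\,\det(C^TC)$. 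Next, $\tr{C^TC}=\sum_n\tr{T_n^2}=\sum_n\bigl(1-\tfrac1r\bigr)=r^2-r$, using $\tr{(\Pi_n^\xi)^2}=1$. Applying AM--GM to the $r^2-1$ eigenvalues of $C^TC$ yields
\[
\det(C^TC)\le\Bigl(\tfrac{r^2-r}{r^2-1}\Bigr)^{r^2-1}=\Bigl(\tfrac{r}{r+1}\Bigr)^{r^2-1},
\]
which is the claimed bound.

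For the equality case, suppose first that $\xi$ is SIC-extremal. Then $G_\xi=\tfrac{r}{r+1}\id+\tfrac{1}{r+1}|\mathbf 1\rangle\langle\mathbf 1|$, whose eigenvalue $r$ on $\mathbf 1$ and eigenvalue $r/(r+1)$ with multiplicity $r^2-1$ give exactly the bound. Conversely, suppose equality holds; since $\det G_\xi>0$ by Proposition~\ref{Dinv}, both inequalities must be tight. Tightness in Fischer forces the off-diagonal block to vanish, $C^T\mathbf 1=0$, i.e.\ $\sum_n T_n=0$. Tightness in AM--GM forces $C^TC=\tfrac{r}{r+1}\id_{r^2-1}$. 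Since $C$ then has rank $r^2-1$ with range orthogonal to $\mathbf 1$, we get $CC^T=\tfrac{r}{r+1}\bigl(\id-\tfrac1{r^2}|\mathbf 1\rangle\langle\mathbf 1|\bigr)$. Hence
\[
G_\xi=\tfrac1r|\mathbf 1\rangle\langle\mathbf 1|+CC^T,
\]
whose off-diagonal entries are $\tfrac1r-\tfrac{1}{r(r+1)}=\tfrac1{r+1}$. Proposition~\ref{prop:sic-extemal-equiv} then shows that $\xi$ is SIC-extremal.

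The main point needing care is the equality analysis in Fischer's inequality. It requires the $C^TC$ block to be nonsingular, which is guaranteed here since $\det G_\xi>0$. I would also verify that the inequalities are applied to $B^TB$ rather than $BB^T$; this is harmless because $B$ is square, so the two determinants coincide.
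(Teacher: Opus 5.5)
Your proof is correct, but it takes a different route from the paper's. The paper works with the spectrum of $G_\xi$ directly. It imports from DeBrota et al.\ the lemma that the largest eigenvalue satisfies $\lambda_1\ge r$. It then applies AM--GM to the remaining $r^2-1$ eigenvalues, whose sum is $r^2-\lambda_1$, and uses that $\lambda_1\bigl((r^2-\lambda_1)/(r^2-1)\bigr)^{r^2-1}$ is decreasing for $\lambda_1\ge r$. The equality case is likewise delegated to the equality case of the cited lemma.

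You instead pass to $B^TB$, which is the frame superoperator $X\mapsto\sum_n \mathrm{tr}[\Pi_n^\xi X]\,\Pi_n^\xi$ written in a basis adapted to the split into $\id_r$ and traceless operators. In that basis the identity direction contributes exactly $(B^TB)_{11}=r$, and the traceless block has the fixed trace $r^2-r$. Fischer's inequality plus AM--GM then give the bound, with no eigenvalue lower bound and no monotonicity argument. In effect, the Fischer step plays the role that $\lambda_1\ge r$ plays in the paper.

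What your route buys:
\begin{itemize}
\item It is self-contained and reuses the matrix $B$ from the proof of Proposition~\ref{prop:volume-original-Cxi}.
\item The rigidity is derived explicitly: $C^T\mathbf 1=0$ and $C^TC=\frac{r}{r+1}\id$ let you reconstruct $G_\xi$. This reduces the equality case to the paper's own Proposition~\ref{prop:sic-extemal-equiv}.
\item By contrast, the paper's equality case hides the step that the top eigenvector must be $\mathbf 1$ inside the citation.
\end{itemize}

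What the paper's route buys: it is shorter on the page, and it records the spectral fact $\lambda_{\max}(G_\xi)\ge r$ about the Gram matrix itself. That fact ties the result to the QBist literature on MIC Gram matrices.
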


\begin{proof}
This follows from \cite[Lemma~3 and Eqs.~(A30)--(A31)]{debrota20}, applied to
$(G_\xi)_{nm}=\tr{\Pi_n^\xi\Pi_m^\xi}$. We reproduce the short argument in
Appendix~\ref{app:det-bound} for completeness.
\end{proof}

Suppose now that there exists a SIC-extremal $\xi_{\rm SIC}\in\ext(r)$. For comparisons within $\mathcal{E}(r)$, it is natural to normalise by the SIC-extremal rather than by the IB channel, due to the special role of the SIC-case. Thus, in the notation of Eq.~\eqref{eq:rel-vol}, we consider the compatibility volume of $\xi$ relative to a SIC-extremal, given by
\begin{equation}\label{eq:sic-rel-vol}
\vol(\xi\,;\xi_{\rm SIC},\mathscr I^{(m)}_{\!d})
=
\frac{\vol(\mathscr C_\xi[\mathscr I^{(m)}_{\!d}])}
{\vol(\mathscr C_{\xi_{\rm SIC}}[\mathscr I^{(m)}_{\!d}])}.
\end{equation}

Using the above lemma and Proposition \ref{prop:volume-original-Cxi}, this volume simplifies as follows:
\begin{theorem}
\label{thm:volume-comparison-sic}
Let $\xi\in\ext(r)$ and suppose there exists a SIC-extremal $\xi_{\rm SIC}\in\ext(r)$. Then
\begin{equation}\label{eq:relative}
\vol(\xi\,;\xi_{\rm SIC},\mathscr I^{(m)}_{\!d})=
\left[
\frac{1}{r}
\left(\frac{r+1}{r}\right)^{\!r^2-1}
\!\det G_\xi
\right]^{\frac{m-1}{2}}.
\end{equation}
Furthermore, $0< \vol(\xi\,;\xi_{\rm SIC},\mathscr I^{(m)}_{\!d})\leq 1$ with equality in the upper bound if and only if $\xi$ is SIC-extremal.
\end{theorem}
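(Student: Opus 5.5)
The plan is to write the relative volume as a ratio of two instances of Proposition~\ref{prop:volume-original-Cxi}, and then to evaluate the SIC determinant explicitly. Both $\xi$ and $\xi_{\rm SIC}$ lie in $\ext(r)$, so that proposition applies to each with the same constant $c_{m,r}$. This constant depends only on $m$ and $r$, and it is positive and finite because $\mathscr M^{(m)}_r$ and $\mathscr I^{(m)}_{\!d}$ are full-dimensional compact convex sets in $\mathbb R^{(m-1)d}$; $\mathscr M^{(m)}_r$ contains a neighbourhood of $(\id/m,\ldots,\id/m)$. Taking the quotient in Eq.~\eqref{eq:sic-rel-vol}, both $c_{m,r}$ and $\vol(\mathscr I^{(m)}_{\!d})$ cancel, which leaves
\[
\vol(\xi\,;\xi_{\rm SIC},\mathscr I^{(m)}_{\!d})=\left(\frac{\det G_\xi}{\det G_{\xi_{\rm SIC}}}\right)^{(m-1)/2}.
\]

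Next I would compute $\det G_{\xi_{\rm SIC}}$ directly, although equality in Lemma~\ref{lem:det-bound} already gives it. By Eq.~\eqref{eq:sic-channel},
\[
G_{\xi_{\rm SIC}}=\frac{r}{r+1}\,\id_{d}+\frac{1}{r+1}|\mathbf 1\rangle\langle\mathbf 1|.
\]
Its eigenvalues are $r/(r+1)$ with multiplicity $r^2-1$, on vectors orthogonal to $\mathbf 1$, and $r/(r+1)+r^2/(r+1)=r$ on $\mathbf 1$. Hence
\[
\det G_{\xi_{\rm SIC}}=r\left(\frac{r}{r+1}\right)^{r^2-1}.
\]
Taking the reciprocal and substituting gives Eq.~\eqref{eq:relative}.

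For the bounds, Proposition~\ref{Dinv}(b) gives $G_\xi$ positive definite for $\xi\in\ext(r)$, so $\det G_\xi>0$ and the relative volume is strictly positive. Lemma~\ref{lem:det-bound} bounds the bracket in Eq.~\eqref{eq:relative} by $1$, with equality if and only if $\xi$ is SIC-extremal. Since $m\ge2$, the exponent $(m-1)/2$ is positive and $t\mapsto t^{(m-1)/2}$ is strictly increasing on $(0,\infty)$. Therefore the relative volume is at most $1$, with equality exactly when $\xi$ is SIC-extremal.

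Once Lemma~\ref{lem:det-bound} and Proposition~\ref{Dinv} are available, none of these steps is a real obstacle. The only care needed is to justify $0<c_{m,r}<\infty$, so that the cancellation is legitimate, and to check the eigenvalue computation of the SIC Gram matrix.
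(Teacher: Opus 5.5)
Your proposal is correct and follows the paper's proof. You take the ratio of two instances of Proposition~\ref{prop:volume-original-Cxi} so that $c_{m,r}$ cancels, insert the value of $\det G_{\xi_{\rm SIC}}$, and read off the bounds from Lemma~\ref{lem:det-bound}. The differences are only cosmetic: you compute $\det G_{\xi_{\rm SIC}}$ from the spectrum of $\frac{r}{r+1}\id_{d}+\frac{1}{r+1}|\mathbf 1\rangle\langle\mathbf 1|$ instead of quoting the equality case of the lemma, and you state explicitly the strict positivity from Proposition~\ref{Dinv}, which the paper leaves implicit.
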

\begin{proof}
Using Proposition~\ref{prop:volume-original-Cxi} we get
\begin{equation}\label{eq:relative-det}
\vol(\xi\,;\xi_{\rm SIC},\mathscr I^{(m)}_{\!d})
=
\left(
\frac{\det G_\xi}
{\det G_{\xi_{\rm SIC}}}
\right)^{\frac{m-1}{2}}\,.
\end{equation}
Now Lemma~\ref{lem:det-bound} gives $\det G_{\xi_{\rm SIC}}=r\left(r/(r+1)\right)^{r^2-1}$ so after substitution we get Eq. \eqref{eq:relative}. Furthermore, we conclude from Lemma~\ref{lem:det-bound} that $0<\vol(\xi\,;\xi_{\rm SIC},\mathscr I^{(m)}_{\!d})\le 1$, with equality if and only if $\xi$ is SIC-extremal.
\end{proof}
Thus, SIC-extremals maximise the compatibility volume among all maximal-rank extremal decoherence channels:
\[
\vol(\mathscr C_\xi[\mathscr I^{(m)}_{\!d}])\leq \vol(\mathscr C_{\xi_{\rm SIC}}[\mathscr I^{(m)}_{\!d}])\,.
\]
We can infer from Eq.~\eqref{eq:relative} that the number of outcomes $m$ amplifies the distinction between extremals. In particular, if $\xi$ is not SIC-extremal, then $\vol(\xi\,;\xi_{\rm SIC},\mathscr I^{(m)}_{\!d})$ decays exponentially in $m$.

We stress that the SIC-case obviously does not maximise the compatibility volume among \emph{all} decoherence channels, as the region typically expands with added classical noise. By restricting to extremals we eliminate this possibility, thereby uncovering the special role of the SIC-case as the ``most classical'' extremal in the sense of joint measurability.

At the opposite, ``furthest from classical'' end lie extremals that preserve a large amount of incompatibility. For fixed $r$ and $m$, the relative compatibility volume approaches zero as $\det G_\xi\to0$. Geometrically, $\det G_\xi$ is the squared Hilbert--Schmidt volume spanned by $\{\Pi_n^\xi\}_{n}$, therefore the limit corresponds to the operator frame becoming nearly linearly dependent. The channel consequently approaches the boundary of the maximal-rank extremal regime, where the frame ceases to be an operator basis. This behaviour occurs in all three families studied in Sec.~\ref{sec:qubit-mics}.

It is useful to compare the compatibility volume with the robustness quantifier of incompatibility preservability introduced in \cite{hsieh25}. As explained in Sec.~\ref{subsec:generalcompreg}, this measures the minimum amount of mixing with an arbitrary auxiliary channel required to make a channel incompatibility breaking. Since its exact evaluation is difficult, we introduce a simpler version in Appendix~\ref{app:resource-preservability}, in which the auxiliary channel is restricted to the class of decoherence channels. Then, for extremals in $\ext(r)$, the robustness is minimised by channels associated with unbiased rank-one MICs. It therefore places SIC-extremals among the least incompatibility-preserving extremals, but does not single them out.

Indeed, the robustness depends only on the maximum eigenvalue of $\xi$ and therefore cannot distinguish operator frames sharing the same maximum eigenvalue. By contrast, the compatibility volume depends on $\det G_\xi$ and is therefore sensitive to the collective geometry. In particular, it uniquely identifies SIC-extremals as its maximisers. The Heisenberg--Weyl extremals considered in the next section, whose associated MICs are unbiased, makes this distinction especially clear: all its members have the same robustness, whereas their relative compatibility volumes range from one to values arbitrarily close to zero. Thus, even at the scalar level, the compatibility volume is strictly more discriminating on this family, while the full compatibility region retains still finer information about how incompatibility is lost.

\subsection{The depolarised basis observable and comparison to non-extremal decoherence}
\label{subsec:sic-regular}

Here we proceed to give an explicit example of the distinctive behaviour of SIC-extremals, which also allows us to make a meaningful comparison to non-extremal decoherence with the same rates. To do this we restrict the probe class to a simple one parameter family of incoherent observables: let
\[
\mathscr I^{\rm dep}_{\!d}
:=
\left\{
\P_\alpha
\;\middle|\;
\alpha\in[0,1]
\right\}
\subseteq\mathscr I_{\!d}\,,
\]
where
\begin{equation}\label{eq:line-observable}
\P_{\alpha}:=\alpha \P_0 +(1-\alpha) r^{-2}\id\,,
\end{equation}
with $\P_0(j)=\kb{j}{j}$ for $j=1,\ldots,d=r^2$. Thus $\P_\alpha$ is the depolarised incoherent basis observable, and $\mathscr I^{\rm dep}_{\!d}$ is the line segment joining the trivial uniform observable to the incoherent basis observable.

We write the restricted compatibility region as
\begin{equation}\label{eq:alpha-thresh}
\mathscr C_\xi[\mathscr I^{\rm dep}_{\!d}]
=
\left\{
\P_\alpha
\;\middle|\;
0\leq\alpha\leq\alpha^*(\xi)
\right\}\,,
\end{equation}
where $\alpha^*(\xi)$ is the compatibility threshold, i.e., the largest value of $\alpha$ for which $\P_\alpha$ belongs to $\mathscr C_\xi$. Below we give a general upper bound for $\alpha^*(\xi)$ when $\xi\in\ext(r)$, and show that it is saturated if and only if the extremal corresponds to a SIC.

\begin{proposition}\label{prop:sic-maximal-line}
Let $\xi\in\ext(r)$. Then
\begin{equation}\label{eq:rank-s-ineq}
\alpha^*(\xi)\le\frac1{r+1}\,,
\end{equation}
with equality if and only if $\xi$ is SIC-extremal.
\end{proposition}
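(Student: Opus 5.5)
The plan is to work directly with the positivity criterion of Theorem~\ref{thm:extr}, via its dilation form in Theorem~\ref{thm:gii}. A trace inequality for POVMs on $\mathcal K$ evaluated on rank-one projections then gives the bound, and analysing when that inequality is tight gives the equality case. For $\P_\alpha$ in \eqref{eq:line-observable} the conditional probabilities are
\[
p(j|n)=\alpha\,\delta_{jn}+\frac{1-\alpha}{r^2},\qquad j,n=1,\ldots,r^2 .
\]
The set $\mathscr C_\xi[\mathscr I^{\rm dep}_{\!d}]$ is the preimage of a closed convex condition under the affine map $\alpha\mapsto \F_\xi^{\P_\alpha}$, and it contains $\alpha=0$. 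Hence it is indeed a closed interval $[0,\alpha^*(\xi)]$, as written in \eqref{eq:alpha-thresh}.

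\emph{Upper bound.} Take $\alpha\le\alpha^*(\xi)$. By Theorem~\ref{thm:extr}, $\F:=\F_\xi^{\P_\alpha}$ is a POVM on $\mathcal K\simeq\mathbb C^r$ with $\tr{\F(j)\Pi_n^\xi}=p(j|n)$ for all $j,n$. Summing the diagonal terms $n=j$ gives
\[
\sum_{j=1}^{r^2}\tr{\F(j)\Pi_j^\xi}=r^2\Bigl(\alpha+\frac{1-\alpha}{r^2}\Bigr)=(r^2-1)\alpha+1 .
\]
Since $\F(j)\ge0$ and $\Pi_j^\xi$ is a rank-one projection, $\tr{\F(j)\Pi_j^\xi}\le\tr{\F(j)}$. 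Therefore the left-hand side is at most $\sum_j\tr{\F(j)}=\tr{\id_r}=r$. This yields $(r^2-1)\alpha\le r-1$, i.e. $\alpha\le 1/(r+1)$, which is \eqref{eq:rank-s-ineq}.

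\emph{Equality case.} Suppose $\alpha^*(\xi)=1/(r+1)$ and set $\alpha=1/(r+1)$. Then every inequality $\tr{\F(j)\Pi_j^\xi}\le\tr{\F(j)}$ must be an equality. Because $\F(j)\ge0$, this forces the support of $\F(j)$ to lie in the range of $\Pi_j^\xi$, so $\F(j)=c_j\Pi_j^\xi$ for some $c_j\ge0$.

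The diagonal condition then gives $c_j=\alpha+(1-\alpha)/r^2=1/r$. The off-diagonal conditions give, for $n\neq j$,
\[
c_j\,|\langle a_j|a_n\rangle|^2=\frac{1-\alpha}{r^2}=\frac{1}{r(r+1)} .
\]
Hence $(G_\xi)_{nj}=1/(r+1)$ for all $n\ne j$, and by Proposition~\ref{prop:sic-extemal-equiv} $\xi$ is SIC-extremal.

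Conversely, if $\xi$ is SIC-extremal, take $\F(j)=\Pi_j^\xi/r$. This is a POVM, and it reproduces $p(j|n)$ at $\alpha=1/(r+1)$ by the same computation run backwards. By Theorem~\ref{thm:gii} (equivalently, by uniqueness in Theorem~\ref{thm:extr}), $\P_{1/(r+1)}\in\mathscr C_\xi$. Combined with the upper bound, this gives $\alpha^*(\xi)=1/(r+1)$.

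The only delicate step is the equality analysis: one must go from saturation of the trace inequality to $\F(j)\propto\Pi_j^\xi$. This follows because $\tr{\F(j)(\id-\Pi_j^\xi)}=0$ with both factors positive semidefinite implies $\F(j)(\id-\Pi_j^\xi)=0$. Everything else is bookkeeping.
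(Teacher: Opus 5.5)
Your proof is correct and follows essentially the same argument as the paper's. You sum the trace bound $\tr{\F(j)\Pi_j^\xi}\le\tr{\F(j)}$ over $j$ to get $\alpha\le 1/(r+1)$, use saturation to force $\F(j)=\Pi_j^\xi/r$, and read off the SIC overlaps from the off-diagonal entries. The only differences are cosmetic: you justify closedness of the interval explicitly, and you finish via Proposition~\ref{prop:sic-extemal-equiv} where the paper uses $\sum_j\F(j)=\id$ directly.
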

We give the proof in Appendix~\ref{secA:SIC-line-equiv}. It follows that, among all maximal-rank extremals, SIC-extremals are those for which the compatibility region $\mathscr C_\xi[\mathscr I^{\rm dep}_{\!d}]$ extends furthest along the depolarised-basis line.

We note that at the boundary $\alpha=1/(r+1)$, the observable $\P_{1/(r+1)}$ coincides with the special incoherent observable $\P^\xi$ introduced in Eq.~\eqref{eq:special_incoherent} when $\xi$ is a SIC-extremal, i.e. $\P^\xi=\P_{1/(r+1)}$. Recall the canonical simplex $\mathcal P^\xi(\M)$ defined in Eq.~\eqref{eq:Pxi-simplex} via the points $\mathbf s^\xi(j)$ of $\P^\xi$. For a SIC, these vertices form a regular simplex. The same regular simplex can also be defined independently of $\xi$ by taking as vertices the rows of the stochastic matrix defining $\P_{1/(r+1)}$. Proposition~\ref{prop:sic-maximal-line} implies that this regular simplex is contained in the MIC state space $\mathcal P(\M)$ if and only if $\M$ is a SIC.

At the end of the preceding subsection we remarked that non-extremals can trivially have larger compatibility volumes than the SIC-extremal. However, there is one interesting case of comparison, namely to the non-extremal uniform coherence matrix $\xi_{nm}=1/\sqrt{1+r}$ for $n\neq m$, which has the same constant off-diagonal damping amplitudes in dimension $d=r^2$ but no phase factors. For this non-extremal channel, the compatibility threshold of the same restricted probe class $\mathscr I^{\rm dep}_{\!d}$ is known to be
$\alpha=g_{r^2}(1/\sqrt{1+r})$ \cite{kiukas22}, where
\[
g_d(x)=\frac 1d \left((d-2)(1-x)+2\sqrt{1-x}\sqrt{1+(d-1)x}\right).
\]
This exceeds $(1+r)^{-1}$ for all $r\geq2$. Hence, along the depolarised-basis line, the SIC-extremal destroys less incompatibility than the corresponding uniform non-extremal channel: a larger portion of the line remains outside the compatibility region. Interestingly, we have $g_{r^2}(1/\sqrt{1+r})\to1$, whereas $(r+1)^{-1}\to0$ as $r\to\infty$, showing that asymptotically on the system dimension, the effect becomes dichotomic: the SIC-extremal preserves all incompatibility while the uniform decoherence destroys it. Hence, even though the SIC-extremal is ``closest to classical'' among all other (maximal-rank) extremals, it is still ``infinitely'' better at preserving incompatibility compared to the phase-insensitive uniform case with the same decoherence rates. The critical role of the phase factors highlights the intricate nature of quantum decoherence, revealed here through joint measurability.

\subsection{SIC existence as a joint measurability problem}

A SIC is conjectured to exists in every finite dimension $r$ \cite{zauner99}. We now reformulate the existence of SICs in terms of the joint measurability of ``noisy'' versions of the sharp mutually unbiased pair $(\P_0,\Q_{\rm MUB})$. Recall $\P_0$ is the incoherent basis observable in the notation of Eq.~\eqref{eq:line-observable} and $\P_{1/(r+1)}$ is its depolarised form at $\alpha=1/(r+1)$.

\begin{theorem}
\label{thm:sic-existence-mub}
A SIC exists in dimension $r$ if and only if there exists an extremal coherence matrix $\xi\in\ext(r)$ such that $\P_{1/(r+1)}$ and
$\xi\circ\Q_{\rm MUB}$ are jointly measurable observables in dimension $r^2$.
\end{theorem}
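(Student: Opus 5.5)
The plan is to reduce the statement to Proposition~\ref{prop:sic-maximal-line}, combined with the equivalence in Eq.~\eqref{eq:c_xi} and Proposition~\ref{prop:sic-extemal-equiv}. By Eq.~\eqref{eq:c_xi}, for any $\xi\in\mathfrak C_{r^2}$, joint measurability of $\P_{1/(r+1)}$ with $\xi\circ\Q_{\rm MUB}$ is equivalent to $\P_{1/(r+1)}\in\mathscr C_\xi$. For $\xi\in\ext(r)$ this in turn is equivalent to $\alpha^*(\xi)\ge 1/(r+1)$. To justify this last step I use the fact that the restricted region in Eq.~\eqref{eq:alpha-thresh} is an initial segment of the line. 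This follows because $\mathscr C_\xi$ is convex (joint observables can be mixed, as in the proof of Proposition~\ref{convexinclusion}) and contains the trivial observable $\P_0^{\rm triv}=r^{-2}\id$.

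For the direction ``SIC exists $\Rightarrow$ channel exists'', let $\{r^{-1}|a_n\rangle\langle a_n|\}$ be a SIC on $\mathbb C^r$ and set $\xi_{nm}=\langle a_n|a_m\rangle$. Then $(G_\xi)_{nm}=1/(r+1)$ for $n\ne m$, and $G_\xi$ is invertible because the SIC projectors are linearly independent. By Proposition~\ref{Dinv}, $\xi\in\ext(r)$, and by definition it is SIC-extremal. Proposition~\ref{prop:sic-maximal-line} then gives $\alpha^*(\xi)=1/(r+1)$, hence $\P_{1/(r+1)}\in\mathscr C_\xi$, which is the required joint measurability.

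As a more concrete check, which avoids relying on the equality case, I would verify directly that $\P_{1/(r+1)}=\P^\xi$. By Eq.~\eqref{eq:special_incoherent}, $p(j|n)=r^{-1}(G_\xi)_{nj}$, which equals $1/r$ for $j=n$ and $1/(r(r+1))$ otherwise. This matches $\alpha\delta_{jn}+(1-\alpha)r^{-2}$ at $\alpha=1/(r+1)$. Since $\P^\xi\in\mathscr C_\xi$ by construction, with $\F=\M$ in Theorem~\ref{thm:gii}, this direction follows without any further computation.

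For the converse, suppose $\xi\in\ext(r)$ and that $\P_{1/(r+1)}$ is jointly measurable with $\xi\circ\Q_{\rm MUB}$. Then $\P_{1/(r+1)}\in\mathscr C_\xi$, so $\alpha^*(\xi)\ge 1/(r+1)$. The upper bound \eqref{eq:rank-s-ineq} forces equality, and the equality clause of Proposition~\ref{prop:sic-maximal-line} says that $\xi$ is SIC-extremal. Proposition~\ref{prop:sic-extemal-equiv} then shows that $\{r^{-1}\Pi_n^\xi\}$ is a SIC-POVM on $\mathcal K\simeq\mathbb C^r$.

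The only substantive ingredient is the equality characterisation in Proposition~\ref{prop:sic-maximal-line}, which is proved in the appendix and taken as given here. The main point requiring care is that $\alpha^*(\xi)$ is well defined as a threshold, i.e.\ that membership at $\alpha=1/(r+1)$ really does give $\alpha^*\ge 1/(r+1)$. Convexity of $\mathscr C_\xi$ together with the trivial endpoint handles this. Compactness, needed so that the supremum is attained, follows from the positivity test of Theorem~\ref{thm:extr}.
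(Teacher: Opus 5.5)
Your proposal is correct and follows the paper's proof essentially step for step: both directions reduce to Proposition~\ref{prop:sic-maximal-line}, and the converse uses its upper bound together with its equality clause. Your two additions are optional extras. The identification $\P_{1/(r+1)}=\P^\xi$ is exactly the argument the appendix uses for the converse half of that proposition. The convexity/closedness remark is also not strictly needed, since membership at $\alpha=1/(r+1)$ gives $\alpha^*(\xi)\ge 1/(r+1)$ directly from the definition of $\alpha^*$.
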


\begin{proof}
Suppose first that a SIC exists in $\mathbb C^r$. Let $\xi$ be the corresponding SIC-extremal on $\mathbb C^{r^2}$. By Proposition~\ref{prop:sic-maximal-line}, $\alpha^*(\xi)=1/(r+1)$, and hence
$\P_{1/(r+1)}\in\mathscr C_\xi$. Therefore $\P_{1/(r+1)}$ and $\xi\circ\Q_{\rm MUB}$ are jointly measurable.

Conversely, suppose that there exists $\xi\in\ext(r)$ such that $\P_{1/(r+1)}$ and $\xi\circ\Q_{\rm MUB}$ are jointly measurable. Thus, $\P_{1/(r+1)}\in\mathscr C_\xi$, and therefore $\alpha^*(\xi)\ge 1/(r+1)$. On the other hand, Proposition~\ref{prop:sic-maximal-line} gives
$\alpha^*(\xi)\le 1/(r+1)$. Hence $\alpha^*(\xi)=1/(r+1)$. By the equality statement in Proposition~\ref{prop:sic-maximal-line}, $\xi$ is SIC-extremal. Therefore its structure vectors form a SIC in $\mathbb C^r$.
\end{proof}

In this way, the existence of a SIC is encoded in the possibility of destroying incompatibility between a pair of ``noisy'' mutually unbiased bases via depolarisation and extremal decoherence. This connects two of the most celebrated measurements in quantum theory, both of which have their own unresolved existence question \cite{fuchs17,mcnulty26}.

The volume result in Theorem~\ref{thm:volume-comparison-sic} gives an alternative formulation of SIC existence. A SIC exists in $\mathbb C^r$ if and only if there exists $\xi\in\ext(r)$ such that, for some, and hence every, $m\ge2$, the IB-normalised compatibility volume satisfies
\[
\vol(\xi;\mathscr I^{(m)}_{\!d})
=
c_{m,r}\cdot
\left[
r\left(\frac{r}{r+1}\right)^{r^2-1}
\right]^{\frac{m-1}{2}},
\]
where $c_{m,r}$ is the constant appearing in Proposition~\ref{prop:volume-original-Cxi}.

\section{Special extremals for $r=2$}
\label{sec:qubit-mics}

We now consider the smallest Hilbert space for which non-unitary extremal decoherence channels exist, namely $\mathcal H\simeq\mathbb C^4$. This case already has a relatively rich structure, provided by the qubit dilation space; we have $d=4=r^2$ with $\mathcal K\simeq\mathbb C^2$. We take $\xi\in\ext(2)$ and analytically describe the compatibility region $\mathscr C_\xi$ for several families of extremals, including MIC and non-MIC cases. 

\subsection{General considerations}
The positivity condition appearing in Theorem~\ref{thm:extr} can now be written in a particularly simple form, since for any $F\in M_2(\mathbb C)$ we have 
\[
F\geq 0
\,\,\,\Longleftrightarrow\,\,\,
\tr{F}\geq 0
\ \text{ and }\
\tr{F}^2\geq \tr{F^2}.
\]
Note that here the second condition is equivalent to $\det F\geq 0$. Denoting $\mathbf q(j):=G_\xi^{-1}\mathbf p(j)$, we have
\[
\tr{\F_\xi^\P(j)}=\mathbf 1^T\mathbf q(j),
\qquad
\tr{\F_\xi^\P(j)^2}=\mathbf q(j)^T G_\xi\,\mathbf q(j),
\]
so it follows that
\begin{equation}\label{eq:qubit-quad-q}
\F_\xi^\P(j)\ge0
\quad\Longleftrightarrow\quad
(\mathbf 1^T\mathbf q(j))^2-\mathbf q(j)^T G_\xi\,\mathbf q(j)\ge0.
\end{equation}
Rewriting in terms of the coefficient vectors $\mathbf p(j)=G_\xi\mathbf q(j)$, we obtain the following quadratic criterion.
\begin{proposition}
\label{prop:qubit-K-from-G}
Let $\xi\in\ext(2)$ and define
\begin{equation}\label{eq:K-matrix}
K_\xi:=G_\xi^{-1}(\mathbf 1\mathbf 1^T-G_\xi)G_\xi^{-1}.
\end{equation}
Let $\P\in \mathscr{I}_4$ with coefficient vectors $\mathbf p(j)$. Then $\P\in\mathscr C_\xi$ if and only if
\[
\mathbf p(j)^T K_\xi\,\mathbf p(j)\ge0
\qquad
\text{for all outcomes }j.
\]
\end{proposition}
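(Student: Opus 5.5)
The plan is to combine Theorem~\ref{thm:extr} with the two-by-two positivity criterion stated just above, and then change variables from $\mathbf q(j)$ to $\mathbf p(j)$.

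First, by Theorem~\ref{thm:extr}, $\P\in\mathscr C_\xi$ holds if and only if $\F_\xi^\P(j)\ge0$ for every outcome $j$. Here $\F_\xi^\P(j)=\sum_n q_n(j)\Pi_n^\xi$ with $\mathbf q(j)=G_\xi^{-1}\mathbf p(j)$.

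Next, for a Hermitian $F\in M_2(\mathbb C)$ with eigenvalues $\lambda_1,\lambda_2$, we have $(\tr F)^2-\tr{F^2}=2\lambda_1\lambda_2=2\det F$. So $F\ge0$ is equivalent to the pair of conditions $\tr F\ge0$ and $(\tr F)^2\ge\tr{F^2}$.

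The traces are easy to compute in the frame basis:
\begin{itemize}
\item $\tr{\Pi_n^\xi}=1$ gives $\tr{\F_\xi^\P(j)}=\mathbf 1^T\mathbf q(j)$;
\item $\tr{\Pi_n^\xi\Pi_m^\xi}=(G_\xi)_{nm}$ gives $\tr{\F_\xi^\P(j)^2}=\mathbf q(j)^TG_\xi\mathbf q(j)$.
\end{itemize}
Substituting $\mathbf q(j)=G_\xi^{-1}\mathbf p(j)$ and using that $G_\xi$ is symmetric, we get
\[
(\mathbf 1^T\mathbf q)^2-\mathbf q^TG_\xi\mathbf q=\mathbf p^TG_\xi^{-1}\mathbf 1\mathbf 1^TG_\xi^{-1}\mathbf p-\mathbf p^TG_\xi^{-1}\mathbf p=\mathbf p^TK_\xi\mathbf p .
\]

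The main obstacle is that the proposition states only the quadratic condition, so the trace condition $\tr{\F_\xi^\P(j)}\ge0$ must be shown to be redundant. Take the first option, and suppose $\det F\ge0$ but $\tr F<0$. Then both eigenvalues are $\le0$, so $-F\ge0$. Since $p(j|n)=\tr{\Pi_n^\xi F}$ is a probability, each such value is $\ge0$. Summing with the nonnegative... rather than relying on a nonnegative combination, use the frame property directly. The operators $\Pi_n^\xi$ span $\mathcal B(\mathcal K)$, so in particular they do not all annihilate a nonzero operator. With $-F\ge0$, every $\tr{\Pi_n^\xi F}\le0$, while every $p(j|n)\ge0$; together these force $\tr{\Pi_n^\xi F}=0$ for all $n$. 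Because the $\Pi_n^\xi$ form a basis, this gives $F=0$, which contradicts $\tr F<0$. Hence $\tr{\F_\xi^\P(j)}\ge0$ holds automatically, and $\F_\xi^\P(j)\ge0$ is equivalent to $\mathbf p(j)^TK_\xi\mathbf p(j)\ge0$. Combining this with Theorem~\ref{thm:extr} gives the claim.
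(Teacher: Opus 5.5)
Your proposal is correct and follows essentially the same route as the paper: Theorem~\ref{thm:extr}, then the $2\times 2$ criterion $F\ge0\Leftrightarrow \tr{F}\ge0$ and $\tr{F}^2\ge\tr{F^2}$, then the frame-basis trace formulas, and finally the substitution $\mathbf q(j)=G_\xi^{-1}\mathbf p(j)$. Your extra argument that $\tr{\F_\xi^\P(j)}\ge0$ holds automatically makes explicit a step the paper passes over silently: a negative semidefinite $F$ with all $\tr{\Pi_n^\xi F}=p(j|n)\ge0$ must vanish, since the $\Pi_n^\xi$ form a basis. This step is needed for general $\xi\in\ext(2)$, because $\tr{\F_\xi^\P(j)}=(\mathbf e^\xi)^T\mathbf p(j)$ and $\mathbf e^\xi$ need not be nonnegative there.
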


Thus for $d=4$, the compatibility region is governed entirely by the Gram matrix $G_\xi$. Next, we show that for MICs, this criterion defines an ellipsoid for each measurement outcome.

Consider therefore extremals $\xi\in\mathcal E_{>}(2)$ with an associated MIC
$\M=\Phi(\xi)$, as defined in Sec.~\ref{sec:ext-dec}. We need to study the matrix $G_\xi$; noting first that $G_\xi\,\mathbf e^\xi=\mathbf 1$, it is convenient to consider the rescaled matrix
\[
H_\xi:=\sqrt E_\xi\,G_\xi\sqrt E_\xi,
\]
where $E_\xi=\operatorname{diag}(e^\xi_1,\dots,e^\xi_d)$. Note that now each $e^\xi_n$ is nonzero (as $\xi\in\mathcal E_{>}(2)$), so $H_\xi$ is real symmetric and positive definite. Therefore, it admits a real orthonormal eigenbasis. Moreover, $\sqrt{\mathbf e^\xi}$ is always an eigenvector of $H_\xi$ with eigenvalue 1, where $\sqrt{\mathbf e^\xi}$ denotes the vector with components $\sqrt{e^\xi_n}$. Since $\|\sqrt{\mathbf e^\xi}\|^2=\sum_n e^\xi_n=2$, we therefore have the spectral decomposition
\[
H_\xi=\frac12\sqrt{\mathbf e^\xi} [\sqrt{\mathbf e^\xi}]^T+\sum_{n=1}^{3}\lambda_n \mathbf v_n \mathbf v_n^T,
\]
where $\lambda_n\in(0,1)$ are the remaining three eigenvalues, and $\mathbf v_n$ are the corresponding (real) orthonormal eigenvectors. (Here we suppress the $\xi$-dependence.) In particular, each $\mathbf v_n$ is orthogonal to $\sqrt{\mathbf e^\xi}$.

We now define (no longer necessarily orthonormal) vectors  $\mathbf u_n$ by
\begin{equation}\label{eq:v-to-u}
\mathbf u_n:=\sqrt {E_\xi}\,\mathbf v_n.
\end{equation}
Since $G_\xi^{-1}=\sqrt {E_\xi}\,H_\xi^{-1}\sqrt{E_\xi}$, we have
\begin{equation}\label{xiabsdecomp}
G_\xi^{-1}=\frac 12\mathbf e^\xi[\mathbf e^\xi]^T+\sum_{n=1}^{3}\frac{1}{\lambda_n}\mathbf u_n\mathbf u_n^T.
\end{equation}
Theorem~\ref{thm:extr}, together with the decomposition
\eqref{xiabsdecomp}, then gives a simple characterisation of the compatibility
region $\mathscr C_\xi$ in terms of ellipsoids.
\begin{proposition}\label{ellipses}
Let $\xi\in \mathcal E_{>}(2)$, and let $\lambda_n$ and $\mathbf u_n$ be as
above. Let $\P\in \mathscr I_4$ with coefficient vectors
$\mathbf p(j)$. Define the vectors $\mathbf x_\xi^\P(j)\in \mathbb R^{3}$ by the coordinates
\begin{equation}\label{eq:x-coordinates}
[\mathbf x_\xi^\P(j)]_n:=\sqrt{2}\,
\frac{\mathbf u_n^T\mathbf p(j)}
{[\mathbf e^\xi]^T\mathbf p(j)},
\qquad n=1,2,3.
\end{equation}
Then
\[
\mathscr C_\xi
=
\left\{
\P
\;\middle|\;
\mathbf x_{\xi}^\P(j)\in\mathscr E_\xi
\text{ for each }j
\right\}\,,
\]
where
\[
\mathscr E_\xi
:=
\left\{
\mathbf x\in \mathbb R^{3}
\ \middle|\
\sum_{n=1}^{3}\frac{x_n^2}{\lambda_n}\leq 1
\right\}.
\]
\end{proposition}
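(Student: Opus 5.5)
We start from Theorem~\ref{thm:extr}, which reduces membership $\P\in\mathscr C_\xi$ to positivity of the $2\times 2$ operators $\F_\xi^\P(j)$. Since $r=2$, positivity is equivalent to $\tr{\F_\xi^\P(j)}\ge0$ together with $\tr{\F_\xi^\P(j)}^2\ge\tr{\F_\xi^\P(j)^2}$, which is \eqref{eq:qubit-quad-q}. We substitute $\mathbf q(j)=G_\xi^{-1}\mathbf p(j)$ and express both quantities through the decomposition \eqref{xiabsdecomp}.

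For the trace we use $\mathbf 1^TG_\xi^{-1}=(\mathbf e^\xi)^T$, which holds because $G_\xi$ is symmetric and $G_\xi\mathbf e^\xi=\mathbf 1$. This gives $\tr{\F_\xi^\P(j)}=t_j:=(\mathbf e^\xi)^T\mathbf p(j)$. Since $\mathbf e^\xi>0$ and $\mathbf p(j)\ge0$, we have $t_j\ge0$, so the trace condition is automatic. If $t_j=0$, then $\mathbf p(j)=0$ and $\F_\xi^\P(j)=0$; this degenerate outcome satisfies both sides trivially, provided we interpret $\mathbf x_\xi^\P(j)$ as, say, $0$ (we will note this convention).

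For the purity we compute $\tr{\F^2}=\mathbf q^TG_\xi\mathbf q=\mathbf p(j)^TG_\xi^{-1}\mathbf p(j)$. Inserting \eqref{xiabsdecomp} gives
\[
\mathbf p(j)^TG_\xi^{-1}\mathbf p(j)=\tfrac12 t_j^2+\sum_{n=1}^3\lambda_n^{-1}(\mathbf u_n^T\mathbf p(j))^2 .
\]
The condition $t_j^2\ge\tr{\F^2}$ therefore becomes $\tfrac12 t_j^2\ge\sum_n\lambda_n^{-1}(\mathbf u_n^T\mathbf p(j))^2$. For $t_j>0$ we divide by $t_j^2/2$, which yields exactly $\sum_n [\mathbf x_\xi^\P(j)]_n^2/\lambda_n\le1$ with the coordinates \eqref{eq:x-coordinates}. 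Applying this for every $j$ gives the claimed set equality.

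The only step needing care is verifying \eqref{xiabsdecomp}, namely that $G_\xi^{-1}=\sqrt{E_\xi}H_\xi^{-1}\sqrt{E_\xi}$ and that $H_\xi^{-1}$ has the stated spectral form. This follows from the invertibility of $E_\xi$ (since $\mathbf e^\xi>0$) and from the eigenvector fact $H_\xi\sqrt{\mathbf e^\xi}=\sqrt{E_\xi}G_\xi\mathbf e^\xi=\sqrt{\mathbf e^\xi}$. The normalised projector onto this eigenvector is $\tfrac12\sqrt{\mathbf e^\xi}[\sqrt{\mathbf e^\xi}]^T$, and conjugating it by $\sqrt{E_\xi}$ gives $\tfrac12\mathbf e^\xi[\mathbf e^\xi]^T$. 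Beyond this bookkeeping, we expect no real obstacle.
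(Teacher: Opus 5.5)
Your proof is correct and follows essentially the same route as the paper's. You reduce via Theorem~\ref{thm:extr} to positivity of the $2\times 2$ operators, compute the trace as $(\mathbf e^\xi)^T\mathbf p(j)$ and $\tr{F^2}=\mathbf p(j)^TG_\xi^{-1}\mathbf p(j)$ using the decomposition \eqref{xiabsdecomp}, and then normalise to obtain the ellipsoid condition. Your separate treatment of outcomes with $\mathbf p(j)=0$ is a small improvement: the paper asserts $(\mathbf e^\xi)^T\mathbf p>0$ without excluding zero effects, and for those $\mathbf x_\xi^\P(j)$ is undefined.
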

The proof is given in Appendix~\ref{app:r=2}. Note that the $\mathbf u_n$ also depend on $\xi$ (we have not indicated this explicitly) but are not uniquely determined by it; however, the compatibility region does not depend on the choice. Below, the extremal $\xi$ will be fixed in each case, so we will suppress the $\xi$-dependence in $\mathbf x^\P_\xi(j)$. Also note that any trivial incoherent observable $\P(j)=p_j\id$ has $\mathbf p(j)=p_j\mathbf 1$, so
\[
\mathbf u_n^T\mathbf p(j)
=
p_j\mathbf v_n^T\sqrt{\mathbf e^\xi}
=0.
\]
Therefore each trivial $\P$ corresponds to the centre of the ellipsoid $\mathscr E_\xi$, which in particular lies in its interior.

Before proceeding to consider interesting special classes of extremals, we discuss briefly the indexing of the relevant vectors. As we consider $\mathcal H\simeq\mathbb C^4\simeq \mathbb C^2\otimes\mathbb C^2$, it is natural to use the two-qubit notation
$$
|00\rangle, |01\rangle, |10\rangle, |11\rangle
$$
for the incoherent basis, so the label set is $\Omega:=\{0,1\}^2$, and incoherent observables are specified by vectors $\mathbf p(j) = (p_{00}(j), p_{01}(j), p_{10}(j), p_{11}(j))^T$. We then use a similar convention for the vector $\mathbf x^\P(j)$ in Prop. \ref{ellipses}, so that
$\mathbf x^\P(j) =(x_{01}(j),x_{10}(j),x_{11}(j))$, and the corresponding eigenvalues are $\lambda_{01}, \lambda_{10}, \lambda_{11}$. This indexing is natural for the Heisenberg-Weyl extremals in the next subsection, but we use it throughout for the sake of consistency and comparison.

Finally, we introduce a natural symmetric subclass of incoherent observables we use to probe the extremals: we define $\mathscr I^{\rm sc}_{\!4}\subseteq\mathscr I_{\!4}$ to be the set of \emph{shift-covariant} incoherent observables $\P_{\mathbf r}$ of the form
\begin{equation}\label{eq:inc-sh}
\P_{\mathbf r}(n,m)
=
\sum_{(k,l)\in\Omega}
r_{n\ominus k,\;m\ominus l}\,
|kl\rangle\langle kl|\,,
\end{equation}
for $(n,m)\in\Omega$, where $\mathbf r=(r_{00},r_{01},r_{10},r_{11})\in\Delta_4$ is a seed distribution and $\ominus$ denotes subtraction modulo $2$. The corresponding compatibility region is then $\mathscr C_\xi[\mathscr I^{\rm sc}_{\!4}]$.

It turns out to be convenient to use the Fourier coordinates $\hat r_{nm}:=\sum_{(k,l)\in\Omega}(-1)^{kn+lm}r_{kl}$, where the coefficient $\hat r_{00}$ is fixed by normalisation, namely $\hat r_{00}=1$. Therefore, the relevant coordinates form 3-vectors
\begin{equation}\label{fouriercoords}
\hat{\mathbf r}= (\hat r_{01},\hat r_{10}, \hat r_{11}),
\end{equation}
which we will use below to visualise the relevant compatibility regions.

\subsection{Heisenberg--Weyl extremals}
\label{sec:hw-extremals}

We first apply Proposition~\ref{ellipses} to the class of extremals associated
with the qubit Heisenberg--Weyl MICs introduced in \cite{ariano04}. These MICs are generated by applying the unitaries of the Heisenberg-Weyl representation to a single ``seed vector''; see Appendix~\ref{app:hw-extremals} for details. The resulting coherence matrix depends on two parameters $0<s<1$ and $0<\vartheta<\pi/2$; writing $t=(1+s^2)^{-1/2}$, we have
\begin{equation}\label{eq:hw-extremal-xi}
\xi
=
t^{2}
\begin{pmatrix}
1+s^{2} & 1-s^{2} & 2s\cos\vartheta & 2 i s\sin\vartheta \\
1-s^{2} & 1+s^{2} & 2 i s\sin\vartheta & 2s\cos\vartheta \\
2s\cos\vartheta & -2 i s\sin\vartheta & 1+s^{2} & -(1-s^{2}) \\
-2 i s\sin\vartheta & 2s\cos\vartheta & -(1-s^{2}) & 1+s^{2}
\end{pmatrix}.
\end{equation}
 It follows that $\mathbf e^\xi=\mathbf 1/2$, that is, the MIC is unbiased, and hence
$H_\xi=\frac12 G_\xi$. Clearly, $H_\xi$ is a block-circulant matrix with circulant blocks, and is therefore diagonal in the Fourier (Hadamard) basis
\begin{equation}\label{eq:hadamard-basis}
\begin{pmatrix}
\mathbf v_{00} & \mathbf v_{01} & \mathbf v_{10} & \mathbf v_{11}
\end{pmatrix}
=
\frac12
\begin{pmatrix}
1 &  1 &  1 &  1 \\
1 & -1 &  1 & -1 \\
1 &  1 & -1 & -1 \\
1 & -1 & -1 &  1
\end{pmatrix},
\end{equation}
with eigenvalues $\lambda_{00}=1$ and
\begin{align}
\lambda_{01}(s,\vartheta)&=4t^4s^2\cos^2\vartheta,\nonumber\\
\lambda_{10}(s,\vartheta)&=t^4(1-s^2)^2,\label{eq:HW-evalues}\\
\lambda_{11}(s,\vartheta)&=4t^4s^2\sin^2\vartheta.\nonumber
\end{align}
This gives us the decomposition \eqref{xiabsdecomp} of $G_\xi^{-1}$.
\begin{figure}[t]
  \centering
  \includegraphics[width=0.85\linewidth]{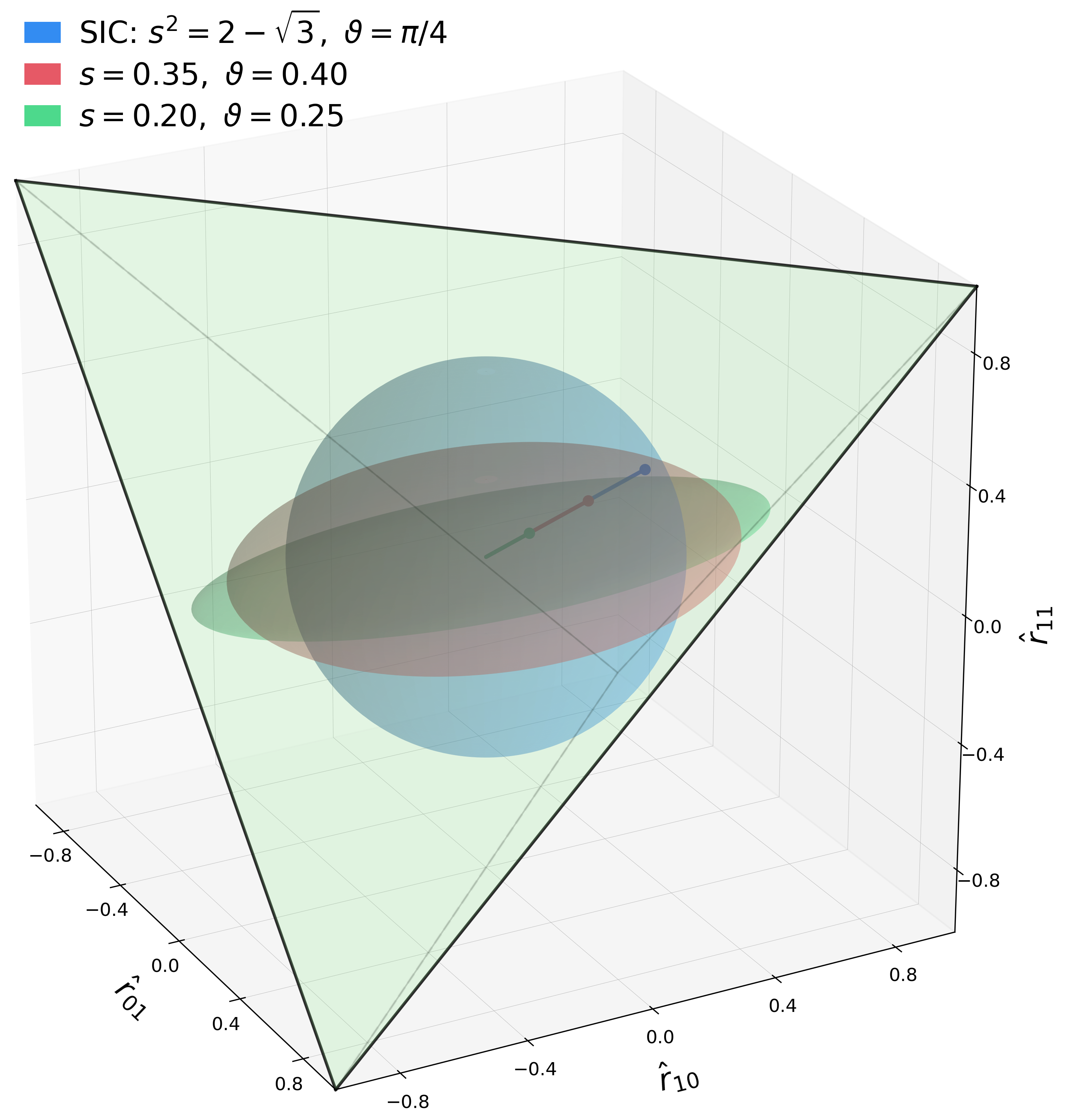}
  \caption{Compatibility regions $\mathscr C_\xi[\mathscr I^{\rm sc}_{\!4}]$ for three Heisenberg--Weyl extremals given in the Fourier coordinates $(\hat r_{01},\hat r_{10},\hat r_{11})$ of the shift-covariant observable $\P_{\mathbf r}\in\mathscr{I}^{\rm sc}_{\!4}$. The HW family is parameterised by $s\in(0,1)$ and $\vartheta\in(0,\pi/2)$, with coherence matrices given in Eq. \eqref{eq:hw-extremal-xi}.  The compatibility regions are centred ellipsoids; at the SIC point $s^2=2-\sqrt{3}$, $\vartheta=\pi/4$, the ellipsoid becomes the blue sphere. The red and green ellipsoids correspond to $(s,\vartheta)=(0.35,0.40)$ and $(0.20,0.25)$. The coloured line segments represent the depolarised-basis family $\P_\alpha\in\mathscr{I}^{\rm dep}_{\!4}$, lying along $(\hat r_{01},\hat r_{10},\hat r_{11})=\alpha(1,1,1)$. The tetrahedron is the set of all valid incoherent observables.}
  \label{fig:hw-ellipsoids}
\end{figure}

\begin{figure*}[t!]
    \centering
    \includegraphics[width=0.85\textwidth]{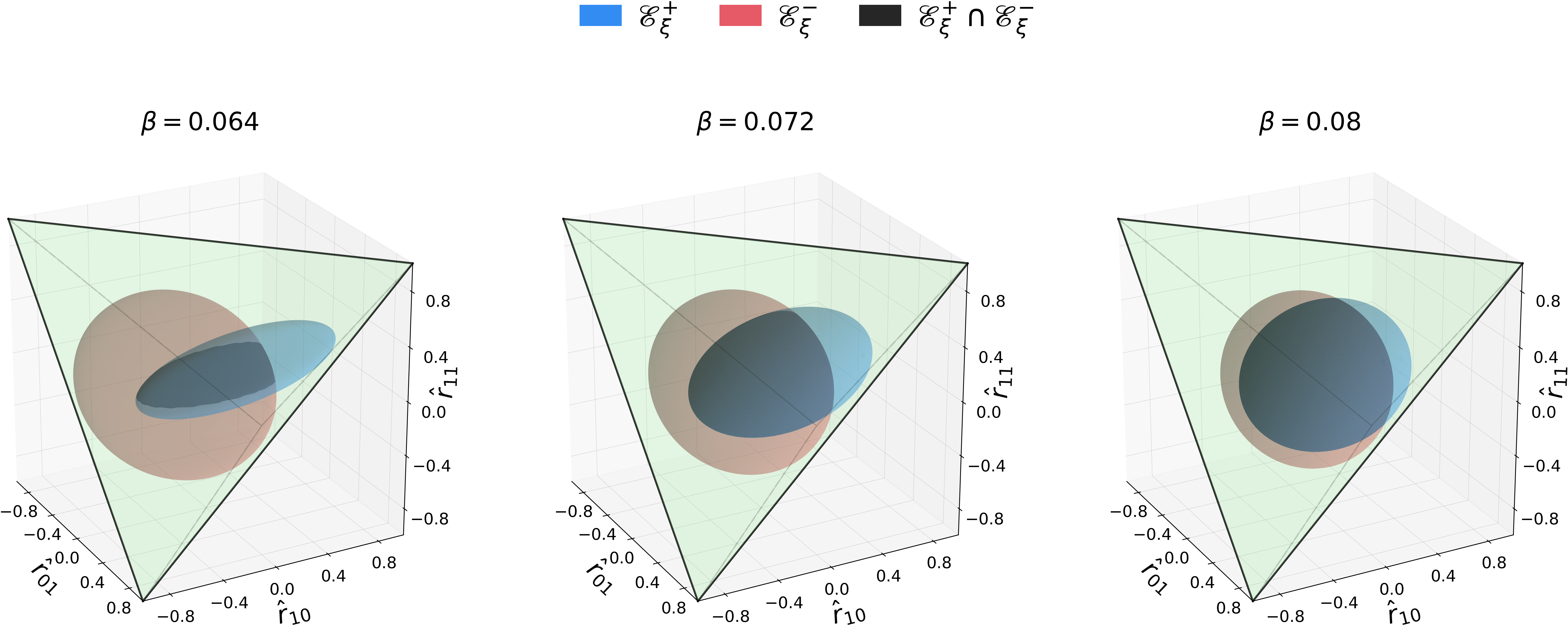}
\caption{Compatibility regions $\mathscr C_\xi[\mathscr I^{\rm sc}_{\!4}]$, restricted to shift-covariant observables, for three semi-SIC extremals, shown in the Fourier coordinates $(\hat r_{01},\hat r_{10},\hat r_{11})$ of the covariant seed distribution of $\P_{\mathbf r}$. The semi-SIC family is parametrised by $\beta\in(1/16,1/12]$, with coherence matrices given in Eq.~\eqref{eq:semi-sic-xi} and SIC endpoint at $\beta=1/12$. For $\beta<1/12$, the reduced symmetry splits the four outcomes into two inequivalent classes, so the compatibility region is the intersection of two ellipsoids $\mathscr E^\pm_\xi$ defined in Eq.~\eqref{app:eq:Epm-semisic}. The blue and red transparent surfaces are $\mathscr E^+_\xi$ and $\mathscr E^-_\xi$, and the dark region is $\mathscr E^+_\xi\cap\mathscr E^-_\xi=\mathscr C_\xi[\mathscr I^{\rm sc}_{\!4}]$. The tetrahedron is the set of all valid incoherent observables.}
    \label{fig:semisic-intersections}
\end{figure*}

We can now immediately find the compatibility volume relative to the SIC-extremal using Theorem    ~\ref{thm:volume-comparison-sic}:
\begin{equation}\label{HWMIC_vol}
\vol(\xi\,;\xi_{\rm SIC},\mathscr I^{(m)}_{\!4}) =\left[      \frac{108s^4(1-s^2)^2\sin^2(2\vartheta)}{(1+s^2)^6}\right]^{\frac{m-1}{2}}.
\end{equation}

To determine the detailed structure of $\mathscr C_\xi$, we evaluate the ellipsoid $\mathscr E_\xi$ from Proposition~\ref{ellipses}.  Since the HW MICs have $e^\xi_{kl}=1/2$ for all $(k,l)\in\Omega$, Eq.~\eqref{eq:x-coordinates} reduces to $x^\P_{nm}(j)=\hat p_{nm}(j)/\hat p_{00}(j)$ for $(n,m)\in\Omega_0$, where $\hat p_{nm}(j):=\sum_{(k,l)\in\Omega}(-1)^{kn+lm}p_{kl}(j)$ are the Hadamard--Fourier coefficients of the probability vector $\mathbf p(j)$ of $\P$ and $\Omega_0:=\Omega\setminus\{(0,0)\}$. Thus an incoherent observable $\P$ belongs to $\mathscr C_\xi$ if and only if, for each outcome $j$, the vector $\mathbf x^\P(j)$ lies in $\mathscr E_\xi$.

The SIC case is contained in this family at $s^2=2-\sqrt3$ and $\vartheta=\pi/4$, where $\lambda_{01}=\lambda_{10}=\lambda_{11}=\frac13$. The ellipsoid becomes a sphere only at this point, and the compatibility volume attains its maximum value $1$.

We now specify the compatibility region $\mathscr{C}_\xi[\mathscr I_4^{\rm sc}]$ for the class of shift-covariant probe observables. As shown in Appendix~\ref{app:hw-extremals}, in the Fourier-parametrisation \eqref{fouriercoords} one obtains
\begin{equation}\label{eq:HW-cov-region}
\mathscr C_\xi[\mathscr I^{\rm sc}_{\!4}]
=
\left\{
\P_{\mathbf r}\in\mathscr I^{\rm sc}_{\!4}
\;\middle|\;
\sum_{(n,m)\in\Omega_0}
\frac{\hat r_{nm}^{\,2}}{\lambda_{nm}}
\leq 1
\right\}.
\end{equation}
Thus the shift-covariant compatibility region is given by a \emph{single} ellipsoid.

Figure~\ref{fig:hw-ellipsoids} illustrates $\mathscr C_\xi[\mathscr I^{\rm sc}_{\!4}]$ for several
choices of the parameters $(s,\vartheta)$. A simple shift-covariant example is the depolarised-basis
observable $\P_\alpha\in\mathscr I_{\!4}^{\rm dep}$ defined in Eq.~\eqref{eq:line-observable}, corresponding to the seed distribution
\[
\mathbf r_\alpha
=
\frac14(1+3\alpha,1-\alpha,1-\alpha,1-\alpha).
\]
Its nontrivial Fourier coordinates are $(\hat r_{01},\hat r_{10},\hat r_{11})
=
\alpha(1,1,1)$, so the family appears as the diagonal line segment in Fig.~\ref{fig:hw-ellipsoids}. The exact compatibility threshold $\alpha^*(s,\vartheta)$ for Heisenberg--Weyl extremals is given in
Corollary~\ref{app:cor:dariano-app} of Appendix~\ref{app:hw-extremals}. At the SIC point one obtains $\alpha^*=1/3$, and the compatible portion of this depolarised family is maximal, in agreement with Proposition~\ref{prop:sic-maximal-line}.

\subsection{Semi-SIC extremals}
\label{sec:semi-sic-extremals}

We now consider a family of extremals whose associated MICs are known as \emph{semi-SICs}. These POVMs relax the symmetry of a SIC by allowing the effects to have different traces while retaining equal pairwise overlaps, that is for some constant $\beta$, we have ${\rm tr}[\M(n,m)\M(n',m')]=\beta$ when $(n,m)\neq (n',m')$. They were introduced and characterised in \cite[Theorem~1]{geng21}; for $r=2$ they form a one-parameter family where the overlap parameter has $\beta\in(1/16,1/12]$, the endpoint $\beta=1/12$ corresponding to a SIC.

The corresponding family of coherence matrices is
\begin{equation}\label{eq:semi-sic-xi}
\xi =
\begin{pmatrix}
1 & \gamma & \tfrac{1}{\sqrt{3}} & \tfrac{1}{\sqrt{3}} \\
\gamma & 1 &
  \frac{(\gamma-\nu\,e^{i\theta})}{\sqrt{3}} &
  \frac{(\gamma-\nu\,e^{-i\theta})}{\sqrt{3}}
  \\
\tfrac{1}{\sqrt{3}} &
  \frac{1(\gamma-\nu\,e^{-i\theta})}{\sqrt{3}} &
  1 & \tfrac{1+2e^{-2i\theta}}{3}
  \\
\tfrac{1}{\sqrt{3}} &
  \frac{(\gamma-\nu\,e^{i\theta})}{\sqrt{3}} &
  \tfrac{1+2e^{2i\theta}}{3} & 1
\end{pmatrix},
\end{equation}
where
\begin{align*}
\gamma&=\frac{2\sqrt{\beta}}{1-\sqrt{1-12\beta}}, &
\theta&=\cos^{-1}\!\left(
\frac{\sqrt{1-8\beta-\sqrt{1-12\beta}}}{4\sqrt{\beta}}
\right),\\
\nu &=\sqrt{2(1-\gamma^2)}.
\end{align*}
It follows that $e_{00}=e_{01}=e_+$ and $e_{10}=e_{11}=e_-$, with
\begin{equation}\label{eq:semi-sic-e}
e_\pm=\frac12\bigl(1\pm\sqrt{1-12\beta}\bigr),
\end{equation}
so the semi-SICs are not unbiased. More details are given in Appendix~\ref{app:semi-sic-proof}.

In particular, as shown in Appendix~\ref{app:semi-sic-proof}, the relevant matrix $H_\xi$ has a simple spectral structure. The eigenbasis can now be chosen as
\begin{equation}\label{eq:semisic-basis}
\begin{pmatrix}
\mathbf v_{00} & \mathbf v_{01} & \mathbf v_{10} & \mathbf v_{11}
\end{pmatrix}
=
\frac{1}{\sqrt 2}
\begin{pmatrix}
\sqrt{e_+} &  \sqrt{e_-} &  1 &  0 \\
\sqrt{e_+} & \sqrt{e_-} &  -1 & 0 \\
\sqrt{e_-} &  -\sqrt{e_+} & 0 & 1 \\
\sqrt{e_-} & -\sqrt{e_+} & 0 &  -1
\end{pmatrix},
\end{equation}
with the corresponding eigenvalues
\begin{align*}
\lambda_{00}&=1, & \lambda_{01}&=\frac13,&
\lambda_{10}&=a_+, &\lambda_{11}&=a_-,
\end{align*}
where $a_\pm =(1\pm2\sqrt{1-12\beta})/3$. At the SIC endpoint $\beta=1/12$, one has $a_+=a_-=1/3$, and therefore all three nontrivial eigenvalues coincide. Hence (interestingly) the semi-SIC nature is reflected in the last two eigenvalues only.

\begin{figure*}[t!]
    \centering
    \includegraphics[width=0.85\textwidth]{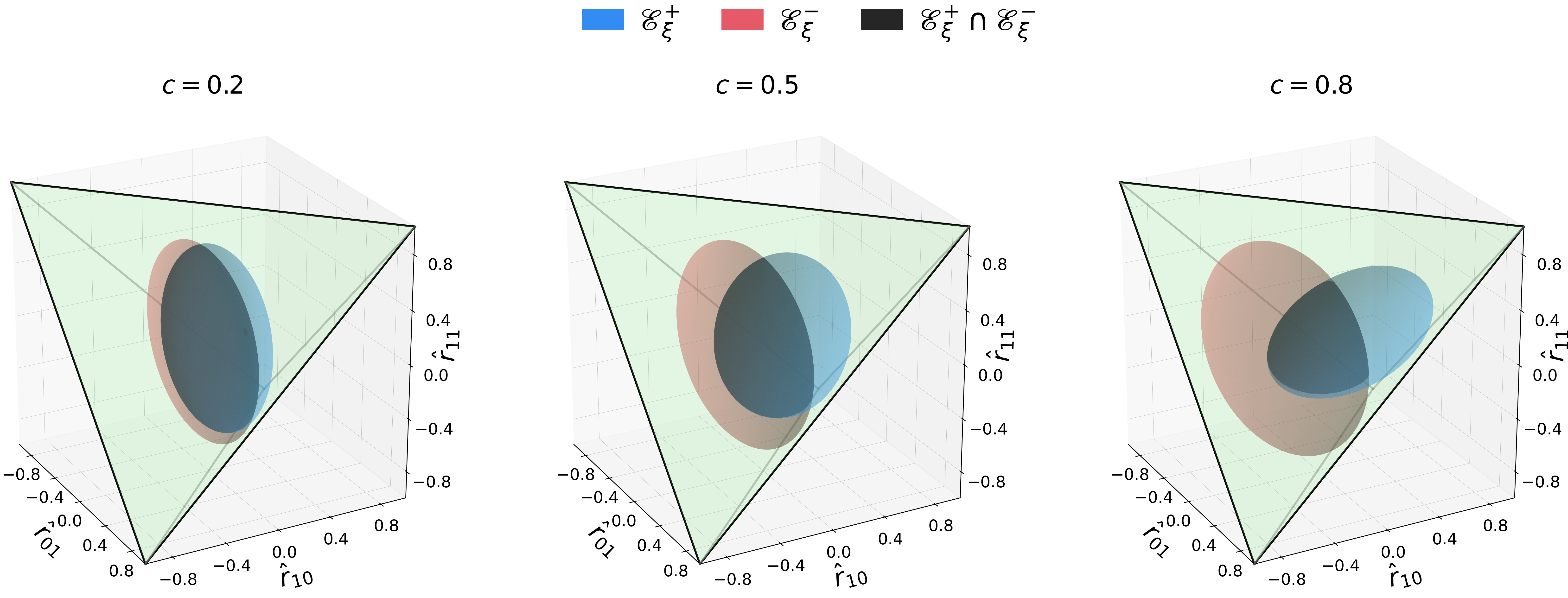}
\caption{Compatibility regions $\mathscr C_\xi[\mathscr I^{\rm sc}_{\!4}]$, restricted to shift-covariant observables, for three non-MIC extremals, shown in the Fourier coordinates $(\hat r_{01},\hat r_{10},\hat r_{11})$ of the covariant seed distribution of $\P_{\mathbf r}$. The family is parametrised by $c=\cos^2(\phi/2)\in(0,1)$, with coherence matrices given in Eq.~\eqref{eq:non-mic-xi}. The compatibility region is the intersection of two ellipsoids $\mathscr E^\pm_\xi$ defined in Eq.~\eqref{app:eq:Epm-nonmic}, arising from the two inequivalent outcome classes of the covariant observable. The blue and red transparent surfaces are $\mathscr E^+_\xi$ and $\mathscr E^-_\xi$, and the dark region is $\mathscr E^+_\xi\cap\mathscr E^-_\xi=\mathscr C_\xi[\mathscr I^{\rm sc}_{\!4}]$. The tetrahedron corresponds to the set of all valid incoherent observables.}
    \label{fig:nonmic-intersections}
\end{figure*}

The compatibility volume relative to the SIC-extremal now reads
\begin{equation}\label{SEMISIC_vol}
\vol(\xi\,;\xi_{\rm SIC},\mathscr I^{(m)}_{\!4}) =\left[ \frac{16\beta-1}{48\beta^2} \right]^{\frac{m-1}{2}}\,.
\end{equation}
To describe the full compatibility region $\mathscr C_\xi$, fix an outcome $j$ of $\P\in\mathscr I_{\!4}$ and let $\mathbf p=(p_{00},p_{01},p_{10},p_{11})^T$ be the corresponding coefficient vector. The coordinates of $\mathbf x_\xi^\P$ in Eq.~\eqref{eq:x-coordinates} are then
\begin{align}\label{eq:semi-sic-x}
x_{01} &=
\sqrt{3\beta}N^{-1}\,
(p_{00}+p_{01}-p_{10}-p_{11}),\\
x_{10} &=
\sqrt{e_+}N^{-1}\, (p_{00}-p_{01}),\,\, x_{11} =
\sqrt{e_-}\,N^{-1}(p_{10}-p_{11})\nonumber,
\end{align}
where $N=\sum_{n,m=0}^1 e_{nm} p_{nm}$. Proposition~\ref{ellipses} therefore implies that $\P\in\mathscr C_\xi$ iff, for each outcome $j$, the corresponding vector $\mathbf x^{\P}_\xi=(x_{01},x_{10},x_{11})$ lies in the ellipsoid
\[
\mathscr E_\xi
=
\left\{
\mathbf x\in\mathbb R^3
\;\middle|\;
3x_{01}^2
+\frac{x_{10}^2}{a_+}
+\frac{x_{11}^2}{a_-}
\leq 1
\right\}.
\]

We now again restrict to shift-covariant incoherent observables $\P_{\mathbf r}\in\mathscr I^{\rm sc}_{\!4}$ defined in Eq.~\eqref{eq:inc-sh} and characterise the region $\mathscr C_\xi[\mathscr I^{\rm sc}_{\!4}]\subseteq \mathscr C_\xi$ for semi-SIC extremals. Since the coordinates of $\mathbf e^\xi=(e_+,e_+,e_-,e_-)^T$ are generally non-uniform, the normalisation in Eq.~\eqref{eq:semi-sic-x} is not shift-invariant. The four outcomes therefore split into two inequivalent classes, and the covariant compatibility region is the intersection of two ellipsoids:
\begin{equation}\label{eq:semi-sic-hw}
\mathscr C_\xi[\mathscr I^{\rm sc}_{\!4}]
=
\left\{
\P_{\mathbf r}\in\mathscr I^{\rm sc}_{\!4}
\;\middle|\;
\hat{\mathbf r}\in\mathscr E^+_\xi\cap\mathscr E^-_\xi
\right\}\,,
\end{equation}
where $\hat{\mathbf r}$ again denotes the Fourier-coordinate vector \eqref{fouriercoords} of the seed $\mathbf r$. The explicit forms of $\mathscr E^\pm_\xi$ and the proof of Eq.~\eqref{eq:semi-sic-hw} are given in Corollary~\ref{app:cor:semi-sics-app} of Appendix~\ref{app:semi-sic-proof}.

Figure~\ref{fig:semisic-intersections} illustrates these intersections for several choices of $\beta$. The SIC endpoint is recovered when the two ellipsoids coincide and become the sphere of the preceding Heisenberg--Weyl SIC case.

\subsection{A non-MIC family}
\label{sec:nonmic-family}

Finally, we demonstrate that our framework is more general that the MIC-case; we consider a one-parameter family of extremals in $\mathcal E_\geq (2)$ that do \emph{not} belong to $\mathcal E_{>}(2)$. Hence, even though they define POVMs in the dilation space, they are not associated with MICs. In this case Proposition~\ref{ellipses} does not apply, and we instead use the criterion of Proposition~\ref{prop:qubit-K-from-G}.

The coherence matrices in question are given by
\begin{equation}\label{eq:non-mic-xi}
\xi=
\begin{pmatrix}
1 & 0 & \frac{1}{\sqrt2} & \frac{1}{\sqrt2}\\
0 & 1 & \frac{1}{\sqrt2} & \frac{e^{i\phi}}{\sqrt2}\\
\frac{1}{\sqrt2} & \frac{1}{\sqrt2} & 1 & \frac{1+e^{i\phi}}{2}\\
\frac{1}{\sqrt2} & \frac{e^{-i\phi}}{\sqrt2} & \frac{1+e^{-i\phi}}{2} & 1
\end{pmatrix},
\end{equation}
where $\phi\in(0,\pi)$. This family is obtained from the example in \cite{buscemi05}, corresponding to $\phi=\pi/2$, by introducing a relative phase $e^{i\phi}$ in one of the structure vectors, see Appendix~\ref{app:nonmic-family}.

For these extremals, one has $\mathbf e^\xi=(1,1,0,0)^T$, hence $\xi\in\Eobs$ but $\xi\notin\Ep$. Thus the decoherence channel does not correspond to an informationally complete POVM; instead, $\M=\Phi(\xi)$ is simply the projective basis measurement $\M(0,0)=|0\rangle\langle 0|$, $\M(0,1)=|1\rangle\langle 1|$, with the other two effects zero. However, Theorem \ref{thm:volume-comparison-sic} still applies, and we obtain the compatibility volume relative to the SIC-extremal as
\begin{equation}\label{non_MIC_vol}
\vol(\xi\,;\xi_{\rm SIC},\mathscr I^{(m)}_{\!4}) =\left[\frac 14 \sin^2\phi\right]^{\frac{m-1}{2}},
\end{equation}
which now remains strictly below $1$ as the SIC-case is not included.

In order to characterise the compatibility region $\mathscr C_\xi$, we first fix an outcome $j$, let $\mathbf p(j)=(p_{00},p_{01},p_{10},p_{11})^T$, and set $h:=\frac12(p_{00}+p_{01})$. By applying Proposition~\ref{prop:qubit-K-from-G} (see Appendix~\ref{app:nonmic-family}), the compatibility condition is equivalent to
\begin{equation}\label{eq:non-mic-criterion}
(p_{10}-h)^2+(p_{11}-h)^2-(2c-1)(p_{10}-h)(p_{11}-h)\le c\,p_{00}p_{01},
\end{equation}
where we used the convenient parameter $c=\cos^2(\phi/2)\in (0,1)$. Thus, $\P\in\mathscr C_\xi$ if and only if Eq. \eqref{eq:non-mic-criterion} holds for every outcome $j$.

To compare directly with the preceding two extremal families, we again restrict to shift-covariant incoherent observables $\P_{\mathbf r}\in\mathscr I^{\rm sc}_{\!4}$ and characterise $\mathscr C_\xi[\mathscr I^{\rm sc}_{\!4}]$. Using the Fourier-coordinates \eqref{fouriercoords} we obtain from \eqref{eq:non-mic-criterion} the ellipsoid
\begin{equation}\label{eq:non-mic-ellipsoid}
\begin{aligned}
&c\left(\hat r_{01}+\hat r_{11}\right)^2
+\frac{c}{1-c}\left(\hat r_{01}-\hat r_{11}\right)^2 \\
&\quad +(4-c)\left(\hat r_{10}-\frac{c}{4-c}\right)^2
\le \frac{4c}{4-c}\,,
\end{aligned}
\end{equation}
as shown in Appendix~\ref{app:nonmic-family}.

Thus, the condition for the seed effect is an ellipsoid centred at $(0,\frac{c}{4-c},0)$ in the coordinates $(\hat r_{01},\hat r_{10},\hat r_{11})$. Its principal directions in the $(\hat r_{01},\hat r_{11})$-plane are the diagonals $\hat r_{01}=\pm\hat r_{11}$, and the ratio of the corresponding semi-axis lengths is $\sqrt{1-c}$. Hence this section becomes increasingly elongated as $c$ grows, while the centre shifts monotonically in the positive $\hat r_{10}$-direction.

Under translation by $(n,m)$, the nontrivial Fourier coefficients transform as
\[
(\hat r_{01},\hat r_{10},\hat r_{11})\mapsto
\bigl((-1)^m \hat r_{01},\;(-1)^n \hat r_{10},\;(-1)^{n+m}\hat r_{11}\bigr).
\]
It follows that the four outcomes split into two inequivalent classes, namely $(0,0)\sim(0,1)$ and $(1,0)\sim(1,1)$. Therefore, as in the semi-SIC case, the covariant compatibility region is the intersection of two ellipsoids:
\begin{equation}\label{eq:non-mic-hw}
\mathscr C_\xi[\mathscr I^{\rm sc}_{\!4}]
=
\left\{
\P_{\mathbf r}\in\mathscr I^{\rm sc}_{\!4}
\;\middle|\;
\hat{\mathbf r}\in\mathscr E^+_\xi\cap\mathscr E^-_\xi
\right\}.
\end{equation}
The ellipsoids $\mathscr E^\pm_\xi$ are given explicitly in Corollary~\ref{app:cor:nonmic-hw-app} of Appendix~\ref{app:nonmic-family}; they are obtained from Eq.~\eqref{eq:non-mic-ellipsoid} by applying the above sign changes to the translated seed effects. We illustrate the compatibility region $\mathscr C_\xi[\mathscr I^{\rm sc}_{\!4}]$ for several values of $\phi$ in Fig.~\ref{fig:nonmic-intersections}.

To conclude the results of this section, we have characterised the shift-covariant region $\mathscr C_\xi[\mathscr I^{\rm sc}_{\!4}]$ for three distinct families: Heisenberg--Weyl MICs, semi-SICs, and certain non-MIC extremals. Together, these examples illustrate the different compatibility geometries induced by extremal decoherence channels. For the Heisenberg--Weyl MIC family (Fig.~\ref{fig:hw-ellipsoids}), the shift-covariant compatibility region is a single centred ellipsoid in Fourier space, reducing to a sphere at the SIC point. For the semi-SIC and non-MIC families (Figs.~\ref{fig:semisic-intersections} and \ref{fig:nonmic-intersections}, respectively), the reduced symmetry breaks the equivalence of the four outcomes and leads instead to the intersection of two ellipsoids. In the semi-SIC case the ellipsoids are shifted in opposite directions along the $\hat r_{10}$-axis; as the SIC point is approached, the shifts vanish and the two ellipsoids merge into the SIC sphere. In the non-MIC case the two ellipsoids are also shifted away from the origin, but with a different orientation and eccentricity, producing a distinct intersection geometry. The compatibility volumes of each family, relative to an extremal-SIC channel, are shown in Fig. \ref{fig:volumes} for the full set of four-outcome incoherent observables.

\begin{figure}
\includegraphics[width=0.48\textwidth]{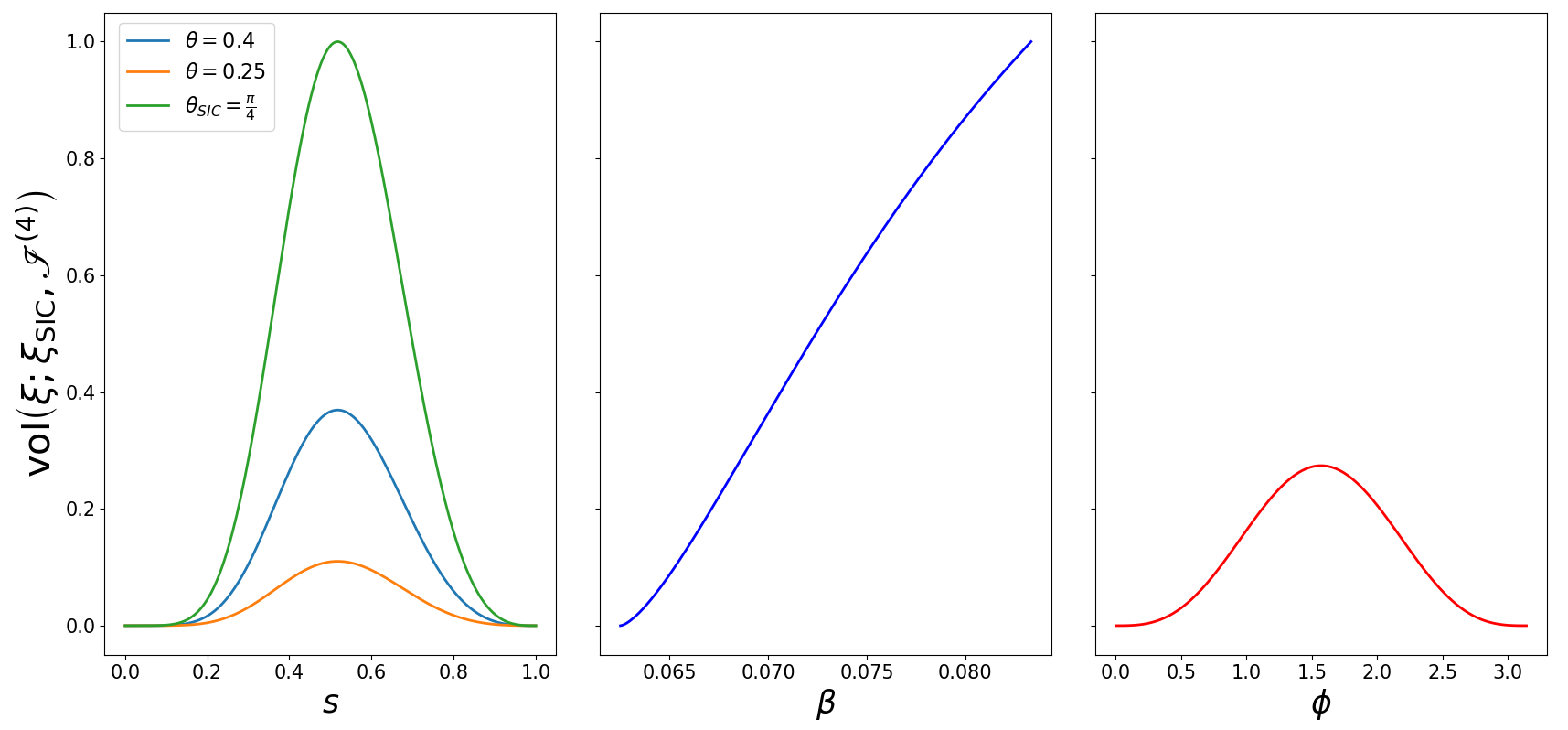}
\caption{The compatibility volume of $\xi$ relative to a SIC-extremal, namely $\vol(\xi\,;\xi_{\rm SIC},\mathscr I^{(4)}_{\!4})$, for the HW family (left), the semi-SIC family (centre), and the non-MIC family (right), as functions of the relevant parameters. The volume is computed for the restricted compatibility region $\mathscr C_\xi[\mathscr I^{(4)}_{\!4}]$ of four-outcome incoherent observables, with explicit formulae given in Eqs. \eqref{HWMIC_vol}, \eqref{SEMISIC_vol} and \eqref{non_MIC_vol}. For the HW and semi-SIC families, the maximum value $1$ is reached exactly at the parameter values corresponding to SIC-extremals.}
    \label{fig:volumes}
\end{figure}

\section{Summary and outlook} \label{sec:conclusion}

In this work we have studied how quantum noise in the form of decoherence causes loss of measurement incompatibility. In particular, we characterised the compatibility region $\mathscr{C}_\xi$ induced by the channel, namely the set of incoherent observables that become jointly measurable with every observable after Schur multiplication by $\xi$. 

Our focus has been on extremal channels, where membership in $\mathscr C_\xi$ simplifies from a semidefinite feasibility problem to a direct positivity check of an explicit family of operators on the dilation space. This simplification comes from the rank-one operator frame associated with the extremal channel, and leads to connections with QBism and MIC state spaces. We introduced a scalar quantifier of the incompatibility loss through the volume of $\mathscr C_\xi$, which for maximal-rank extremals is controlled by the Gram matrix of the associated rank-one operator frame. This gives a quantitative way to compare decoherence effects, and distinguishes SIC-extremals as those that destroy the most incompatibility. As a consequence, the SIC existence problem can be reformulated as a joint measurability problem. We have illustrated the compatibility regions for several families of channels when restricted to covariant incoherent observables. These simplify to single ellipsoids (Fig.~\ref{fig:hw-ellipsoids}) or the intersection of ellipsoids (Figs.~\ref{fig:semisic-intersections}--\ref{fig:nonmic-intersections}). A comparison of their volumes can be found in Fig.~\ref{fig:volumes}.

A central conceptual finding is that incompatibility loss is not determined by the damping rates alone. Extremal and non-extremal channels with identical off-diagonal damping amplitudes can produce markedly different compatibility regions, particularly as the system size grows. Thus, the phase geometry encoded by the coherence matrix and reflected in its associated operator frame---not merely the magnitudes of its entrywise damping factors---is operationally significant for measurement incompatibility.

The framework developed here fits naturally within the broader theory of dynamical quantum resources \cite{saxena20}. In particular, it supplements the resource theory of incompatibility preservability \cite{hsieh25} by providing an analytically tractable geometric characterisation of \emph{how} incompatibility is preserved. The region $\mathscr P_\xi=\mathscr I_{\!d}\setminus\mathscr C_\xi$ identifies exactly those incoherent observables whose incompatibility survives decoherence, while its relative volume quantifies how prevalent this preservability is. This approach can distinguishes channels that are indistinguishable by scalar robustness quantifiers, thereby providing a finer operational characterisation of incompatibility preservability under decoherence.

Beyond these connections, there are several natural directions for future research. While extremality has previously appeared in the context of joint measurements \cite{haapasalo14,guerini18,carmeli19b}, our work highlights a complementary role for extremality on the channel side. It would be interesting to understand how far this simplifying role extends beyond decoherence, and whether similar mechanisms arise for other classes of noisy dynamics. Another direction is to go beyond the maximal-rank extremal case considered here and investigate how the rank of the coherence matrix affects the loss of incompatibility. Time-dependent decoherence provides a further natural setting: for suitably divisible dynamics, the compatibility region expands monotonically, so departures from monotonicity could be investigated as a signature of non-Markovianity. Similar questions could also be pursued for other measurement-based nonclassical resources.

\section*{Acknowledgements}

D.M. acknowledges support from PNRR MUR Project No. PE0000023-NQSTI and from INFN through the project “QUANTUM”. W.T. acknowledges support from Aberystwyth University AberDoc Scholarship.

\bibliographystyle{apsrev4-2}
\bibliography{extremals_refs}

\appendix

\section{Proof of Prop.~\ref{prop:equiv}}\label{secA:prop-equiv}

\renewcommand{\theproposition}{\ref{prop:equiv}}
\begin{proposition}
The restriction of $\Phi$ to $\Ep$ is bijective onto $\MIC$, and satisfies
\begin{equation}\label{obsGIO}
{\rm tr}[\M(n)\M(m)]=e^\xi_n e^{\xi}_m \,(G_\xi)_{nm}
\end{equation}
for all $\xi\in \Ep$, $\M=\Phi(\xi)$, and $n,m\in \{1,\ldots, r^2\}$.
\end{proposition}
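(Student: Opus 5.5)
The proof splits into three parts: that $\Phi(\Ep)\subseteq\MIC$, that the restriction is bijective, and the overlap formula \eqref{obsGIO}. The formula is immediate. For $\xi\in\Ep$ with structure vectors $|a_n\rangle$, we have $\M(n)=e^\xi_n\Pi_n^\xi$, so
\[
\tr{\M(n)\M(m)}=e^\xi_ne^\xi_m\tr{\Pi_n^\xi\Pi_m^\xi}=e^\xi_ne^\xi_m(G_\xi)_{nm},
\]
using \eqref{eq:Gxi}. Moreover, $\M$ is a POVM by \eqref{compl}. All $r^2$ effects are nonzero because $\mathbf e^\xi>0$, and they span $\mathcal B(\mathcal K)$ by extremality \eqref{extremality}. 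Hence $\M$ is a rank-one MIC, so $\Phi$ maps $\Ep$ into $\MIC$.

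For surjectivity, take a rank-one MIC $\M$ on $\mathcal K\simeq\mathbb C^r$ with $r^2$ nonzero effects. Write $\M(n)=w_n|a_n\rangle\langle a_n|$ with unit vectors $|a_n\rangle$ and $w_n=\tr{\M(n)}>0$, and set $\xi_{nm}=\langle a_n|a_m\rangle$. Then $\xi\in\mathfrak C_{r^2}$. The projections $\Pi_n=|a_n\rangle\langle a_n|$ span $\mathcal B(\mathcal K)$ because the $\M(n)$ do, so $\xi$ is extremal by \eqref{extremality}. Its rank is $r$, since the $|a_n\rangle$ span $\mathcal K$ (otherwise the $\Pi_n$ could not span $\mathcal B(\mathcal K)$). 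Thus $\xi\in\ext(r)$. Since $\sum_n w_n\Pi_n=\id$ and the expansion of $\id$ in the basis $\{\Pi_n\}$ is unique, we get $\mathbf e^\xi=\mathbf w>0$. Therefore $\xi\in\Ep$ and $\Phi(\xi)=\M$.

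For injectivity, suppose $\Phi(\xi)$ and $\Phi(\xi')$ are unitarily equivalent. Choose representatives with $e^\xi_n\Pi_n^\xi=U e^{\xi'}_n\Pi_n^{\xi'}U^*$ for all $n$ and some unitary $U$. Taking traces gives $e^\xi_n=e^{\xi'}_n>0$, hence $\Pi_n^\xi=U\Pi_n^{\xi'}U^*$. It follows that $|a_n\rangle=z_nU|a'_n\rangle$ for some phases $z_n$, so $\xi_{nm}=\bar z_nz_m\xi'_{nm}$. That is, $\xi=\xi_0\circ\xi'$ with $\xi_0$ rank-one, which means $\xi\sim\xi'$.

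The main obstacle is bookkeeping of the outcome labels. A MIC is an indexed family, so "equivalence classes" in $\MIC$ must be read with the labelling $n$ fixed, matching the fixed incoherent basis labels. If one instead allowed relabelling, injectivity would hold only up to a simultaneous permutation of the incoherent basis, and I would note that explicitly. I will adopt the fixed-label convention, consistent with how $\Phi$ was defined.
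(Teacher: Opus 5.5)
Your proof is correct and follows essentially the same route as the paper. Informational completeness comes from extremality together with $\mathbf e^\xi>0$; injectivity follows by taking traces to get $\mathbf e^\xi=\mathbf e^{\xi'}$ and then matching the rank-one projections up to phases; surjectivity comes from reading off unit vectors from the rank-one effects and using uniqueness of the expansion \eqref{compl}; and the overlap formula is immediate from $\M(n)=e^\xi_n\Pi_n^\xi$. Your explicit check that $\rank\xi=r$ and your remark on keeping the outcome labels fixed are small points the paper leaves implicit.
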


\begin{proof}
Let $\xi\in \Ep$ and $\M=\Phi(\xi)$. Since $\xi$ is extremal and ${\mathbf e}^\xi>0$, we have ${\rm span}\, \{\M(n)\}={\rm span}\, \{e^\xi_n|a_n\rangle\langle a_n|\}={\rm span}\, \{|a_n\rangle\langle a_n|\} =\mathcal{B(K)}$, so $\M$ is informationally complete. Hence $\M\in \MIC$.

In order to prove injectivity, let $\xi,\xi'\in \Ep$, and choose unit vectors $a_n$, $a_n'$ for the (representative) matrices $\xi,\xi'$, respectively. Now if $\Phi(\xi) = \Phi(\xi')$, then there is a unitary $U$ such that $e_n^\xi U|a_n\rangle\langle a_n|U^* =e_n^{\xi'} |a_n'\rangle\langle a_n'|$ for all $n$. By taking the trace we find ${\mathbf e}^\xi={\mathbf e}^{\xi'}>0$, so $U|a_n\rangle\langle a_n|U^* =|a_n'\rangle\langle a_n'|$ for all $n$, which implies the existence of phase factors $v_n$ such that $Ua_n = v_n a'_n$ for all $n$. Hence $\xi\sim \xi'$. This shows that $\Phi$ is injective when restricted to $\Ep$.

To prove surjectivity, take an $\M\in \MIC$. Since each $\M(n)$ has rank 1, we can pick a unit vector $a_n\in {\rm ran}\, \M(n)$ for each $n\in \{1,\ldots, r^2\}$, to write $\M(n) = {\rm tr}[\M(n)] |a_n\rangle\langle a_n|$, where ${\rm tr}[\M(n)] >0$. We then define a coherence matrix $\xi$ by $\xi_{nm}=\langle a_n|a_m\rangle$. Since $\M$ is informationally complete and ${\rm tr}[\M(n)] >0$ for each $n$, we conclude that $\xi$ is extremal, and by the normalisation of the POVM $\M$ we get $\id_r = \sum_n \M(n) = \sum_n {\rm tr}[\M(n)] |a_n\rangle\langle a_n|$, which by the uniqueness of coefficients in \eqref{compl} shows that $e_n^\xi={\rm tr}[\M(n)]>0$ for all $n$. Hence $\xi\in \Ep$, and since $\M(n) = e^\xi_n |a_n\rangle\langle a_n|$ we have $\M = \Phi(\xi)$. This shows that $\Phi$ restricted to $\Ep$ is surjective onto $\MIC$, and the property \eqref{obsGIO} follows as $\M(n) = e^\xi_n |a_n\rangle\langle a_n|$ for all $n$.
\end{proof}

\section{Proof of Prop.~\ref{prop:sic-maximal-line}}
\label{secA:SIC-line-equiv}

The following result gives the sharp compatibility threshold for the depolarised incoherent basis observable $\P_\alpha$ in the restricted compatibility region $\mathscr C_\xi[\mathscr I^{\rm dep}_{\!d}]$ for an extremal $\xi\in\ext(r)$.

\renewcommand{\theproposition}{\ref{prop:sic-maximal-line}}
\begin{proposition}
Let $\xi\in\ext(r)$. Then
\begin{equation}\label{eq:sic-line-bound}
\alpha^*(\xi)\le\frac1{r+1}\,,
\end{equation}
with equality if and only if $\xi$ is SIC-extremal.
\end{proposition}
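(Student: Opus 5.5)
The plan is to apply Theorem~\ref{thm:extr} directly to the depolarised line $\P_\alpha$. It shows that positivity of the unique candidate POVM $\F_\xi^{\P_\alpha}$ forces a trace inequality, and summing this inequality over outcomes gives the bound. The equality case then pins down the Gram matrix $G_\xi$, and Proposition~\ref{prop:sic-extemal-equiv} finishes the argument.

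\textbf{Setting up the candidate POVM.} For $\P_\alpha$ the coefficient vectors are $\mathbf p(j)=\alpha\,\boldsymbol\delta_j+\frac{1-\alpha}{d}\mathbf 1$, where $d=r^2$ and $\boldsymbol\delta_j$ is the $j$-th standard basis vector. Using $G_\xi^{-1}\mathbf 1=\mathbf e^\xi$ and Eq.~\eqref{compl}, the candidate effects are
\[
\F(j):=\F_\xi^{\P_\alpha}(j)=\alpha\sum_n (G_\xi^{-1})_{nj}\Pi_n^\xi+\frac{1-\alpha}{d}\id_r .
\]
By construction they satisfy
\[
\tr{\Pi^\xi_m\F(j)}=p(j|m)=\alpha\delta_{mj}+\frac{1-\alpha}{d},\qquad
\tr{\F(j)}=\alpha e^\xi_j+\frac{1-\alpha}{r}.
\]
The region $\mathscr C_\xi[\mathscr I^{\rm dep}_{\!d}]$ is convex and closed and contains $\alpha=0$, so the threshold $\alpha^*(\xi)$ is attained. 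By Theorem~\ref{thm:extr}, $\P_\alpha\in\mathscr C_\xi$ if and only if $\F(j)\ge0$ for all $j$.

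\textbf{The bound.} Suppose $\F(j)\ge0$. Since $\Pi_j^\xi$ is a rank-one projection, $\tr{\Pi_j^\xi\F(j)}\le\tr{\F(j)}$, that is,
\[
\alpha+\frac{1-\alpha}{d}\le \alpha e^\xi_j+\frac{1-\alpha}{r}.
\]
Summing over $j=1,\dots,d$ and using $\sum_j e_j^\xi=r$ from Eq.~\eqref{norm}, the left side sums to $d\alpha+1-\alpha$. The right side sums to $\alpha r+(1-\alpha)r=r$. Hence $\alpha(d-1)\le r-1$, so $\alpha\le (r-1)/(r^2-1)=1/(r+1)$. Applying this to $\alpha=\alpha^*(\xi)$ gives the inequality \eqref{eq:sic-line-bound}.

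\textbf{Equality case.} If $\alpha^*(\xi)=1/(r+1)$, every one of the summed inequalities must be tight. So $\langle a_j|\F(j)|a_j\rangle=\tr{\F(j)}$ with $\F(j)\ge0$, which forces $\F(j)=c_j\Pi_j^\xi$ with $c_j=p(j|j)=\alpha+\frac{1-\alpha}{d}$. Then for every $m\neq j$,
\[
c_j(G_\xi)_{mj}=\tr{\Pi_m^\xi\F(j)}=\frac{1-\alpha}{d}.
\]
At $\alpha=1/(r+1)$ we have $c_j=1/r$ and $(1-\alpha)/d=1/(r(r+1))$, so $(G_\xi)_{mj}=1/(r+1)$ for all $m\ne j$. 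Proposition~\ref{prop:sic-extemal-equiv} then shows $\xi$ is SIC-extremal.

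Conversely, suppose $\xi$ is SIC-extremal and set $\alpha=1/(r+1)$. The positive operators $\frac1r\Pi_j^\xi$ reproduce the probabilities $p(j|m)=\frac1r|\langle a_m|a_j\rangle|^2$, which equal those of $\P_{1/(r+1)}$. By the uniqueness $\mathbf q(j)=G_\xi^{-1}\mathbf p(j)$ noted before Theorem~\ref{thm:extr}, these operators coincide with $\F(j)$, so $\F(j)\ge0$. Hence $\P_{1/(r+1)}\in\mathscr C_\xi$, which gives $\alpha^*(\xi)\ge 1/(r+1)$ and therefore equality.

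The main obstacle is finding the right inequality to sum. The operator-level positivity of the dual-frame combination is hard to control directly. Comparing the diagonal entry $\tr{\Pi_j^\xi\F(j)}$ with $\tr{\F(j)}$ is what makes the sum collapse through $\sum_j e^\xi_j=r$, and it also yields the rigid rank-one structure needed in the equality case.
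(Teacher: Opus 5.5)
Your proof is correct and follows essentially the same route as the paper. Both arguments sum the trace inequality $\tr{\Pi_j^\xi\F(j)}\le\tr{\F(j)}$ over $j$, use rank-one saturation to force $\F(j)=\Pi_j^\xi/r$ in the equality case, and take $\Pi_j^\xi/r$ as the explicit POVM for the converse. The differences are cosmetic: you work with the unique candidate from Theorem~\ref{thm:extr} and finish via Proposition~\ref{prop:sic-extemal-equiv}, whereas the paper uses the dilation POVM of Theorem~\ref{thm:gii} and reads off the SIC normalisation $\sum_j\Pi_j^\xi/r=\id$ directly.
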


\begin{proof}
Let $\xi_{mn}=\ip{a_m}{a_n}$, and write $\Pi_n^\xi=|a_n\rangle\langle a_n|$ for the structure projectors on $\mathcal K\simeq\mathbb C^r$. Suppose $\P_\alpha\in\mathscr C_\xi[\mathscr I^{\rm dep}_{\!d}]$. By Theorem~\ref{thm:gii}, there exists a POVM $\F$ on $\mathcal K$ such that $\tr{\F(j)\Pi_n^\xi}=p_\alpha(j|n)$ for all $j,n=1,\ldots,r^2$. Since $\Pi^\xi_j\le\id_{\mathcal K}$ and $\F(j)\ge0$, we have
\[
\tr{\F(j)}\ge\tr{\F(j)\Pi^\xi_j}
=
\alpha+\frac{1-\alpha}{r^2}.
\]
Summing over $j$ gives
\[
r=\sum_{j=1}^{r^2}\tr{\F(j)}\ge r^2\left(\alpha+\frac{1-\alpha}{r^2}\right)=1+(r^2-1)\alpha .
\]
Hence $\alpha\le 1/(r+1)$, and therefore $\alpha^*(\xi)\le 1/(r+1)$.

Suppose $\alpha^*(\xi)=1/(r+1)$. Since $\mathscr C_\xi[\mathscr I^{\rm dep}_{\!d}]$ is closed, $\P_{1/(r+1)}\in\mathscr C_\xi[\mathscr I^{\rm dep}_{\!d}]$. For this endpoint the trace inequality above is saturated for every $j$, and hence
\[
\tr{\F(j)}=\tr{\F(j)\Pi^\xi_j}=\frac1r.
\]
Thus $\F(j)$ has no support outside $\operatorname{ran}\Pi^\xi_j$. Since $\Pi^\xi_j$ is rank one, $\F(j)=\lambda_j\Pi^\xi_j$, and taking the trace gives $\lambda_j=1/r$. Hence $\F(j)=\Pi^\xi_j/r$. For $n\neq j$,
\[
\frac{1}{r(r+1)}
=p_{1/(r+1)}(j|n)=\tr{\F(j)\Pi_n^\xi}=\frac1r\tr{\Pi^\xi_j\Pi_n^\xi}.
\]
Therefore $\tr{\Pi^\xi_j\Pi_n^\xi}=1/(r+1)$ for all $j\neq n$. Since $\sum_j\F(j)=\id$, we also have $\sum_j\Pi^\xi_j/r=\id$. Hence $\{\Pi^\xi_j/r\}_{j=1}^{r^2}$ is a SIC-POVM, and $\xi$ is SIC-extremal.

Conversely, if $\xi$ is SIC-extremal, then $\F(j)=\Pi^\xi_j/r$ is a POVM and satisfies $\tr{\F(j)\Pi_n^\xi}=p_{1/(r+1)}(j|n)$. Hence
$\P_{1/(r+1)}\in\mathscr C_\xi[\mathscr I^{\rm dep}_{\!d}]$, and the bound above gives $\alpha^*(\xi)=1/(r+1)$.
\end{proof}

\section{Proof of Lemma~\ref{lem:det-bound}}\label{app:det-bound}

We now provide an essential result leading to Theorem~\ref{thm:volume-comparison-sic}, in which we show that SIC-extremals are distinguished as those that maximise the compatibility region $\mathscr{C}_\xi$ among all extremals $\xi\in\ext(r)$. The following result is a reformulation, in the language of extremal coherence matrices, of a bound found in \cite{debrota20} (see also \cite{kwapisz19}).

\renewcommand{\thelemma}{\ref{lem:det-bound}}
\begin{lemma}
Let $\xi\in\ext(r)$. Then
\[
\det G_\xi
\le
r\left(\frac{r}{r+1}\right)^{r^2-1}\,,
\]
with equality if and only if $\xi$ is SIC-extremal.
\end{lemma}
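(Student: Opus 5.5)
The plan is to bound $\det G_\xi$ with AM--GM on the eigenvalues of $G_\xi$, after splitting off the direction controlled by the trace constraint. Take structure vectors $|a_n\rangle$ and write $\Pi_n=\Pi_n^\xi$. By Proposition~\ref{Dinv}, $G_\xi$ is positive definite with unit diagonal, so $\operatorname{tr} G_\xi=r^2$. Two further pieces of information are available:
\[
\mathbf 1^T G_\xi\mathbf 1=\operatorname{tr}\Bigl[\bigl(\textstyle\sum_n\Pi_n\bigr)^2\Bigr]=:\operatorname{tr}[S^2],\qquad \operatorname{tr}S=r^2,
\]
where $S=\sum_n\Pi_n\ge0$ acts on $\mathcal K\simeq\mathbb C^r$. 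Cauchy--Schwarz on the eigenvalues of $S$ gives $\operatorname{tr}[S^2]\ge (\operatorname{tr}S)^2/r=r^3$, with equality iff $S=r\id$. Consequently the Rayleigh quotient satisfies $\mathbf 1^TG_\xi\mathbf 1/\|\mathbf 1\|^2\ge r$, so the largest eigenvalue obeys $\lambda_{\max}(G_\xi)\ge r$.

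Next I estimate the determinant. Order the eigenvalues $\lambda_1\ge\cdots\ge\lambda_{r^2}>0$ with $\sum_k\lambda_k=r^2$. AM--GM on the remaining $r^2-1$ eigenvalues gives
\[
\det G_\xi\le \lambda_1\Bigl(\frac{r^2-\lambda_1}{r^2-1}\Bigr)^{r^2-1}=:f(\lambda_1).
\]
The function $f$ is decreasing for $\lambda_1\ge r^2/(r^2)=1$: its log-derivative is $1/\lambda_1-(r^2-1)/(r^2-\lambda_1)$, which is negative when $\lambda_1>1$. Since $\lambda_1\ge r\ge1$, we get $\det G_\xi\le f(r)=r\bigl(\tfrac{r^2-r}{r^2-1}\bigr)^{r^2-1}=r\bigl(\tfrac{r}{r+1}\bigr)^{r^2-1}$, which is the claimed bound. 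For $r=1$ the statement is trivial, so I assume $r\ge2$.

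The main obstacle is the equality case. Equality forces three things: (a) $\lambda_1=r$, with $\mathbf 1$ attaining the Rayleigh bound, so $\mathbf 1$ is an eigenvector and $G_\xi\mathbf 1=r\mathbf 1$; (b) $S=r\id$; (c) all the other eigenvalues equal $(r^2-r)/(r^2-1)=r/(r+1)$. These combine to give $G_\xi=\tfrac{r}{r+1}\id+\tfrac{1}{r+1}\mathbf 1\mathbf 1^T$ on the whole space, because the spectral decomposition is fixed by the eigenvector $\mathbf 1/r$ with eigenvalue $r$ and the constant eigenvalue on its orthogonal complement. Its off-diagonal entries are then $1/(r+1)$, and Proposition~\ref{prop:sic-extemal-equiv} shows that $\xi$ is SIC-extremal.

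Conversely, for a SIC-extremal, $G_\xi=\tfrac{r}{r+1}\id+\tfrac{1}{r+1}\mathbf 1\mathbf 1^T$ has eigenvalues $r$ (once) and $r/(r+1)$ (with multiplicity $r^2-1$), so equality holds. The care needed is in justifying (a): when $\lambda_1=r$, the inequality chain $r\le \mathbf 1^TG\mathbf 1/r^2\le\lambda_1=r$ forces $\mathbf 1$ into the top eigenspace. If that top eigenspace were degenerate, (c) would be violated, since $r\neq r/(r+1)$.
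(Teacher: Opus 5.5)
Your proof is correct and follows the same route as the paper's: bound the top eigenvalue by $\lambda_1\ge r$, apply AM--GM to the remaining $r^2-1$ eigenvalues, and use that $f(\lambda_1)=\lambda_1\bigl(\frac{r^2-\lambda_1}{r^2-1}\bigr)^{r^2-1}$ is decreasing. The one difference is that the paper takes both $\lambda_1\ge r$ and the characterisation of the equality spectrum from \cite[Lemma~3]{debrota20}, whereas you derive them directly, which makes the argument self-contained: you bound the Rayleigh quotient of $\mathbf 1$ via $\tr{S^2}\ge r^3$, then reconstruct $G_\xi=\frac{r}{r+1}\mathbb I+\frac{1}{r+1}\mathbf 1\mathbf 1^T$ and invoke Proposition~\ref{prop:sic-extemal-equiv}.
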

\begin{proof}
Let $\xi_{mn}=\ip{a_m}{a_n}$ and $\Pi_n^\xi=|a_n\rangle\langle a_n|$. Since $(G_\xi)_{nm}
=|\langle a_n|a_m\rangle|^2=\tr{\Pi_n^\xi\Pi_m^\xi}$, the matrix $G_\xi$ is the Hilbert--Schmidt Gram matrix of the rank-one projectors $\{\Pi_n^\xi\}_{n=1}^{r^2}$. Since $\xi\in\ext(r)$, this Gram
matrix is positive definite by Proposition~\ref{Dinv}. Let $\lambda_1\ge\cdots\ge\lambda_{r^2}>0$ be the eigenvalues of $G_\xi$. By \cite[Lemma~3]{debrota20}, applied to the normalised positive semidefinite operator basis $\{\Pi_n^\xi\}_{n=1}^{r^2}$, one has $\lambda_1\ge r$, with equality in the resulting SIC spectrum if and only if the projectors form a SIC. 

Furthermore, since $\sum_{k=1}^{r^2}\lambda_k=\tr{G_\xi}=r^2$ we have $\sum_{k>1}\lambda_k=r^2-\lambda_1$. By the arithmetic--geometric mean inequality,
\[
\prod_{k>1}\lambda_k
\le
\left(\frac{r^2-\lambda_1}{r^2-1}\right)^{r^2-1}.
\]
Therefore
\[
\det G_\xi
\le
\lambda_1
\left(\frac{r^2-\lambda_1}{r^2-1}\right)^{r^2-1}.
\]
The right-hand side is decreasing for $\lambda_1\ge r$, and is therefore maximised at $\lambda_1=r$. This gives
\[
\det G_\xi
\le
r\left(\frac{r^2-r}{r^2-1}\right)^{r^2-1}
=
r\left(\frac{r}{r+1}\right)^{r^2-1}.
\]
Equality requires $\lambda_1=r$ and $\lambda_2=\cdots=\lambda_{r^2}=r/(r+1)$, which is exactly the equality case of \cite[Lemma~3]{debrota20}. Hence equality holds if and only if $\{r^{-1}\Pi_n^\xi\}_{n=1}^{r^2}$ forms a SIC.
\end{proof}

\section{Incompatibility preservability}
\label{app:resource-preservability}

In the resource theory of \emph{incompatibility preservability} introduced in \cite{hsieh25}, the resourcefulness of a channel is quantified by the robustness
\begin{equation}
\label{eq:IP-robustness}
R_{\rm IP}(\Lambda)
:=
\min\left\{
s\geq0\;\middle|\;
\frac{\Lambda+s\Phi}{1+s}\in\mathrm{IB},
\quad
\Phi\ \text{a channel}
\right\},
\end{equation}
where $\mathrm{IB}$ denotes the set of incompatibility-breaking channels and $\Phi$ is an arbitrary auxiliary channel on the same system. The optimisation ranges over all auxiliary channels and imposes the requirement that the resulting mixture be incompatibility-breaking. 

We now consider a simpler version, adapted to our decoherence setting, in which the auxiliary channel is required to be a decoherence channel. Since the completely dephasing channel is the unique incompatibility-breaking channel among decoherence channels, we define the \emph{Schur robustness} by
\begin{equation}
\label{eq:Schur-robustness-def}
\mathcal R^{\rm Schur}_{\rm IP}(\xi)
:=
\min\left\{
t\geq0\,\middle|\,
\frac{\xi+t\zeta}{1+t}=\mathbb I
\text{ for some }\zeta\in\mathfrak C_d
\right\},
\end{equation}
where $\mathbb I$ is the coherence matrix of the completely dephasing channel. This restricted robustness need not coincide with $R_{\rm IP}$, but it is explicitly computable in terms of the maximum eigenvalue $\lambda_{\max}(\xi)$ of the coherence matrix, and provides a useful comparison with the compatibility volume.

\begin{proposition}
\label{prop:Schur-robustness}
For every $\xi\in\mathfrak C_d$,
\begin{equation}
\label{eq:Schur-robustness}
\mathcal R^{\rm Schur}_{\rm IP}(\xi)
=
\lambda_{\max}(\xi)-1.
\end{equation}
If $\xi\in\ext(r)$, then
\begin{equation}
\label{eq:Schur-robustness-bound}
\mathcal R^{\rm Schur}_{\rm IP}(\xi)\geq r-1,
\end{equation}
with equality if and only if
$\{\Pi_n^\xi/r\}_{n=1}^{r^2}$ is an unbiased rank-one MIC.
\end{proposition}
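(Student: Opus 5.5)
The plan is to solve the defining condition directly. Since $\mathbb I=\tfrac{\xi+t\zeta}{1+t}$, we must have $\zeta=\bigl((1+t)\mathbb I-\xi\bigr)/t$. This matrix automatically has unit diagonal, because $\xi_{nn}=1$. So the only constraint is $\zeta\ge0$, i.e.\ $(1+t)\mathbb I\ge\xi$, which is equivalent to $t\ge\lambda_{\max}(\xi)-1$.

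Since $\xi\ge0$ has unit diagonal, $\lambda_{\max}(\xi)\ge\tr{\xi}/d=1$, so this lower bound is nonnegative and the minimum is attained at $t=\lambda_{\max}(\xi)-1$. The degenerate case $t=0$ occurs only when $\lambda_{\max}=1$. Then all eigenvalues equal $1$, so $\xi=\mathbb I$, and any $\zeta$ works. This proves Eq.~\eqref{eq:Schur-robustness}.

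For the bound on $\ext(r)$, I would use the Gram representation. The nonzero spectrum of $\xi$ coincides with that of the frame operator $S:=\sum_n|a_n\rangle\langle a_n|=\sum_n\Pi_n^\xi$ on $\mathcal K$, because $\xi=A^*A$ and $S=AA^*$ with $A=(|a_1\rangle,\ldots,|a_{r^2}\rangle)$. Hence $\lambda_{\max}(\xi)=\lambda_{\max}(S)\ge\tr{S}/r=r^2/r=r$, which gives $\mathcal R^{\rm Schur}_{\rm IP}(\xi)\ge r-1$.

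Equality holds iff $S$ has all eigenvalues equal to $r$, i.e.\ $S=r\id$, i.e.\ $\sum_n\Pi_n^\xi/r=\id$.

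For the equivalence with unbiased MICs:

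\emph{Forward direction.} If $\sum_n\Pi_n^\xi/r=\id$, the uniqueness of the expansion \eqref{compl} in the operator basis $\{\Pi_n^\xi\}$ gives $\mathbf e^\xi=\mathbf 1/r>0$. So $\xi\in\Ep$, and by Proposition~\ref{prop:equiv} the family $\{\Pi_n^\xi/r\}$ is the rank-one MIC $\Phi(\xi)$, which is unbiased since $e_n^\xi=1/r$.

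\emph{Converse.} If $\{\Pi_n^\xi/r\}$ is an unbiased rank-one MIC, it is in particular a POVM, so $S=r\id$ and equality holds.

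The main point needing care is the identification of the spectrum of $\xi$ with that of $S$, in particular that $\lambda_{\max}$ is shared. This is standard for $A^*A$ versus $AA^*$. Everything else is direct.
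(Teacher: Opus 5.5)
Your proof is correct and follows essentially the same route as the paper. You solve for $\zeta=((1+t)\mathbb I-\xi)/t$ to get the threshold $\lambda_{\max}(\xi)-1$, then use that $\xi=A^*A$ and the frame operator $S=AA^*=\sum_n\Pi_n^\xi$ share their nonzero spectrum, so $\lambda_{\max}(\xi)\ge r$ with equality iff $S=r\,\id$, i.e.\ iff $\{\Pi_n^\xi/r\}_n$ is a POVM, which extremality makes an unbiased rank-one MIC. Your explicit treatment of the $t=0$ case and your appeal to the uniqueness of the expansion \eqref{compl} together with Proposition~\ref{prop:equiv}, in place of the paper's direct linear-independence remark, are only minor refinements of the same argument.
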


\begin{proof}
If $\xi=\mathbb I$ then Eq.~\eqref{eq:Schur-robustness} is immediate. Let $t>0$ such that
\[
\zeta
=
\frac{(1+t)\mathbb I-\xi}{t}.
\]
This matrix has unit diagonal and is positive semidefinite if and only if $(1+t)\mathbb I-\xi\geq0$, hence $t\geq\lambda_{\max}(\xi)-1$.

Now suppose that $\xi\in\ext(r)$. Since $\rank\xi=r$, the matrix $\xi$ has $r$ nonzero eigenvalues whose sum is $\tr\xi=r^2$. Their average is therefore $r$, and hence $\lambda_{\max}(\xi)\geq r$. Combining this with Eq.~\eqref{eq:Schur-robustness} gives $\mathcal R^{\rm Schur}_{\rm IP}(\xi)\geq r-1$. Equality holds if and only if all $r$ nonzero eigenvalues of $\xi$ equal $r$.

Let $A:\mathbb C^{r^2}\to\mathcal K$ be the operator defined by $A|n\rangle=|a_n\rangle$. Then
\[
\xi=A^*A,
\qquad
S_\xi:=AA^*
=\sum_{n=1}^{r^2}\Pi_n^\xi,
\]
where $S_\xi$ is the frame operator. The operators $A^*A$ and $AA^*$ have the same nonzero eigenvalues \cite{casazza13}. Consequently, equality in Eq.~\eqref{eq:Schur-robustness} holds if and only if every eigenvalue of $S_\xi$ equals $r$, or equivalently $S_\xi=r\id_r$. In that case $\sum_{n=1}^{r^2}\Pi_n^\xi/r=\id_r$, so $\{\Pi_n^\xi/r\}_{n=1}^{r^2}$ is a rank-one POVM with effects of equal trace. Extremality implies that the projections $\{\Pi_n^\xi\}_{n=1}^{r^2}$ are linearly independent, and hence this POVM is informationally complete. It is therefore an unbiased rank-one MIC. The converse follows immediately.
\end{proof}

Thus, within $\ext(r)$, SIC-extremals minimise the robustness \eqref{eq:Schur-robustness}, but not uniquely. Every extremal associated with an unbiased rank-one MIC has the same value $r-1$. Therefore, both the robustness and compatibility volume place SIC-extremals among the least incompatibility-preserving extremals, but only the compatibility volume singles them out. Indeed, $\mathcal R^{\rm Schur}_{\rm IP}$ depends only on $\lambda_{\max}(\xi)$, whereas the compatibility volume depends on $G_\xi$ and is uniquely maximised by SIC-extremals. The qubit Heisenberg--Weyl MICs considered in Sec.~\ref{sec:hw-extremals} make this distinction clear. All members of the family satisfy $\mathcal R^{\rm Schur}_{\rm IP}=1$, while their relative compatibility volumes range from one to values arbitrarily close to zero, as illustrated in Fig.~\ref{fig:volumes}.

\section{Joint measurability criteria for extremals in $\mathcal E(2)$}\label{app:r=2}

Restricting to extremals $\xi\in\mathcal E_{>}(2)$, we provide a simple condition for joint measurability in terms of ellipsoids.

\renewcommand{\theproposition}{\ref{ellipses}}
\begin{proposition}
Let $\xi\in \mathcal E_{>}(2)$, and let $\lambda_n$ and $\mathbf u_n$ be as above. Let $\P\in \mathscr I_4$ with coefficient vectors $\mathbf p(j)$. Define the vectors $\mathbf x_\xi^\P(j)\in \mathbb R^{3}$ by the coordinates
\begin{equation}\label{app:eq:x-coordinates}
[\mathbf x_\xi^\P(j)]_n:=\sqrt{2}\,
\frac{\mathbf u_n^T\mathbf p(j)}
{[\mathbf e^\xi]^T\mathbf p(j)},
\qquad n=1,2,3.
\end{equation}
Then
\[
\mathscr C_\xi
=
\left\{
\P
\;\middle|\;
\mathbf x_{\xi}^\P(j)\in\mathscr E_\xi
\text{ for each }j
\right\}\,,
\]
where
\[
\mathscr E_\xi
:=
\left\{
\mathbf x\in \mathbb R^{3}
\ \middle|\
\sum_{n=1}^{3}\frac{x_n^2}{\lambda_n}\leq 1
\right\}.
\]

\end{proposition}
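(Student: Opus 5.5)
The argument starts from Theorem~\ref{thm:extr}: $\P\in\mathscr C_\xi$ if and only if $\F_\xi^\P(j)\ge0$ for every $j$. For $r=2$ this positivity was already reduced in Eq.~\eqref{eq:qubit-quad-q} to two conditions. With $\mathbf q(j)=G_\xi^{-1}\mathbf p(j)$, we need
\[
\tr{\F_\xi^\P(j)}=\mathbf 1^T\mathbf q(j)\ge 0
\quad\text{and}\quad
(\mathbf 1^T\mathbf q(j))^2\ge\mathbf q(j)^TG_\xi\mathbf q(j).
\]
The plan is to rewrite both conditions in terms of $\mathbf p(j)$, using the spectral decomposition \eqref{xiabsdecomp} of $G_\xi^{-1}$, and read off the ellipsoid.

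For the linear term, $G_\xi^{-1}$ is symmetric and $G_\xi^{-1}\mathbf 1=\mathbf e^\xi$, so $\mathbf 1^T\mathbf q(j)=[\mathbf e^\xi]^T\mathbf p(j)$. Because $\xi\in\mathcal E_{>}(2)$, every $e^\xi_n>0$. Hence $N_j:=[\mathbf e^\xi]^T\mathbf p(j)\ge0$ automatically, with equality only if $\mathbf p(j)=0$.

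For the quadratic term, $\mathbf q^TG_\xi\mathbf q=\mathbf p^TG_\xi^{-1}\mathbf p$. Inserting \eqref{xiabsdecomp} gives
\[
\mathbf p^TG_\xi^{-1}\mathbf p=\tfrac12 N_j^2+\sum_{n=1}^3\lambda_n^{-1}(\mathbf u_n^T\mathbf p)^2 .
\]
The determinant condition therefore becomes
\[
\sum_n\lambda_n^{-1}(\mathbf u_n^T\mathbf p(j))^2\le \tfrac12 N_j^2 .
\]
When $N_j>0$, dividing by $N_j^2/2$ gives exactly $\sum_n x_n^2/\lambda_n\le1$ with the coordinates \eqref{app:eq:x-coordinates}, i.e.\ $\mathbf x^\P_\xi(j)\in\mathscr E_\xi$.

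The point needing care is verifying \eqref{xiabsdecomp} itself. We have $G_\xi=E_\xi^{-1/2}H_\xi E_\xi^{-1/2}$, so $G_\xi^{-1}=\sqrt{E_\xi}H_\xi^{-1}\sqrt{E_\xi}$. Inverting the spectral decomposition of $H_\xi$ uses that $\sqrt{\mathbf e^\xi}$ has squared norm $2$: the normalised eigenprojector is $\frac12\sqrt{\mathbf e^\xi}[\sqrt{\mathbf e^\xi}]^T$, with eigenvalue $1$. Conjugating by $\sqrt{E_\xi}$ turns this into $\frac12\mathbf e^\xi[\mathbf e^\xi]^T$ and turns $\mathbf v_n$ into $\mathbf u_n$. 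The eigenvector claim $H_\xi\sqrt{\mathbf e^\xi}=\sqrt{E_\xi}G_\xi\mathbf e^\xi=\sqrt{\mathbf e^\xi}$ follows from $G_\xi\mathbf e^\xi=\mathbf 1$.

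The remaining task is the degenerate outcomes with $N_j=0$, where $\mathbf x^\P_\xi(j)$ is undefined. Such an outcome has $\mathbf p(j)=0$, so $\F_\xi^\P(j)=0\ge0$ imposes no constraint, and the condition on $j$ is read as vacuous (or the outcome is discarded). The independence from the choice of the $\mathbf v_n$ within degenerate eigenspaces is immediate, since the quadratic form $\sum_n\lambda_n^{-1}\mathbf u_n\mathbf u_n^T$ depends only on $G_\xi$.
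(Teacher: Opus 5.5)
Your proposal is correct and follows essentially the same route as the paper. Both reduce membership to positivity of $\F_\xi^\P(j)$ via Theorem~\ref{thm:extr}, apply the $2\times2$ trace/determinant criterion, rewrite $\tr{F^2}=\mathbf p^TG_\xi^{-1}\mathbf p$ using the decomposition \eqref{xiabsdecomp}, and divide by $([\mathbf e^\xi]^T\mathbf p)^2/2$. You are slightly more careful than the paper in two places: you explicitly verify \eqref{xiabsdecomp}, and you handle outcomes with $[\mathbf e^\xi]^T\mathbf p(j)=0$, which force $\mathbf p(j)=0$ and so impose no constraint, whereas the paper simply asserts that this trace is strictly positive.
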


\begin{proof}
For a given diagonal observable $\P$, fix an outcome $j$ and let $\mathbf p=\mathbf p(j)$. Define the corresponding $\mathbf q = G_\xi^{-1}\mathbf p$, and $F=\sum_n q_n \Pi_n^\xi$, where $\Pi_n^\xi$ are the rank-one projectors defining $\xi$. We need to determine when $F\geq 0$ under the given assumptions on $\P$. Noting that $G_\xi^{-1}\mathbf 1 = \mathbf e^\xi$, the decomposition \eqref{xiabsdecomp} gives
\begin{align*}
\tr{F}
&= \sum_{n} q_n
= \mathbf 1^T G_\xi^{-1} \mathbf p
= [\mathbf e^\xi]^T\mathbf p>0,\\
\tr{F^2}
&=\sum_{n,m}q_nq_m (G_\xi)_{nm}
= \mathbf q^T G_\xi \mathbf q\\
&= \mathbf p^T G_\xi^{-1}\mathbf p
= \frac{\bigl([\mathbf e^\xi]^T\mathbf p\bigr)^2}{2}
+ \sum_{n=1}^{3}\frac{\bigl(\mathbf u_n^T\mathbf p\bigr)^2}{\lambda_n}.
\end{align*}
Therefore, by the above lemma, $F\geq 0$ iff
\[
1\geq \frac{\tr{F^2}}{\tr{F}^2}
=\frac{1}{2}\left(1 + \sum_{n=1}^{3}\frac{\bigl[\mathbf x_\xi^\P\bigr]_n^2}{\lambda_n}\right),
\]
where $[\mathbf x_\xi^\P]_n=\sqrt{2}\frac{\mathbf u_n^T\mathbf p}{[\mathbf e^\xi]^T\mathbf p}$. This is equivalent to $\sum_{n=1}^{3}\frac{[\mathbf x_\xi^\P]_n^2}{\lambda_n}\leq 1$.
\end{proof}

\section{Special extremals in $\mathcal E(2)$}\label{app:examples}

We now provide details of our characterisation of the compatibility regions for the three families of extremal coherence matrices $\xi\in\ext(2)$ desribed in Sec.~\ref{sec:qubit-mics}: (1) Heisenberg-Weyl extremals; (2) semi-SIC extremals; and (3) non-MIC extremals.

\subsection{Heisenberg--Weyl extremals}
\label{app:hw-extremals}

We recall the construction of the qubit Heisenberg--Weyl MICs used in Sec.~\ref{sec:hw-extremals}. They are generated from the seed state
\begin{equation}
\ket{\varphi}
=
\frac{1}{\sqrt{1+s^2}}
\bigl(\ket{0}+s e^{i\vartheta}\ket{1}\bigr),
\end{equation}
with $0<s<1$ and $0<\vartheta<\pi/2$, and the unitaries
\begin{equation}
U_{nm}
=
|0\rangle\langle m|+(-1)^n|1\rangle\langle m+1|,
\quad n,m\in\{0,1\},
\end{equation}
where the second index is understood modulo $2$. The MIC is
\[
\M(n,m)=\frac12\,|a_{nm}\rangle\langle a_{nm}|,
\qquad
\ket{a_{nm}}=U_{nm}\ket{\varphi}\,,
\]
and its corresponding extremal $\xi$ is given in Eq. \eqref{eq:hw-extremal-xi}. The matrix governing the compatibility geometry is therefore
\[
H_\xi
=
\frac{t^{4}}{2}
\begin{pmatrix}
(1+s^2)^2 & (1-s^2)^2 & 4s^2\cos^2\vartheta & 4s^2\sin^2\vartheta \\
(1-s^2)^2 & (1+s^2)^2 & 4s^2\sin^2\vartheta & 4s^2\cos^2\vartheta \\
4s^2\cos^2\vartheta & 4s^2\sin^2\vartheta & (1+s^2)^2 & (1-s^2)^2 \\
4s^2\sin^2\vartheta & 4s^2\cos^2\vartheta & (1-s^2)^2 & (1+s^2)^2
\end{pmatrix}\,,
\]
where $t=(1+s^2)^{-1/2}$. This is block-circulant with circulant blocks, and hence diagonal in the Hadamard basis \eqref{eq:hadamard-basis}. The eigenvalues are those in Eq.~\eqref{eq:HW-evalues}. The shift-covariant incoherent observables $\P_{\mathbf r}\in \mathscr I^{\rm sc}_{\!4}$ are determined by a seed distribution $\mathbf r\in\Delta_4$ through
\begin{equation}\label{app:eq:p-from-r-app}
p_{kl}(n,m)=r_{n\ominus k,\;m\ominus l},
\end{equation}
where $\ominus$ denotes subtraction modulo $2$.

\begin{corollary}\label{app:cor:HW-covariant-app}
Let $\xi\in\mathcal E_{>}(2)$ be the Heisenberg--Weyl extremal in Eq. \eqref{eq:hw-extremal-xi}. Then
\begin{equation}
\mathscr C_\xi[\mathscr I^{\rm sc}_{\!4}]
=
\Bigl\{
\P_{\mathbf r}\in\mathscr I^{\rm sc}_{\!4}
\;\Bigm|\;
\sum_{(n,m)\in\Omega_0}
\frac{\hat r_{nm}^{\,2}}{\lambda_{nm}}
\leq 1
\Bigr\}\,,
\end{equation}
where $\hat r_{nm}:=\sum_{k,l=0}^1(-1)^{kn+lm}r_{kl}$, $\Omega_0=\Omega\setminus\{(0,0)\}$, and $\lambda_{nm}$ are the three nontrivial eigenvalues of $H_\xi$ given in Eq.~\eqref{eq:HW-evalues}.
\end{corollary}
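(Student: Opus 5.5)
The argument is a direct application of Proposition~\ref{ellipses}, using the Hadamard eigenbasis \eqref{eq:hadamard-basis} and the shift covariance of $\P_{\mathbf r}$. For the Heisenberg--Weyl extremal, $\mathbf e^\xi=\mathbf 1/2$, so $E_\xi=\frac12\id$. The eigenvectors of $H_\xi$ are the Hadamard vectors $\mathbf v_{nm}$, with $\mathbf v_{00}=\frac12\mathbf 1=\sqrt{\mathbf e^\xi}/\sqrt2\cdot\sqrt2\cdot\frac1{\sqrt2}$, consistent up to normalisation. It follows that $\mathbf u_{nm}=\frac1{\sqrt2}\mathbf v_{nm}$ for $(n,m)\in\Omega_0$. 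Substituting into \eqref{eq:x-coordinates} gives
\[
[\mathbf x^\P(j)]_{nm}=\sqrt2\,\frac{\tfrac1{\sqrt2}\mathbf v_{nm}^T\mathbf p(j)}{\tfrac12\mathbf 1^T\mathbf p(j)}
=\frac{\hat p_{nm}(j)}{\hat p_{00}(j)},
\]
because $\mathbf v_{nm}^T\mathbf p=\frac12\hat p_{nm}$. This recovers the reduction already noted in Sec.~\ref{sec:hw-extremals}. Hence $\P\in\mathscr C_\xi$ if and only if $\sum_{(n,m)\in\Omega_0}\hat p_{nm}(j)^2/(\lambda_{nm}\hat p_{00}(j)^2)\le1$ holds for every outcome $j$.

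Next I specialise to $\P_{\mathbf r}$, whose effects are given by \eqref{app:eq:p-from-r-app}. For outcome $j=(a,b)$, the coefficient vector is $p_{kl}(a,b)=r_{a\ominus k,\,b\ominus l}$. I compute its Hadamard--Fourier transform by the substitution $k'=a\ominus k$, $l'=b\ominus l$. Since $(-1)^{kn}=(-1)^{(a+k')n}$ modulo $2$ signs, this gives
\[
\hat p_{nm}(a,b)=(-1)^{an+bm}\,\hat r_{nm}.
\]
In particular $\hat p_{00}(a,b)=\hat r_{00}=1$, and $\hat p_{nm}(a,b)^2=\hat r_{nm}^2$. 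The quadratic condition is therefore the same for every outcome and reduces to $\sum_{\Omega_0}\hat r_{nm}^2/\lambda_{nm}\le1$, which is the claimed single ellipsoid.

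The only point needing care is the sign/translation bookkeeping in the Fourier transform, together with the check that the normalisation of $\mathbf u_{nm}$ matches the convention behind \eqref{xiabsdecomp}. For that check I verify that $\sqrt{\mathbf e^\xi}=\frac1{\sqrt2}\mathbf 1$ is proportional to $\mathbf v_{00}$ and orthogonal to the other $\mathbf v_{nm}$. Once that holds, the eigenvalues are those of \eqref{eq:HW-evalues}, and the result follows.
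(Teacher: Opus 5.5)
Your proposal is correct and follows the paper's proof essentially line by line. The shift substitution gives $\hat p_{nm}(a,b)=(-1)^{an+bm}\hat r_{nm}$ with $\hat p_{00}=1$, so the coordinates $x_{nm}=\hat p_{nm}/\hat p_{00}$ have outcome-independent squares, and Proposition~\ref{ellipses} yields the single ellipsoid; your extra check that $\mathbf u_{nm}=\mathbf v_{nm}/\sqrt2$ reproduces $x_{nm}=\hat p_{nm}/\hat p_{00}$ only makes explicit the reduction the paper states in Sec.~\ref{sec:hw-extremals}.
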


\begin{proof}
Let $(n,m)\in\Omega$. For the corresponding effect of $\P_{\mathbf r}$, the
coefficient vector is $p_{kl}(n,m)=r_{n\ominus k,\;m\ominus l}$. Its Fourier coefficients are therefore
\begin{align}
\hat p_{ab}(n,m)
&=
\sum_{k,l=0}^1 (-1)^{ka+lb}\,r_{n\ominus k,\;m\ominus l}\\
&=
\sum_{k',l'=0}^1 (-1)^{(n\ominus k')a+(m\ominus l')b}\,r_{k'l'}\\
&=
(-1)^{na+mb}\hat r_{ab},
\end{align}
where $k'=n\ominus k$ and $l'=m\ominus l$. In particular, $\hat p_{00}(n,m)=\hat r_{00}=1$, since $\mathbf r\in\Delta_4$. Hence, the vector $\mathbf x^\P_\xi(n,m)$ in Eq.~\eqref{eq:x-coordinates} has coordinates
\[
x_{ab}(n,m)
=
\frac{\hat p_{ab}(n,m)}{\hat p_{00}(n,m)}
=
(-1)^{na+mb}\hat r_{ab}\,,
\]
for every $(a,b)\neq(0,0)$. Thus $x_{ab}(n,m)^2=\hat r_{ab}^{\,2}$ is independent of the outcome
$(n,m)$. The claim now follows directly from Proposition~\ref{ellipses}.
\end{proof}

We now determine the threshold $\alpha^*(\xi)$, defined in Eq.~\eqref{eq:alpha-thresh}, at which $\P_\alpha$ enters $\mathscr C_\xi[\mathscr I^{\rm sc}_{\!4}]$.
\begin{corollary}\label{app:cor:dariano-app}
Let $\xi(s,\vartheta)\in\mathcal E_{>}(2)$ be a Heisenberg--Weyl extremal,
with eigenvalues $\lambda_{01}(s,\vartheta)$,
$\lambda_{10}(s,\vartheta)$ and $\lambda_{11}(s,\vartheta)$ of $H_\xi$ given
in Eq.~\eqref{eq:HW-evalues}. Then
\[
\P_\alpha\in\mathscr C_\xi[\mathscr I^{\rm sc}_{\!4}]
\qquad\Longleftrightarrow\qquad
\alpha\le \alpha^*(s,\vartheta),
\]
where
\[
\alpha^*(s,\vartheta)
=
\left(
\frac{1}{\lambda_{01}(s,\vartheta)}
+\frac{1}{\lambda_{10}(s,\vartheta)}
+\frac{1}{\lambda_{11}(s,\vartheta)}
\right)^{-\frac 12}.
\]
In particular,
\[
\alpha^*(s,\vartheta)=
\Biggl[
\frac{(1+s^{2})^{2}}{4s^{2}}
\left(
\sec^{2}\vartheta
+\csc^{2}\vartheta
+\frac{4s^{2}}{(1-s^{2})^{2}}
\right)
\Biggr]^{-\frac 12}.
\]
\end{corollary}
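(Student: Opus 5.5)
The plan is to apply Corollary~\ref{app:cor:HW-covariant-app} directly to the depolarised-basis observable $\P_\alpha$. First I will check that $\P_\alpha$ is shift-covariant. Its seed distribution is $\mathbf r_\alpha=\frac14(1+3\alpha,1-\alpha,1-\alpha,1-\alpha)$, labelled with $r_{00}$ first. The effect $\P_\alpha(n,m)$ has coefficient $\alpha\delta_{(k,l),(n,m)}+(1-\alpha)/4$ on $|kl\rangle$. This coincides with $r_{n\ominus k,\,m\ominus l}$, since the index difference vanishes exactly when $(k,l)=(n,m)$. Hence $\P_\alpha=\P_{\mathbf r_\alpha}\in\mathscr I^{\rm sc}_{\!4}$.

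Next I will compute the Fourier coordinates. For $(a,b)\neq(0,0)$ the uniform part $\frac{1-\alpha}{4}\sum_{k,l}(-1)^{ka+lb}$ vanishes. The remaining term is $\alpha$, which comes from the $\delta$-peak at $(0,0)$; equivalently, $\hat r_{ab}=\frac14[(1+3\alpha)-(1-\alpha)]=\alpha$. So $\hat{\mathbf r}_\alpha=\alpha(1,1,1)$, as already noted in the main text.

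Substituting into the ellipsoid condition of Corollary~\ref{app:cor:HW-covariant-app} gives
\[
\alpha^2\Bigl(\lambda_{01}^{-1}+\lambda_{10}^{-1}+\lambda_{11}^{-1}\Bigr)\le1 .
\]
All $\lambda_{nm}>0$ because $H_\xi$ is positive definite; explicitly, $0<s<1$ and $0<\vartheta<\pi/2$ make the expressions in \eqref{eq:HW-evalues} strictly positive. Since $\alpha\ge0$, the inequality is equivalent to $\alpha\le\alpha^*(s,\vartheta)$ with the stated formula. The set is an interval starting at $0$, which matches the threshold definition \eqref{eq:alpha-thresh}.

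Finally I will insert the eigenvalues $\lambda_{01}=4t^4s^2\cos^2\vartheta$, $\lambda_{10}=t^4(1-s^2)^2$ and $\lambda_{11}=4t^4s^2\sin^2\vartheta$, with $t^{-4}=(1+s^2)^2$. Factoring out $(1+s^2)^2/(4s^2)$ gives
\[
\frac{(1+s^2)^2}{4s^2}\Bigl(\sec^2\vartheta+\csc^2\vartheta+\frac{4s^2}{(1-s^2)^2}\Bigr).
\]
This is the claimed closed form.

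There is no real obstacle. The only delicate point is the index bookkeeping: checking that $\P_\alpha$ matches the covariant parametrisation \eqref{eq:inc-sh} with the correct seed. As a sanity check, at $s^2=2-\sqrt3$ and $\vartheta=\pi/4$ all three eigenvalues equal $1/3$, which gives $\alpha^*=1/3$ in agreement with Proposition~\ref{prop:sic-maximal-line}.
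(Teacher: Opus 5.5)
Your proposal is correct and follows the paper's own proof. Both identify $\P_\alpha$ as the shift-covariant observable with seed $\mathbf r_\alpha$, use $\hat{\mathbf r}_\alpha=\alpha(1,1,1)$ in Corollary~\ref{app:cor:HW-covariant-app}, and substitute the eigenvalues from Eq.~\eqref{eq:HW-evalues}; your explicit checks of the seed identification and of $\lambda_{nm}>0$ just fill in details the paper leaves implicit.
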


\begin{proof}
For $\mathbf r_\alpha=\frac14(1+3\alpha,1-\alpha,1-\alpha,1-\alpha)$, the nontrivial Fourier coefficients satisfy $\hat r_{01}=\hat r_{10}=\hat r_{11}=\alpha$. Thus Corollary~\ref{app:cor:HW-covariant-app} gives
\[
\alpha^2
\left(
\frac{1}{\lambda_{01}}
+\frac{1}{\lambda_{10}}
+\frac{1}{\lambda_{11}}
\right)
\le 1,
\]
which gives the first formula for $\alpha^*(s,\vartheta)$. Substituting the
eigenvalues from Eq.~\eqref{eq:HW-evalues} yields the second formula.
\end{proof}

\subsection{Semi-SIC extremals}
\label{app:semi-sic-proof}

We recall the semi-SIC family used in Sec.~\ref{sec:semi-sic-extremals}. For
$r=2$, the POVM effects are
\[
\M(n,m)=e^\xi_{nm}\,|a_{nm}\rangle\langle a_{nm}|,
\qquad n,m=0,1,
\]
where
\begin{align*}
\ket{a_{00}} &= \ket{0},\\
\ket{a_{01}} &= \gamma\ket{0}+\sqrt{1-\gamma^2}\ket{1},\\
\ket{a_{10}} &= \frac{1}{\sqrt{3}}\bigl(\ket{0}-\sqrt{2}\,e^{i\theta}\ket{1}\bigr),\\
\ket{a_{11}} &= \frac{1}{\sqrt{3}}\bigl(\ket{0}-\sqrt{2}\,e^{-i\theta}\ket{1}\bigr),
\end{align*}
with
\[
\gamma=\frac{2\sqrt{\beta}}{1-\sqrt{1-12\beta}},
\,\,\,
\theta=\cos^{-1}\!\left(
\frac{\sqrt{1-8\beta-\sqrt{1-12\beta}}}{4\sqrt{\beta}}
\right),
\]
and $\beta\in(1/16,1/12]$. The vector $\mathbf e^\xi=(e^\xi_{00},e^\xi_{01},e^\xi_{10},e^\xi_{11})^T$ is given Eq. \eqref{eq:semi-sic-e} of the main text. The corresponding coherence matrix can be found in Eq. \eqref{eq:semi-sic-xi}. The relevant matrix governing the compatibility region is
\begin{equation}
H_\xi = \begin{pmatrix}
e_+ & \frac{\beta}{e_+} & \frac{\beta}{\sqrt{e_+e_-}} & \frac{\beta}{\sqrt{e_+e_-}}\\
\frac{\beta}{e_+}& e_+ & \frac{\beta}{\sqrt{e_+e_-}}&\frac{\beta}{\sqrt{e_+e_-}}\\
\frac{\beta}{\sqrt{e_+e_-}}& \frac{\beta}{\sqrt{e_+e_-}}&  e_- & \frac{\beta}{e_-}\\
\frac{\beta}{\sqrt{e_+e_-}}& \frac{\beta}{\sqrt{e_+e_-}}& \frac{\beta}{e_-} & e_-
\end{pmatrix},
\end{equation}
where we have suppressed $\xi$-dependence in $e_\pm$ for notational simplicity. Equivalently, $H_\xi=A+|u\rangle\langle u|$, where $A=a_+P_+ + a_-P_-$, and
\[
P_+=|0\rangle\langle0|+|1\rangle\langle1|,
\qquad
P_-=|2\rangle\langle2|+|3\rangle\langle3|,
\]
with $a_\pm=e_\pm-\frac{\beta}{e_\pm}$, and
\[
|u\rangle
=
\sqrt{\frac{\beta}{e_+}}(|0\rangle+|1\rangle)
+
\sqrt{\frac{\beta}{e_-}}(|2\rangle+|3\rangle).
\]
The eigenbasis is given in Eq. \eqref{eq:semisic-basis}. The vector ${\bf v}_{00}=\sqrt{\mathbf e^\xi}/\sqrt2$ has eigenvalue $1$, and $\mathbf v_{01}$ is an orthogonal eigenvector in the symmetric subspace with eigenvalue $1/3$. The antisymmetric vectors $\mathbf v_{10}$ and $\mathbf v_{11}$ lie in $u^\perp$, and hence have eigenvalues $a_+$ and $a_-$, respectively.
Using $e_+e_-=3\beta$, these simplify to
\[
a_\pm=\frac{1\pm 2\sqrt{1-12\beta}}{3}.
\]
From Eq.~\eqref{eq:v-to-u}, the corresponding coordinate vectors are
\[
\mathbf u_{01}:=\sqrt{\frac{3\beta}{2}}(1,1,-1,-1)^T,
\]
\[
\mathbf u_{10}:=\sqrt{\frac{e_+}{2}}(1,-1,0,0)^T,
\qquad
\mathbf u_{11}:=\sqrt{\frac{e_-}{2}}(0,0,1,-1)^T.
\]
This gives the coordinates in Eq.~\eqref{eq:semi-sic-x}.

We now give the covariant compatibility region stated in Sec.~\ref{sec:semi-sic-extremals}.

\begin{corollary}\label{app:cor:semi-sics-app}
Let $\xi\in\mathcal E_{>}(2)$ be the semi-SIC extremal in Eq. \eqref{eq:semi-sic-xi}. Then
\begin{equation}\label{app:eq:semi-sic-hw}
\mathscr C_\xi[\mathscr I^{\rm sc}_{\!4}]
=
\left\{
\P_{\mathbf r}\in\mathscr I^{\rm sc}_{\!4}
\;\middle|\;
\hat{\mathbf r}\in\mathscr E^+_\xi\cap\mathscr E^-_\xi
\right\}\,,
\end{equation}
where $\hat{\mathbf r}:=(\hat r_{01},\hat r_{10},\hat r_{11})$ is the nontrivial Fourier-coordinate vector associated with
$\mathbf r\in\Delta_4$, and $\mathscr E^\pm_\xi$ are the ellipsoids
\begin{equation}\label{app:eq:Epm-semisic}
\mathscr E^\pm_\xi
:=
\left\{
\hat{\mathbf r}\in\mathbb R^3 \;\middle|\;
\mathscr Q^\pm_\xi(\hat{\mathbf r})
\le
\frac14+\frac{\delta^2}{16\kappa}
\right\},
\end{equation}
where $\delta:=e_+-e_-$, $\kappa:=9\beta-\frac{\delta^2}{4}$, and
\[
\begin{aligned}
\mathscr Q^\pm_\xi(\hat{\mathbf r})
:={}&\,
\frac{e_+}{4a_+}(\hat r_{01}\pm \hat r_{11})^2
+\frac{e_-}{4a_-}(\hat r_{01}\mp \hat r_{11})^2  \\
&+
\kappa
\left(
\hat r_{10}\mp\frac{\delta}{4\kappa}
\right)^2 .
\end{aligned}
\]
\end{corollary}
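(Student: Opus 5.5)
The argument follows the proof of Corollary~\ref{app:cor:HW-covariant-app}: apply Proposition~\ref{ellipses} to each outcome $(n,m)$ of $\P_{\mathbf r}$ and express everything in the Fourier coordinates of the seed. For outcome $(n,m)$ the coefficient vector is $p_{kl}(n,m)=r_{n\ominus k,m\ominus l}$, so its Hadamard coefficients are $\hat p_{ab}(n,m)=(-1)^{na+mb}\hat r_{ab}$. The first step is to write the three coordinates of Eq.~\eqref{eq:semi-sic-x} and the normaliser $N$ in terms of these Fourier coefficients. Up to factors of $\tfrac12$, the combinations $p_{00}+p_{01}-p_{10}-p_{11}=\hat p_{10}$ and $N=\tfrac{e_++e_-}{2}\hat p_{00}+\tfrac{\delta}{2}\hat p_{10}$ are immediate. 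The remaining combinations are $p_{00}-p_{01}=\tfrac12(\hat p_{01}+\hat p_{11})$ and $p_{10}-p_{11}=\tfrac12(\hat p_{01}-\hat p_{11})$. Since $e_++e_-=1$ and $\hat p_{00}=1$, we get $N=\tfrac12(1+\delta\,\hat p_{10})$.

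Next I substitute $\hat p_{ab}(n,m)=(-1)^{na+mb}\hat r_{ab}$. The sign $(-1)^n$ multiplies $\hat r_{10}$ and hence also enters $N$. For the pair $(\hat p_{01},\hat p_{11})$ the sign is $((-1)^m\hat r_{01},(-1)^{n+m}\hat r_{11})$, so $\hat p_{01}\pm\hat p_{11}=(-1)^m(\hat r_{01}\pm(-1)^n\hat r_{11})$. The overall factor $(-1)^m$ disappears after squaring. Hence the ellipsoid condition for outcome $(n,m)$ depends only on $n$, and the outcomes split into the classes $n=0$ (sign $+$) and $n=1$ (sign $-$), which should give $\mathscr E^+_\xi$ and $\mathscr E^-_\xi$ respectively.

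For a fixed class, I multiply the inequality $3x_{01}^2+x_{10}^2/a_++x_{11}^2/a_-\le 1$ by $N^2=\tfrac14(1\pm\delta\hat r_{10})^2$. This gives
\[
\tfrac{e_+}{4a_+}(\hat r_{01}\pm\hat r_{11})^2+\tfrac{e_-}{4a_-}(\hat r_{01}\mp\hat r_{11})^2+9\beta\,\hat r_{10}^2/4\cdot 4\cdots\le\tfrac14(1\pm\delta\hat r_{10})^2 .
\]
The prefactors have to be tracked carefully here. The $\hat r_{10}$ term should come out as $\tfrac{9\beta}{4}\hat r_{10}^2$ after normalising consistently; I would recheck this against the claimed $\kappa=9\beta-\delta^2/4$. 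Moving the $\hat r_{10}$ terms from the right-hand side to the left leaves a quadratic in $\hat r_{10}$ with leading coefficient $\kappa$ and linear term $\mp\tfrac{\delta}{2}\hat r_{10}$. Completing the square then gives $\kappa(\hat r_{10}\mp\tfrac{\delta}{4\kappa})^2\le\tfrac14+\tfrac{\delta^2}{16\kappa}$. The positivity of $N$ (that is, of $\tr{F}$) is automatic because $e_\pm>0$, and $\kappa>0$ can be checked from $e_+e_-=3\beta$, which gives $\delta^2=1-12\beta$ and so $\kappa=\tfrac14(48\beta-1)>0$ for $\beta>1/16$. This positivity is what makes the region an ellipsoid.

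The main obstacle is the bookkeeping of the scalar factors in this last step. These are the $\sqrt2$ in Eq.~\eqref{eq:x-coordinates}, the $\tfrac12$ factors in the Fourier inversion, and the normalisation of the $\mathbf u_n$. The sign and class structure is straightforward; the work is in matching the exact constants $\tfrac{e_\pm}{4a_\pm}$ and $\kappa$ in the stated formula.
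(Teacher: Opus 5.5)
Your route is the paper's proof: express the covariant coefficient vectors through $\hat p_{ab}(n,m)=(-1)^{na+mb}\hat r_{ab}$, note that after squaring everything depends only on $n$, multiply the condition $3x_{01}^2+x_{10}^2/a_++x_{11}^2/a_-\le 1$ by $N^2=\frac14(1\pm\delta\hat r_{10})^2$, and complete the square.

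The only thing to fix is the constant you left open, and your tentative value is wrong. There is no factor $\frac12$ in $p_{00}+p_{01}-p_{10}-p_{11}=\hat p_{10}$, so $Nx_{01}=\sqrt{3\beta}\,(-1)^n\hat r_{10}$. The term is therefore $3\cdot 3\beta\,\hat r_{10}^2=9\beta\,\hat r_{10}^2$, not $\frac{9\beta}{4}\hat r_{10}^2$. With your value the leading coefficient would come out as $(9\beta-\delta^2)/4\neq\kappa$, which contradicts your next sentence.

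With $9\beta$, moving $\frac{\delta^2}{4}\hat r_{10}^2\pm\frac{\delta}{2}\hat r_{10}$ over from $N^2$ gives exactly $\kappa\hat r_{10}^2\mp\frac{\delta}{2}\hat r_{10}\le\frac14$. Completing the square then yields the stated $\mathscr Q^\pm_\xi$ and the bound $\frac14+\frac{\delta^2}{16\kappa}$. Your other coefficients, $\frac{e_\pm}{4a_\pm}$, are already correct.

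Your check that $\kappa=(48\beta-1)/4>0$ is a useful addition that the paper leaves implicit. It is what justifies calling $\mathscr E^\pm_\xi$ ellipsoids.
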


\begin{proof}
Let $\P_{\mathbf r}$ be shift-covariant, with seed distribution $\mathbf r\in\Delta_4$, and let $\mathbf p(n,m)$ denote the coefficient vector of the effect corresponding to $(n,m)\in\Omega$. As shown previously, we have $\hat p_{ab}(n,m)=(-1)^{na+mb}\hat r_{ab}$. Using
\[
p_{00}-p_{01}=\frac{\hat p_{01}+\hat p_{11}}{2},\qquad
p_{10}-p_{11}=\frac{\hat p_{01}-\hat p_{11}}{2},
\]
and $p_{00}+p_{01}-p_{10}-p_{11}=\hat p_{10}$, we obtain
\begin{align*}
\mathbf u_1^T\mathbf p(n,m)
&=
\sqrt{\frac{3\beta}{2}}\,(-1)^n\hat r_{10},\\
\mathbf u_2^T\mathbf p(n,m)
&=
\sqrt{\frac{e_+}{2}}\,
\frac{(-1)^m}{2}\bigl(\hat r_{01}+(-1)^n\hat r_{11}\bigr),\\
\mathbf u_3^T\mathbf p(n,m)
&=
\sqrt{\frac{e_-}{2}}\,
\frac{(-1)^m}{2}\bigl(\hat r_{01}-(-1)^n\hat r_{11}\bigr).
\end{align*}
Moreover, $[\mathbf e^\xi]^T\mathbf p(n,m)=\frac12(1+(-1)^n(e_+-e_-)\hat r_{10})$. Hence the squared coordinates in Eq.~\eqref{eq:x-coordinates} depend only on the value of $n$, so the four outcomes split into two inequivalent classes, corresponding to $n=0$ and $n=1$. Writing $\delta:=e_+-e_-$, and $d_\pm:=\frac12(1\pm\delta\hat r_{10})$, with the upper sign corresponding to $n=0$ and the lower sign to $n=1$, the vector $\textbf x^\P=(x_{01},x_{10},x_{11})$ in Eq. \eqref{eq:x-coordinates} becomes
\[
\begin{aligned}
x_{01}
&=
\sqrt{2}\,\frac{\mathbf u_1^T\mathbf p}{[\mathbf e^\xi]^T\mathbf p}
=
\frac{\sqrt{3\beta}\,\hat r_{10}}{d_\pm},
\\[0.8em]
x_{10}
&=
\sqrt{2}\,\frac{\mathbf u_2^T\mathbf p}{[\mathbf e^\xi]^T\mathbf p}
=
\frac{\sqrt{e_+}}{2d_\pm}\,
(\hat r_{01}\pm\hat r_{11}),
\\[0.8em]
x_{11}
&=
\sqrt{2}\,\frac{\mathbf u_3^T\mathbf p}{[\mathbf e^\xi]^T\mathbf p}
=
\frac{\sqrt{e_-}}{2d_\pm}\,
(\hat r_{01}\mp\hat r_{11}).
\end{aligned}
\]
Substituting these expressions into the ellipsoid condition $3x_{01}^2+x_{10}^2/a_++x_{11}^2/a_-\le 1$ gives
\[
\frac{e_+}{4a_+}(\hat r_{01}\pm\hat r_{11})^2
+\frac{e_-}{4a_-}(\hat r_{01}\mp\hat r_{11})^2
+9\beta\hat r_{10}^2
\le
\frac14(1\pm\delta\hat r_{10})^2\,,
\]
which can be rewritten as
\[
\begin{aligned}
\frac{e_+}{4a_+}(\hat r_{01}\pm\hat r_{11})^2
&+\frac{e_-}{4a_-}(\hat r_{01}\mp\hat r_{11})^2 \\
&+\left(9\beta-\frac{\delta^2}{4}\right)\hat r_{10}^2
\mp \frac{\delta}{2}\hat r_{10}
\le \frac14 .
\end{aligned}
\]
Setting $\kappa:=9\beta-\frac{\delta^2}{4}$ and completing the square in $\hat r_{10}$ yields
\[
\begin{aligned}
\frac{e_+}{4a_+}(\hat r_{01}\pm\hat r_{11})^2
&+\frac{e_-}{4a_-}(\hat r_{01}\mp\hat r_{11})^2 \\
&+\kappa
\left(
\hat r_{10}\mp\frac{\delta}{4\kappa}
\right)^2
\le
\frac14+\frac{\delta^2}{16\kappa}\,,
\end{aligned}
\]
which is the required condition. Therefore $\P_{\mathbf r}\in\mathscr C_\xi[\mathscr I^{\rm sc}_{\!4}]$ if and only if $\hat{\mathbf r}\in \mathscr E^+_\xi\cap \mathscr E^-_\xi $.
\end{proof}

\subsection{Non-MIC extremals}
\label{app:nonmic-family}

Here we give the details for the non-MIC extremal family considered in Sec.~\ref{sec:nonmic-family}, with coherence matrix in Eq. \eqref{eq:non-mic-xi}. This family is obtained from the example in \cite{buscemi05}, corresponding to $\phi=\pi/2$, by introducing a relative phase $e^{i\phi}$ in one of the structure vectors. The structure vectors may be chosen as
\[
\begin{aligned}
|a_{00}\rangle&=|0\rangle, & |a_{01}\rangle&=|1\rangle,\\
|a_{10}\rangle&=\tfrac{1}{\sqrt2}(|0\rangle+|1\rangle), &
|a_{11}\rangle&=\tfrac{1}{\sqrt2}(|0\rangle+e^{i\phi}|1\rangle),
\end{aligned}
\]
which shows that the rank is $r=2$. The Gram matrix $G_\xi$ that determines the compatibility geometry has the simple form
\begin{equation}\label{eq:gram-non-mic}
G_\xi=
\left(
\begin{array}{cccc}
1 & 0 & \frac12 & \frac12\\[0.2em]
0 & 1 & \frac12 & \frac12\\[0.2em]
\frac12 & \frac12 & 1 & c\\[0.2em]
\frac12 & \frac12 & c & 1
\end{array}
\right)\,,
\end{equation}
where $c:=\cos^2(\phi/2)$ and $\phi\in(0,\pi)$. Hence $c\in(0,1)$. We now see that the inverse of $G_\xi$ exists for $c\in(0,1)$:
\[
G_\xi^{-1}
=
\frac{1}{2c}
\begin{pmatrix}
2c+1 & 1 & -1 & -1 \\[0.3em]
1 & 2c+1 & -1 & -1 \\[0.3em]
-1 & -1 &
\dfrac{1}{1-c} &
\dfrac{1-2c}{1-c} \\[0.7em]
-1 & -1 &
\dfrac{1-2c}{1-c} &
\dfrac{1}{1-c}
\end{pmatrix},
\]
which shows that $\xi\in \mathcal E(2)$. However, since $|a_{00}\rangle\langle a_{00}|+|a_{01}\rangle\langle a_{01}|=\id$, one has $\mathbf e^\xi=(1,1,0,0)$, and therefore $\xi\in\mathcal E_{\geq}(2)$ but $\xi\notin\mathcal E_{>}(2)$, as already noted in the main text, so these extremals are not MICs. Accordingly, the matrix $H_\xi$ cannot be used for determining the compatibility region---in fact, we see that it is trivial: $H_\xi=\id\oplus \mathbf 0$.

Instead, from Eq.~\eqref{eq:K-matrix}, we find the relevant matrix:
\begin{equation}\label{app:eq:K-nonmic}
K_{\xi}
=
\frac{1}{2c}
\begin{pmatrix}
-1 & 2c-1 & 1 & 1 \\[0.3em]
2c-1 & -1 & 1 & 1 \\[0.3em]
1 & 1 &
-\dfrac{1}{1-c} &
\dfrac{2c-1}{1-c} \\[0.7em]
1 & 1 &
\dfrac{2c-1}{1-c} &
-\dfrac{1}{1-c}
\end{pmatrix}.
\end{equation}

Fixing an outcome $j$, let $\mathbf p(j)=(p_{00},p_{01},p_{10},p_{11})^T$, and set $h:=\frac{1}{2}(p_{00}+p_{01})$. A direct calculation gives
\begin{align*}
\mathbf p(j)^T K_{\xi}\,\mathbf p(j)
&=
-\frac{1}{c}\Bigl(
(p_{10}-h)^2+(p_{11}-h)^2 \nonumber\\
&
-(2c-1)(p_{10}-h)(p_{11}-h)-c\,p_{00}p_{01}
\Bigr),
\end{align*}
and the compatibility condition $\mathbf p(j)^TK_\xi\mathbf p(j)\ge0$ gives Eq.~\eqref{eq:non-mic-criterion}.

We now derive Eq.~\eqref{eq:non-mic-ellipsoid} for the covariant seed distribution $\mathbf r$. Writing $\mathbf r$ in the Hadamard basis \eqref{eq:hadamard-basis}, one has
\[
\mathbf r
=
\frac12\mathbf v_{00}
+\frac{\hat r_{01}}{2}\mathbf v_{01}
+\frac{\hat r_{10}}{2}\mathbf v_{10}
+\frac{\hat r_{11}}{2}\mathbf v_{11},
\]
where $\hat r_{nm}$ are the Fourier coefficients of $\mathbf r$. We stress here that these vectors are now \emph{not} eigenvectors of $H_\xi$; they are the eigenvectors in Eq. \eqref{eq:hadamard-basis} of $H_\xi$ for the HW-extremals. 

Using Eq.~\eqref{app:eq:K-nonmic}, one finds
\[
\begin{aligned}
K_{\xi}\mathbf v_{00}
&=\frac12\mathbf v_{00}+\frac12\mathbf v_{10},\\
K_{\xi}\mathbf v_{01}
&=-\frac{2-c}{2(1-c)}\mathbf v_{01}
+\frac{c}{2(1-c)}\mathbf v_{11},\\
K_{\xi}\mathbf v_{10}
&=\frac12\mathbf v_{00}-\frac{4-c}{2c}\mathbf v_{10},\\
K_{\xi}\mathbf v_{11}
&=\frac{c}{2(1-c)}\mathbf v_{01}
-\frac{2-c}{2(1-c)}\mathbf v_{11}.
\end{aligned}
\]
Hence
\[
\begin{aligned}
K_{\xi}\,\mathbf r
={}&\frac14(1+\hat r_{10})\mathbf v_{00}
+\frac14\left(1-\frac{4-c}{c}\hat r_{10}\right)\mathbf v_{10}\\
&+\frac{1}{4(1-c)}\bigl(-(2-c)\hat r_{01}+c\,\hat r_{11}\bigr)\mathbf v_{01}\\
&+\frac{1}{4(1-c)}\bigl(c\,\hat r_{01}-(2-c)\hat r_{11}\bigr)\mathbf v_{11}.
\end{aligned}
\]
Taking the inner product with $\mathbf r$ gives
\begin{equation*}
\begin{aligned}
\mathbf r^T K_{\xi}\,\mathbf r
={}&\frac{1}{8}
+\frac{\hat r_{10}}{4}
-\frac{4-c}{8c}\hat r_{10}^2\\
&-\frac{2-c}{8(1-c)}(\hat r_{01}^2+\hat r_{11}^2)
+\frac{c}{4(1-c)}\hat r_{01}\hat r_{11}.
\end{aligned}
\end{equation*}
Thus the compatibility condition is equivalent to
\begin{equation}\label{eq:app-1-r}
\frac{4-c}{8c}\hat r_{10}^2-\frac{\hat r_{10}}{4}
+\frac{2-c}{8(1-c)}(\hat r_{01}^2+\hat r_{11}^2)
-\frac{c}{4(1-c)}\hat r_{01}\hat r_{11}
\le
\frac{1}{8}.
\end{equation}
Now observe that
\[
\begin{aligned}
(\hat r_{01}+\hat r_{11})^2+\frac{1}{1-c}(\hat r_{01}-\hat r_{11})^2
&=
\frac{2-c}{1-c}(\hat r_{01}^2+\hat r_{11}^2) \\
&\quad -\frac{2c}{1-c}\hat r_{01}\hat r_{11}.
\end{aligned}
\]
Hence Eq.~\eqref{eq:app-1-r} can be rewritten as
\begin{equation}\label{eq:app-2-r}
(\hat r_{01}+\hat r_{11})^2+\frac{1}{1-c}(\hat r_{01}-\hat r_{11})^2
+\frac{4-c}{c}\hat r_{10}^2-2\hat r_{10}
\le
1.
\end{equation}
Completing the square in $\hat r_{10}$,
\[
\frac{4-c}{c}\hat r_{10}^2-2\hat r_{10}
=
\frac{4-c}{c}\left(\hat r_{10}-\frac{c}{4-c}\right)^2
-\frac{c}{4-c},
\]
and substituting this into Eq.~\eqref{eq:app-2-r} yields Eq.~\eqref{eq:non-mic-ellipsoid}:
\[
\begin{aligned}
(\hat r_{01}+\hat r_{11})^2+\frac{1}{1-c}(\hat r_{01}-\hat r_{11})^2
&+\frac{4-c}{c}\left(\hat r_{10}-\frac{c}{4-c}\right)^2 \\
&\le \frac{4}{4-c}\,.
\end{aligned}
\]

We can now give the covariant compatibility region as follows.

\begin{corollary}\label{app:cor:nonmic-hw-app}
Let $\xi\in\mathcal E_{\geq}(2)$ be the non-MIC extremal in Eq.~\eqref{eq:non-mic-xi}. Then
\begin{equation*}
\mathscr C_\xi[\mathscr I^{\rm sc}_{\!4}]
=
\left\{
\P_{\mathbf r}\in\mathscr I^{\rm sc}_{\!4}
\;\middle|\;
\hat{\mathbf r}\in\mathscr E^+_\xi\cap\mathscr E^-_\xi
\right\}\,,
\end{equation*}
where $\hat{\mathbf r}:=(\hat r_{01},\hat r_{10},\hat r_{11})$ is the nontrivial Fourier-coordinate vector associated with $\mathbf r\in\Delta_4$, and
$\mathscr E^\pm_\xi$ are the ellipsoids
\begin{equation}\label{app:eq:Epm-nonmic}
\mathscr E^\pm_\xi :=
\left\{
\hat{\mathbf r}\in\mathbb R^3 \;\middle|\;
\mathscr Q^\pm_\xi(\hat{\mathbf r})\le \frac{4c}{4-c}
\right\},
\end{equation}
with
\[
\begin{aligned}
\mathscr Q^\pm_\xi(\hat{\mathbf r})
:={}&\,
c(\hat r_{01}\pm \hat r_{11})^2
+\frac{c}{1-c}(\hat r_{01}\mp \hat r_{11})^2\\
&+(4-c)\left(\hat r_{10}\mp\frac{c}{4-c}\right)^2 .
\end{aligned}
\]
\end{corollary}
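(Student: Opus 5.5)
The argument mirrors the Heisenberg--Weyl and semi-SIC corollaries: reduce to the per-outcome ellipsoid already derived for the seed effect, then use the translation behaviour of the Fourier coordinates. By Proposition~\ref{prop:qubit-K-from-G}, $\P_{\mathbf r}\in\mathscr C_\xi[\mathscr I^{\rm sc}_{\!4}]$ if and only if $\mathbf p(n,m)^TK_\xi\mathbf p(n,m)\ge0$ for all four outcomes $(n,m)\in\Omega$. Here $p_{kl}(n,m)=r_{n\ominus k,\,m\ominus l}$ by Eq.~\eqref{app:eq:p-from-r-app}.

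\textbf{Step 1: the translated effects are seed-type vectors.} As in the proof of Corollary~\ref{app:cor:HW-covariant-app}, I would compute the Fourier coefficients of the coefficient vector of outcome $(n,m)$:
\[
\hat p_{ab}(n,m)=(-1)^{na+mb}\hat r_{ab},\qquad \hat p_{00}(n,m)=1 .
\]
Thus $\mathbf p(n,m)$ is itself a probability vector. Its nontrivial Fourier coordinates are $\bigl((-1)^m\hat r_{01},(-1)^n\hat r_{10},(-1)^{n+m}\hat r_{11}\bigr)$.

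\textbf{Step 2: apply the seed computation to each outcome.} The derivation preceding the corollary shows that, for any probability vector $\mathbf r$ with Fourier vector $\hat{\mathbf r}$, $\mathbf r^TK_\xi\mathbf r\ge0$ is equivalent to Eq.~\eqref{eq:non-mic-ellipsoid}. That derivation used only the expansion of $\mathbf r$ in the Hadamard basis with $\hat r_{00}=1$. After multiplying both sides by $c>0$, the condition reads
\[
c(\hat r_{01}+\hat r_{11})^2+\tfrac{c}{1-c}(\hat r_{01}-\hat r_{11})^2+(4-c)\bigl(\hat r_{10}-\tfrac{c}{4-c}\bigr)^2\le\tfrac{4c}{4-c}.
\]
I would check that this normalisation agrees with Eq.~\eqref{app:eq:Epm-nonmic}, since the unscaled display in the appendix differs only by this factor $c$.

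Substituting the signed coordinates from Step 1 into this inequality gives the condition for outcome $(n,m)$.

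\textbf{Step 3: group the outcomes.} The first two squared terms are invariant under an overall sign change of $(\hat r_{01},\hat r_{11})$, which is what the factor $(-1)^m$ does. Hence the condition does not depend on $m$. For $n=0$ it is exactly $\mathscr Q^+_\xi(\hat{\mathbf r})\le 4c/(4-c)$. For $n=1$, the substitution $\hat r_{11}\mapsto-\hat r_{11}$ interchanges $\hat r_{01}\pm\hat r_{11}$ with $\hat r_{01}\mp\hat r_{11}$, and $\hat r_{10}\mapsto-\hat r_{10}$ gives
\[
\bigl(-\hat r_{10}-\tfrac{c}{4-c}\bigr)^2=\bigl(\hat r_{10}+\tfrac{c}{4-c}\bigr)^2 .
\]
Together these yield $\mathscr Q^-_\xi$. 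Requiring the condition for all four outcomes is therefore the same as requiring $\hat{\mathbf r}\in\mathscr E^+_\xi\cap\mathscr E^-_\xi$, which is the claim.

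The only real obstacle is bookkeeping: confirming that the seed derivation applies verbatim to each translated effect, and that the constant on the right-hand side is correctly scaled by $c$.
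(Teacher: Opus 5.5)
Your proposal is correct and is essentially the paper's own argument. Translating the seed sends $\hat{\mathbf r}$ to $\bigl((-1)^m\hat r_{01},(-1)^n\hat r_{10},(-1)^{n+m}\hat r_{11}\bigr)$, the seed ellipsoid \eqref{eq:non-mic-ellipsoid} applied to each translated effect depends only on $n$, and the classes $n=0,1$ give $\mathscr Q^+_\xi$ and $\mathscr Q^-_\xi$; your explicit $n=1$ substitution and factor-$c$ normalisation check simply spell out what the paper leaves implicit.

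You are also right to rest Step~2 on the Hadamard-basis evaluation of $\mathbf r^TK_\xi\mathbf r$ rather than on the intermediate display \eqref{eq:non-mic-criterion}. As printed, that display is not equivalent to $\mathbf p^TK_\xi\mathbf p\ge0$: computing $\det\F^\P_\xi(j)$ directly gives cross-term coefficient $2(2c-1)$ and right-hand side $4c(1-c)\,p_{00}p_{01}$. The ellipsoid \eqref{eq:non-mic-ellipsoid} itself checks out.
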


\begin{proof}
Under translation by $(n,m)$, the Fourier coefficients of the seed distribution transform as
\[
(\hat r_{01},\hat r_{10},\hat r_{11})\mapsto
\bigl((-1)^m \hat r_{01},\;(-1)^n \hat r_{10},\;(-1)^{n+m}\hat r_{11}\bigr).
\]
Applying Eq.~\eqref{eq:non-mic-ellipsoid} to the four translated effects gives two inequivalent conditions, corresponding to the outcome classes $(0,0)\sim(0,1)$ and $(1,0)\sim(1,1)$. These two conditions are
\[
\mathcal Q^+_\xi(\hat{\mathbf r})\le \frac{4c}{4-c},
\qquad
\mathcal Q^-_\xi(\hat{\mathbf r})\le \frac{4c}{4-c}.
\]
Hence $\P_{\mathbf r}\in\mathscr C_\xi[\mathscr I^{\rm sc}_{\!4}]$ if and only if
$\hat{\mathbf r}\in \mathscr E^+_\xi\cap \mathscr E^-_\xi$.
\end{proof}

\end{document}